\colorlet{LightViolet}{violet!40}
\colorlet{LightRed}{red!40}
\colorlet{LightOrange}{orange!40}
\colorlet{LightGreen}{green!40}
\colorlet{LightBlue}{blue!40}
\colorlet{DarkGreen}{green!50!black}
\colorlet{DarkRed}{red!70!black}
\colorlet{DarkCyan}{red!70!black}
\colorlet{DarkBlue}{blue!80!black}
{\definecolor{DarkOrange}{rgb}{1.0, 0.49, 0.0}
\definecolor{Airforceblue}{rgb}{0.36, 0.54, 0.66}


\newcommand{\nop}[1]{}

\newcommand{\dc}{\text{\sf DC}}

\newcommand{\incomp}{\perp}
\newcommand{\flow}{{\sf excess}}

\newcommand{\Dsimple}{D_{\bm \delta}^{\sf simple}}
\newcommand{\Dflow}{D_{\bm \delta}^{\sf flow}}




\newcommand{\flowbound}{\text{\sf flow-bound}}


\newcommand{\calL}{\mathcal L}

\newcommand{\calP}{\mathcal P}

\newcommand{\calF}{\mathcal F}

\newcommand{\R}{\mathbb R} 
\newcommand{\Mod}{\text{\sf M}}
\newcommand{\Nor}{\text{\sf N}}

\newcommand{\DC}{\mathrm{DC}}




\newcommand{\be}{\begin{enumerate}}
\newcommand{\ee}{\end{enumerate}}
\newcommand{\bi}{\begin{itemize}}
\newcommand{\ei}{\end{itemize}}
\newcommand{\beq}{\begin{equation}}
\newcommand{\eeq}{\end{equation}}

\newcommand{\bp}{\begin{proof}}
\newcommand{\ep}{\end{proof}}
\newcommand{\bcor}{\begin{cor}}
\newcommand{\ecor}{\end{cor}}
\newcommand{\bthm}{\begin{thm}}
\newcommand{\ethm}{\end{thm}}
\newcommand{\blmm}{\begin{lmm}}
\newcommand{\elmm}{\end{lmm}}
\newcommand{\bdefn}{\begin{defn}}
\newcommand{\edefn}{\end{defn}}
\newcommand{\bprop}{\begin{prop}}
\newcommand{\eprop}{\end{prop}}
\newcommand{\bconj}{\begin{conj}}
\newcommand{\econj}{\end{conj}}
\newcommand{\bopm}{\begin{opm}}
\newcommand{\eopm}{\end{opm}}
\newcommand{\brmk}{\begin{rmk}}
\newcommand{\ermk}{\end{rmk}}

\newcommand{\suchthat}{\ | \ }

\theoremstyle{plain}                   
\newtheorem{thm}{Theorem}[section]
\newtheorem{lmm}[thm]{Lemma}
\newtheorem{prop}[thm]{Proposition}
\newtheorem{cor}[thm]{Corollary}

\theoremstyle{definition}              

\newtheorem{opm}{Open Problem}
\newtheorem{conj}{Conjecture}
\newtheorem{ex}{Example}

\newtheorem{defn}{Definition}

\newtheorem{rmk}{Remark}
\newtheorem{claim}{Claim}


\definecolor{Red}{RGB}{255,204,204}
\definecolor{Green}{RGB}{204,255,204}
\definecolor{Blue}{RGB}{204,204,255}


\usepackage{xpatch}
\usepackage{textcase}
\makeatletter
\xpatchcmd{\@sect}{\uppercase}{\MakeTextUppercase}{}{}
\xpatchcmd{\@sect}{\uppercase}{\MakeTextUppercase}{}{}
\makeatother


\allowdisplaybreaks[1]



\usepackage{booktabs} 

\usepackage{fullpage}

\author{Sungjin Im}
\address{University of California, Merced, CA 95343}

\author{Benjamin Moseley}
\address{Tepper School of Business, Carnegie Mellon University, Pittsburgh, PA 15213}

\author{Hung Q. Ngo}
\address{RelationalAI, Inc., Berkeley, CA 94704}

\author{Kirk Pruhs}
\address{Computer Science Department, University of Pittsburgh, Pittsburgh, PA 15260}

\begin{document}

\title{Efficient Algorithms for Cardinality Estimation and Conjunctive Query Evaluation
With Simple Degree Constraints}
\renewcommand{\shorttitle}{Efficient Cardinality Estimation with Simple Degree Constraints}


\begin{abstract}
Cardinality estimation and conjunctive query evaluation are two of the most fundamental problems in database query processing. Recent work proposed, studied, and implemented a robust and practical information-theoretic cardinality estimation framework. In this
framework, the estimator is the cardinality upper bound of a conjunctive query subject to ``degree-constraints'', which model a rich set of input data statistics. For general degree constraints, computing this bound is computationally hard. Researchers have naturally sought efficiently computable
relaxed upper bounds that are as tight as possible. The
polymatroid bound is the tightest among those relaxed upper bounds. While it is an open question whether the polymatroid bound can be computed in polynomial-time in general, it is known to be computable in polynomial-time for some classes of degree constraints.

Our focus is on a common class of degree constraints called simple degree constraints. Researchers had not previously determined how to compute the polymatroid bound in polynomial time for this class of constraints.
Our first main result is a polynomial time algorithm to compute the polymatroid bound given simple degree constraints. Our second main result is a polynomial-time algorithm to compute a ``proof sequence'' establishing this bound. This proof sequence can then be incorporated in the PANDA-framework to give a faster algorithm to evaluate a conjunctive query. In addition,
we show computational limitations to extending our results to broader classes of degree constraints. Finally, our technique leads naturally to a new relaxed upper bound called the {\em flow bound}, which is computationally tractable.
\end{abstract}

\keywords{Cardinality estimation, conjunctive query evaluation, polymatroid bound, proof sequence, polynomial time algorithms}

\maketitle

\section{Introduction}
\label{sec:intro}

\subsection{Motivations}
\label{subsec:motivations}

Estimating a tight upper bound on the output size of a (full) conjunctive query is an
important problem in query optimization and evaluation, for many reasons. First, the bound
is used to estimate whether the computation can fit within a given memory budget. Second,
cardinalities of intermediate relations are the main parameters used to estimate the cost of
query plans~\cite{lohman}, and the intermediate size bound is used as a {\em cardinality
estimator}~\cite{DBLP:conf/icdt/000122, DBLP:journals/pacmmod/DeedsSB23,
DBLP:conf/icdt/DeedsSBC23, DBLP:journals/pacmmod/KhamisNOS24,
DBLP:journals/sigmod/KhamisDOS24}. These ``pessimistic'' estimators are designed to be
robust~\cite{DBLP:journals/corr/abs-2502-05912}, removing the assumptions that lead to the
well-documented impediment of selectivity-based
estimators~\cite{DBLP:journals/pvldb/LeisGMBK015}, where the estimates often under
approximate the real intermediate sizes by orders of magnitudes. Third, the bound is used as
a yardstick to measure the quality of a join algorithm~\cite{DBLP:conf/pods/000118}.

The history of bounding the output size of a conjunctive query (CQ) is
rich~\cite{MR1338683,MR0031538,MR859293,MR599482,MR1639767,MR2104047,suciu2023applications,
GLVV,panda,csma,DBLP:conf/pods/000118,DBLP:journals/talg/GroheM14,AGM,DBLP:journals/corr/abs-2402-02001}.
In the simplest setting where the query is a join of base tables with known cardinalities,
the {\em AGM-bound}~\cite{AGM} is tight w.r.t. data complexity. up to a $O(2^n)$ factor,
where $n$ is the number of variables in the query. Join algorithms running in asymptotically
the same amount of time are ``worst-case optimal''~\cite{NPRR,LFTJ,skew}. In practice,
however, we know a lot more information about the inputs beyond cardinalities. Query
optimizers typically make use of distinct value counts, heavy hitters, primary and and
foreign keys information~\cite{looking-glass,DBLP:conf/icdt/000122}, in addition to function
predicates. Every additional piece of information is a constraint that may drastically
reduce the size of the query output. For example, in a triangle query $R(a,b)\wedge
S(b,c)\wedge T(a,c)$, if we knew $a$ was a primary key in relation $R$, then the output size
is bounded by $\min\{|T|, |R|\cdot |S|\}$, which can be asymptotically much less than the
AGM-bound of $\sqrt{|R|\cdot |S| \cdot |T|}$.

To model these common classes of input statistics and key constraints, Abo Khamis et
al.~\cite{csma,panda} introduced an abstract class of constraints, called degree
constraints.\footnote{Degree constraints are also generalized to frequency moments collected
on input tables~\cite{DBLP:journals/corr/abs-2402-02001,DBLP:conf/icdt/000122}.} A degree
constraint is a triple $(X,Y,c)$, where $X \subseteq Y$ are sets of variables, and $2^c$ is
positive integer. The constraint holds for a predicate or relation $R$ if $\max_{\bm x}
|\pi_Y \sigma_{X=\bm x} R| \leq 2^c$, where $\pi$ and $\sigma$ are the projection and
selection operators, respectively. The intuitive meaning of the triple $(X, Y, c)$ is that
the number of possible bindings for variables in $Y$ can attain, given a particular binding
of the variables in $X$, is at most $2^c$. If $X=\emptyset$ then this is a {\em cardinality
constraint}, which says $|\pi_YR|\leq 2^c$; if $c=0$ (i.e. $2^c=1$) then this is a {\em
functional dependency} from $X$ to $Y$.

Given a set $\dc$ of degree constraints, and a full CQ $Q$; we write $\bm I \models \dc$ to
denote  that a database instance $\bm I$ satisfies the constraints $\dc$. A robust
cardinality estimator is $\sup_{\bm I \models \dc} |Q(\bm I)|$, the upper bound on the
number of output tuples of $Q$ over all databases satisfying the
constraints~\cite{DBLP:conf/icdt/000122,
DBLP:conf/sigmod/CaiBS19,
DBLP:journals/sigmod/KhamisDOS24}.
We will refer to this bound as the {\em combinatorial bound}. For arbitrary degree
constraints, it is not clear how to even compute it. Even in the simplest case when all
degree constraints are cardinality constraints, approximating the combinatorial bound better
than the $2^n$-ratio is already hard~\cite{AGM}, where $n$ is the number of variables in
$Q$.

Consequently, researchers  have naturally sought  relaxed upper bounds to the combinatorial
bound that are (efficiently) computable, and that are as tight as
possible~\cite{
AGM,
GLVV,
panda,
DBLP:conf/pods/000118,
DBLP:conf/icdt/GogaczT17,
DBLP:conf/icdt/000122,
DBLP:journals/corr/abs-2502-05912}.
Most of these bounds are characterized by two collections:
the collection $\calF$ of set-functions $h : 2^{[n]} \to \R_+$ we are optimizing over, and
the collection $\dc$ of (degree-) constraints the functions have to satisfy.
Here, $n$ is the number of variables in the query.
Abusing notations, we also use $\dc$ to denote the constraints over set-functions:
$h(Y)-h(X)\leq c$, for every $(X,Y,c) \in \dc$.
Given $\dc$ and $\calF$, define
\begin{align}
    \dc[\calF] := \sup\{ h([n]) \ | \ h \in \calF \cap \dc \}.
    \label{eqn:dcF}
\end{align}
Different classes of $\calF$ and $\dc$ give rise to classes of
bounds with varying degrees of tightness (how close to the combinatorial bound) and
computational complexities~\cite{panda}.
This paper focuses on the {\em polymatroid
bound}~\cite{panda,DBLP:journals/corr/abs-2402-02001}, obtained by setting
$\calF = \Gamma_n$, the collection of all $n$-dimensional polymatroid functions.
The upper bound $\log_2 \sup_{\bm I\models \dc} |Q(\bm I)| \leq \dc[\Gamma_n]$
is the \emph{tightest} known upper bound
that is plausibly computable in polynomial-time~\cite{DBLP:conf/icdt/000122}.
(See Section~\ref{sec:background} for more.)

More precisely, the {\em polymatroid bound} $\dc[\Gamma_n]$
for a collection $\dc = \{(X_i, Y_i, c_i) \mid i \in [k] \}$ of $k$ degree constraints over
variables $X_i \subseteq Y_i \subseteq[n]$, can be computed by solving:
\begin{equation}
\begin{array}{rrlll}
\displaystyle \max & \multicolumn{3}{l}{h([n]) } \\
\textrm{s.t.} &  h(Y_i) - h(X_i) &\le&  c_i  &\forall i \in [k] \\
&\displaystyle h  &\in & \Gamma_n
\end{array}
\label{eqn:polymatroid-bound}
\end{equation}
The optimization problem~\eqref{eqn:polymatroid-bound} is a linear program (LP, see
Sec.~\ref{sec:background}). with an {\em exponential} number of variables and constraints.
In the query processing pipeline, for every query the optimizer has to issue {\em many}
cardinality estimates when searching for an optimal query plan. Hence, solving the
LP~\eqref{eqn:polymatroid-bound} in polynomial time is crucial.

While it is an {\em open} question whether the polymatroid bound can be computed in
poly-time for general degree constraints, it is known to be
computable in poly-time if the input degree constraints fall into one of the following
cases:
(a) all degree constraints are cardinality
  constraints~\cite{AGM,DBLP:journals/corr/abs-2402-02001}; in this case the polymatroid
  bound is the same as the AGM bound;
(b) the {\em constraint dependency graph} is {\em acyclic}~\cite{DBLP:conf/pods/000118};
  this is a generalization of the all cardinality constraints case;\footnote{
  The constraint dependency graph is the graph whose vertices are the variables $[n]$ and
  there is a directed edge $(u, v)$ if and only if there exists a degree constraint
  $(X_i, Y_i, c_i)$ where $u \in X_i$ and $v \in Y_i - X_i$.}
or (c) the constraints include cardinality constraints and {\em simple} functional
  dependencies (FD)~\cite{GLVV};
  Simple FDs are degree constraints of the form $(X, Y, c)$ with $c=0$ and $|X|=1$

Our focus in this paper will be on another class of commonly occurring types of degree
constraints called {\em simple} degree constraints. A degree constraint $(X, Y, c)$ is {\em
simple} if $|X| \leq 1$. This class strictly generalizes cardinality constraints and simple
functional dependencies.
Simple degree constraints also occur in the context of query
containment under bag semantics, where they are the key ingredient for the rare
special case when the problem is known to be decidable~\cite{DBLP:conf/pods/KhamisK0S20}. In
particular, Lemma 3.13 from~\cite{DBLP:conf/pods/KhamisK0S20} showed that, for simple degree
constraints, there is always an optimal polymatroid function of a special type
called a {\em normal} function. The proof was a rather involved constructive
argument that showed how to iteratively modify any feasible polymatroid function to obtain a
feasible normal function without decreasing the objective value. While this observation
offered some structural insight, it certainly did not resolve the issue of whether the
polymatroid bound could be efficiently computed when the degree constraints are simple.

The polymatroid bound~\eqref{eqn:polymatroid-bound} is deeply connected to
conjunctive query evaluation thanks to the PANDA
framework~\cite{panda,DBLP:journals/corr/abs-2402-02001}.
From an optimal solution to the dual $D$ of the LP~\eqref{eqn:polymatroid-bound},
~\cite{panda} showed how to construct a specific
linear inequality that is valid for all polymatroids called a {\em Shannon-flow
inequality}.
The validity of the Shannon-flow inequality can
be proved by constructing a specific sequence of elemental Shannon inequalities.
This sequence is called a
{\em proof-sequence} for the Shannon-flow inequality. From a proof sequence of length
$\ell$, the PANDA algorithm can answer a conjunctive query in time $O(|\bm I|)+ (\log |\bm
I|)^{O(\ell)} U$, where $U = 2^{\dc[\Gamma_n]}$ is the upper bound on the output size given
by the polymatroid bound,
and $\bm I$ is the input database instance.

While PANDA is a {\em very general} framework
for algorithmically constructing completely non-trivial query plans that optimize for the
total number of tuples scanned, the major drawback in the runtime expression is
that as the length of the proof sequence appears in the exponent. Previous
results~\cite{panda} can only establish $\ell$ to be {\em exponential} in $n$. A shorter
proof sequence would drastically improve the running time of PANDA.

The problems outlined above set the context for our questions and contributions. First,
under simple degree constraints is it possible to efficiently compute the polymatroid bound?
Second, can a proof sequence of polynomially bounded length be constructed? And third, what
is a good relaxation of the polymatroid bound that is tractable?

\subsection{Our Contributions}
\label{subsec:contributions}

Our first main contribution, stated
in Theorem \ref{thm:main1},  is to  show that the polymatroid bound can be modeled by a
{\em polynomial-sized} LP for simple degree constraints, and thus is computable in
poly-time when the degree constraints are simple.

\begin{thm}
Let $\dc$ be a collection of $k$ simple degree constraints over $n$ variables. The
polymatroid bound $\dc[\Gamma_n]$ can be modeled by a  LP where the
number of variables  is $O(kn^2)$ and the number of constraints is $O(kn)$. Thus the
polymatroid bound is  computable in  time polynomial in $n$ and $k$.
\label{thm:main1}
\end{thm}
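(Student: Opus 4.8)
The plan is to replace the exponential-sized linear program~\eqref{eqn:polymatroid-bound} --- whose decision variable $h$ ranges over all set functions $h\colon 2^{[n]}\to\R_+$ and which has one constraint per elemental Shannon inequality --- by an equivalent program whose decision variables are the coefficients of $h$ in a \emph{polynomially large} family $\calB$ of elementary polymatroids. The starting point is the structural phenomenon flagged above: for simple degree constraints, Lemma~3.13 of~\cite{DBLP:conf/pods/KhamisK0S20} guarantees that the supremum defining $\dc[\Gamma_n]$ is attained by a polymatroid of a restricted ``normal'' shape. I would first make this shape fully explicit, identifying $\calB$ as a family of polynomially many simple polymatroids built only from the sets $X_i\subseteq Y_i$ occurring in $\dc$ --- concretely, (scaled) rank functions of truncated uniform matroids supported on these sets together with modular pieces --- and showing that every \emph{normal} polymatroid can be written as a non-negative combination $h=\sum_{b\in\calB}\lambda_b\, b$ with $\lambda\ge 0$.

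Given such a family, the reformulation is mechanical. Because each $b\in\calB$ already lies in $\Gamma_n$ and $\Gamma_n$ is a convex cone, \emph{every} non-negative combination $\sum_b\lambda_b b$ automatically lies in $\Gamma_n$; this is exactly what lets us discard the exponentially many monotonicity and submodularity constraints. What remains is to express, as linear functions of $\lambda$, the objective $h([n])=\sum_b\lambda_b\, b([n])$ and the $k$ degree constraints $h(Y_i)-h(X_i)=\sum_b\lambda_b\bigl(b(Y_i)-b(X_i)\bigr)\le c_i$; since each $b$ has an explicit closed form on every set, these are linear with polynomially many terms, and with suitable auxiliary variables the reformulation has $O(kn^2)$ variables and $O(kn)$ constraints. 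It then remains to verify both inequalities between this LP's optimum and $\dc[\Gamma_n]$: \emph{soundness} ($\le$) is immediate, since any feasible $\lambda$ yields an $h$ feasible for~\eqref{eqn:polymatroid-bound} with the same objective value; \emph{completeness} ($\ge$) is precisely the content of the normal-form result, since an optimal $h$ may be taken normal and hence equals $\sum_b\lambda_b b$ for some feasible $\lambda$. Equality of the two optima, together with polynomial-time linear programming, yields the theorem.

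The main obstacle is the first step: pinning down $\calB$ so that it is simultaneously (i) of polynomial size, (ii) rich enough that the iterative normalization argument of~\cite{DBLP:conf/pods/KhamisK0S20} --- which pushes an arbitrary feasible polymatroid towards normal form without decreasing $h([n])$ --- terminates inside the cone spanned by $\calB$, and (iii) such that the maps $\lambda\mapsto b(Y_i)-b(X_i)$ remain linear of polynomial size. A secondary subtlety arises if the most convenient parametrization occasionally produces, for some $\lambda\ge 0$, a function that is only a ``pre-polymatroid''; one must then reintroduce a polynomial set of consistency constraints and re-check completeness, and getting this interface right --- rather than any single computation --- is where the care lies. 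Everything downstream (writing out the degree constraints, bounding the LP size, invoking LP solvability) is routine.
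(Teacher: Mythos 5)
Your proposal takes a genuinely different route from the paper, but it contains a gap that is not a ``secondary subtlety'': it is the whole theorem.

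You propose a \emph{primal-side} reformulation: pick a polynomial family $\calB$ of elementary polymatroids, optimize over non-negative coefficients $\lambda_b$, and argue soundness/completeness against the normal-form result of Abo Khamis et al. The trouble is with the pivotal claim that ``every normal polymatroid can be written as a non-negative combination $h=\sum_{b\in\calB}\lambda_b\, b$'' for a $\calB$ of polynomial size. That claim is false. The cone of normal functions $\Nor_n$ is, by definition, conically generated by the step functions $s_W$ for $W\subsetneq[n]$, and these $2^n-1$ generators are the extreme rays of that cone, so no polynomial subfamily can generate it. Restricting attention to step functions ``built from the sets $X_i,Y_i$ occurring in $\dc$'' does not help, because $V$ ranges over all subsets and the optimal $\lambda_V$ support need not be confined to any polynomial collection that is apparent a priori. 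What you would actually need is the much sharper statement that \emph{some} optimal $h$ lives in the span of a polynomially identifiable $\calB$ --- and you acknowledge this is ``the main obstacle'' and leave it unresolved. In other words, steps (i) and (ii) of your plan are precisely what the theorem asserts, not a setup you can take for granted before the ``mechanical'' reformulation.

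The paper never constructs such a $\calB$. It works on the \emph{dual} side instead: it takes the dual $D$ of the polymatroid LP (still exponential), observes that the objective touches only the $k$ variables $\delta_i$ attached to degree constraints, and projects the feasible region of $D$ onto $\delta$-space. The projection $\Dsimple$ is a hypergraph min-cut LP with only $k$ variables but exponentially many cut constraints, and the key structural move is to show that, for simple degree constraints, these cuts can be captured by $n$ ordinary single-source/single-sink min-cost flow problems in a graph on $O(k+n)$ nodes, yielding the polynomial-size LP $\Dflow$ (Lemmas~\ref{lem:simplecut}--\ref{lem:simpleliftA2}). The dual of $\Dsimple$ is exactly the normal-function LP $P_{\bm\delta}^{\sf simple}$ with exponentially many $\lambda_V$ variables --- which is why the paper does \emph{not} attempt a polynomial primal parametrization of $\Nor_n$, and instead recovers the normal-function characterization (Proposition in \S\ref{subsec:simple:normal}) only as a by-product. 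If you want to salvage your approach, the honest route is to replace the false ``every normal polymatroid'' claim by a claim about the dual of $\Dflow$ (an extended formulation), but at that point you would be doing the paper's argument in reverse, and the hard work --- the cut-to-flow reformulation, equivalently identifying the right polynomial extended formulation --- is exactly what you have not supplied.
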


We prove Theorem \ref{thm:main1} in Section \ref{sec:polymatroidbound}, but let us give a
brief overview here. The proof outline is shown schematically in
Figure~\ref{fig:simple:dpd}. The LP formulation $P$ of the polymatroid bound $\dc[\Gamma_n]$
in~\eqref{eqn:polymatroid-bound} has exponentially many variables and exponentially many
constraints, and so does $P$'s dual $D$. The starting point is the observation that there
are only polynomially many constraints in $P$ where the constant right-hand side is non-zero
(namely the constraints that correspond to the degree constraints), and thus the objective
of $D$ only has polynomial size. By projecting the feasible region for the LP $D$ down onto
the region of the variables in the objective, we show how to obtain an LP formulation
$\Dsimple$ of the polymatroid bound that has polynomially many variables, but still
exponentially many constraints. The most natural interpretation of $\Dsimple$ is as a sort
of min-cost {\em hypergraph} cut problem. We  show that this hypergraph cut problem can be
reformulated as $n$  min-cost network flow problems in a standard graph.  This results in a
natural polynomially-sized LP formulation $\Dflow$.
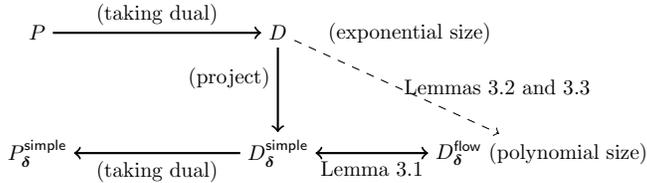
\begin{figure}[!ht]
\centering \begin{tikzpicture}[domain=0:20,
every node/.style={font=\large}, scale=0.8, every node/.style={scale=0.8}]

\node[] at (0, 2) (P) {$P$};
\node[] at (4, 2) (D) {$D$};
\node[right=1em of D] {(exponential size)};
\node[] at (4, 0) (Dd) {$\Dsimple$};
\node[] at (0, 0) (Pd) {$P_{\bm \delta}^{\sf simple}$};
\draw[thick,->] (P) -- (D) node [midway,above] {(taking dual)};
\draw[thick,->] (D) -- (Dd) node [midway,left,yshift=0.2cm] (project) {(project)};
\draw[thick,->] (Dd) -- (Pd) node [midway,below] {(taking dual)};
\node[right=1.5cm of Dd] (Dflow) {$\Dflow$ (polynomial size)};
\draw[<->, thick] (Dd) -- (Dflow) node [midway,below] {Lemma~\ref{lem:simplecut}};
\draw[dashed,->] (D) -- (Dflow) node [midway,right] (project2) {Lemmas~\ref{lem:simpleliftA1} and~\ref{lem:simpleliftA2}};

\end{tikzpicture}
\caption{Outline of the proof of Theorem \ref{thm:main1}.}
\label{fig:simple:dpd}
\end{figure}

As an ancillary result, we observe in Section \ref{sec:polymatroidbound} that the dual of
$\Dsimple$ is a natural LP formulation of the polymatroid bound when the
polymatroidal functions are restricted to normal functions. This gives an alternate proof,
that  uses LP duality, of the fact that for simple degree constraints there
is an optimal polymatroid function that is normal. We believe that this dual-project-dual
proof is simpler and more informative than the inductive proof
in~\cite{DBLP:conf/pods/KhamisK0S20}.

Our second main contribution, stated in Theorem \ref{thm:main2}, is to show that there is a
polynomial length proof sequence for simple instances. Beyond improving PANDA's runtime,
this theorem is a fundamental result about the geometry of submodular functions.

\begin{thm}
  Let $\dc$ be a collection of $k$ simple degree constraints over $n$ variables. There is a
  polynomial-time algorithm that computes  a   proof sequence for the  polymatroid bound of
  length $O(  k^2 n^2 + kn^3)$.
  \label{thm:main2}
\end{thm}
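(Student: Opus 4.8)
The plan is to read the proof sequence directly off the polynomial-size network-flow formulation $\Dflow$ constructed in the proof of Theorem~\ref{thm:main1}. Recall from that proof (and from Lemma~\ref{lem:simplecut} together with Lemmas~\ref{lem:simpleliftA1} and~\ref{lem:simpleliftA2}) that an optimal solution of $\Dflow$ consists of $n$ min-cost flows, one per target variable $v \in [n]$, in networks of size polynomial in $n$ and $k$, that the total cost of these flows equals $\dc[\Gamma_n]$, and that the flows certify the Shannon-flow inequality $\sum_i \delta_i\,(h(Y_i) - h(X_i)) \ge h([n])$ associated with the dual weights $\delta$. The arcs of these networks are designed to be \emph{elemental Shannon moves} on conditional polymatroid terms $h(Y \mid X) = h(Y) - h(X)$: a ``composition'' arc realizes a one-element instance of the chain rule $h(B \mid A) + h(C \mid B) = h(C \mid A)$, while the other arcs realize a single elemental submodularity or monotonicity inequality; the source of flow $v$ injects the degree-constraint terms $h(Y_i \mid X_i)$ with multiplicity $\delta_i$, and its sink collects the contribution of $v$ to $h([n])$. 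Hence the algorithm is: (i) solve the polynomial-size LP $\Dflow$; (ii) decompose each of the $n$ optimal flows into paths; (iii) output, in the order described below, the block of elemental Shannon inequalities attached to the arcs of each path.

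The transcription in step (iii) works as follows. Since the flows are optimal, we may delete their circulations without increasing the cost, so each flow $f_v$ decomposes into polynomially many source-to-sink paths whose lengths are polynomially bounded. Walking such a path from its degree-constraint source, one starts with a weighted term $\delta_i\, h(Y_i \mid X_i)$, applies the elemental move attached to the first arc to rewrite it, then the move attached to the second arc, and so on, until the term has been turned into a contribution to $h([n])$; every intermediate term is again a conditional polymatroid term. Because a proof sequence tracks coefficients symbolically, the number of elemental steps contributed by a path equals its arc-length, independent of the amount of flow it carries. A routine accounting of the sizes of the $n$ networks constructed in Section~\ref{sec:polymatroidbound} and of their path decompositions — polynomially many paths, each of polynomially bounded length, in each of the $n$ networks — yields the stated total of $O(k^2 n^2 + k n^3)$ elemental Shannon inequalities.

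The main obstacle is correctness: one must verify that the concatenation of these blocks is a genuine proof sequence, i.e.\ that whenever a step consumes a term $h(Y \mid X)$, that term was either produced by an earlier step or is one of the initial degree-constraint terms, and that after all paths have been processed the terms that remain are exactly those summing to $h([n])$, up to nonnegative slack discarded by monotonicity. For the ordering I would use that each flow network is acyclic after circulation removal; its topological order induces a consistent global order of the steps, so a term appears as the head of an arc only after it has appeared as a tail. For the ``nothing missing, nothing extra'' balance I would invoke the ancillary duality fact established in Section~\ref{sec:polymatroidbound}: $\Dflow$ is dual to the LP maximizing $h([n])$ over \emph{normal} polymatroids subject to $\dc$, and complementary slackness between an optimal flow and an optimal normal polymatroid forces, at every node of every network, the total weight of incoming terms to equal the total weight of outgoing terms — precisely the bookkeeping identity a valid proof sequence must satisfy node by node — with the deficit concentrated at the sinks assembling into $h([n])$.

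Finally, the proof should record where simplicity is essential: it is exactly the hypothesis $|X_i| \le 1$ that lets the hypergraph-cut formulation $\Dsimple$ split into $n$ ordinary network-flow problems of polynomial size (Theorem~\ref{thm:main1} and Lemma~\ref{lem:simplecut}), whereas for general degree constraints this decomposition fails, consistent with the computational limitations established later in the paper. I expect the only genuinely delicate point to be the serialization-and-balance argument above; the remaining ingredients — solving a polynomial-size LP, flow decomposition, and transcribing arcs into elemental inequalities — are routine.
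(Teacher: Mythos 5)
Your high-level plan --- solve $\Dflow$, decompose the $n$ flows into paths, and transcribe arcs into elemental Shannon steps --- matches the paper's strategy, and your accounting $O(k^2n^2+kn^3)$ is right. But the transcription mechanism you describe does not work as stated, and the missing ingredient is exactly the part the paper spends its effort on.

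The gap is the claim that ``the arcs of these networks are designed to be elemental Shannon moves,'' so that one can read a valid proof sequence by walking each source-to-sink path and outputting the move attached to each arc. The vertices of $G$ are small sets (singletons $\{i\}$ and constraint heads $Y_j$), so a raw transcription of the flow to sink $\{t\}$ only manipulates low-dimensional terms like $h(\{a\})$, $h(\{a,b\}\mid\{a\})$; it never produces $h([n])$, and the flows for different sinks $t$ manipulate overlapping, incompatible terms. What actually makes the construction work is that, when processing sink $i{+}1$, every edge $(A,B)$ is \emph{lifted} to $(A\cup[i],\,B\cup[i])$ in a forward pass and then to $(A\cup[i{+}1],\,B\cup[i{+}1])$ in a backward pass, where $[i]$ is the prefix of already-processed sink variables. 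The same arc therefore yields a \emph{different} elemental step at each outer iteration, and the lifting is what stitches the $n$ per-sink flows into a single proof of $h([n]\mid\emptyset)\le\sum_j\delta_j h(Y_j\mid X_j)$. Moreover a forward pass alone would consume the terms $h(Y_j\mid X_j)$ once and leave nothing for later sinks; the backward pass and a cleanup pass are needed to ``recharge'' each consumed term into its $[i{+}1]$-lifted form $h(Y_j\cup[i{+}1]\mid X_j\cup[i{+}1])$ at the exact multiplicity $\delta_j$, using submodularity and monotonicity steps. None of this is implied by a topological order of the DAG or by complementary slackness with an optimal normal polymatroid; the paper instead proves correctness by an explicit inductive invariant (its Lemma~4.1) asserting that the coefficient of $h(Y\mid\emptyset)$ equals the excess at $Y$ and the coefficient of $h(Y\mid X)$ equals $\mu_{X,Y}$, run in lockstep with the extension algorithm that builds a feasible $(\bm\delta,\bm\sigma,\bm\mu)$ for $D$. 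Your appeal to complementary slackness also confuses primal flow-conservation (which holds for any feasible flow, optimal or not) with the nonnegativity-of-coefficients bookkeeping a proof sequence requires; the latter is tracked by the $\bm\mu\ge\bm 0$ invariant, not by optimality of the flow.

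So: right skeleton and right length estimate, but the lifting by $[i]/[i{+}1]$ and the forward/backward/cleanup structure are the load-bearing ideas, and the correctness argument you sketch in their place would not go through.
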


The  proof of Theorem \ref{thm:main2} in Section \ref{sec:ps:simple} constructively shows
how to create a proof sequence from a feasible solution to our LP $\Dflow$,
where the length of the proof sequence is linear in the cardinality of the support of the
feasible solution.  Intuitively, this proof sequence is exponentially shorter than the one
that would be produced using the method in \cite{panda} (even under simple degree constraints)
because it utilizes the special
properties of the types of feasible solutions for the LP $D$ that are derivable
from feasible solutions to our LP $\Dflow$. And because the proof sequence is
exponentially shorter, this exponentially decreases the run time when using the PANDA
algorithm.

We next prove that allowing other types of degree constraints that are ``just beyond
simple'' in some sense results in problems that are as hard as general constraints.
The proof of the following theorem is in Section \ref{sec:lower-bound}.

\begin{thm}
  Even if the set of input degree constraints $\dc$ is restricted to either any one of
  the following cases, computing $\dc[\Gamma_n]$ is still
  as hard as computing the polymatroid bound for general degree constraints:
  \begin{itemize}
    \item[(a)] If $\dc$ is a union of a set of simple degree constraints and a set of
      acyclic degree constraints.
      \item[(b)] If $\dc$ is a union of a set of simple degree constraints and a set of
      FD constraints.
    \item[(c)] If $\dc$ contains only degree constraints $(X, Y, c)$ with $|X|\leq 2$ and
    $|Y|\leq 3$.
  \end{itemize}
  \label{thm:not:much:better}
\end{thm}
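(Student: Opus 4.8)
The plan is to prove Theorem~\ref{thm:not:much:better} by reduction: we show that an algorithm for computing $\dc[\Gamma_n]$ under any one of the three restricted classes (a), (b), (c) can be turned, in polynomial time, into an algorithm for computing the polymatroid bound under \emph{arbitrary} degree constraints. The key idea is a \emph{gadget} that simulates a general degree constraint $(X, Y, c)$ with $|X|$ large, using only the allowed constraint types on an enlarged variable set. The natural gadget introduces, for each general constraint $(X_i, Y_i, c_i)$, a fresh ``bundled'' variable $z_i$ intended to behave like the tuple of variables in $X_i$: intuitively we want $h(\{z_i\}) = h(X_i)$ and $h(X_i \cup \{z_i\}) = h(X_i)$, so that the general constraint $h(Y_i) - h(X_i) \le c_i$ can be re-expressed as the \emph{simple} constraint $h(Y_i \cup \{z_i\}) - h(\{z_i\}) \le c_i$ (using $h(Y_i \cup \{z_i\}) = h(Y_i)$, which also must be enforced). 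The task, then, is to enforce ``$z_i$ is functionally equivalent to the set $X_i$'' using combinations of simple, acyclic, and FD constraints.

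The heart of the argument is the following. To force $h(\{z_i\}) \le h(X_i)$ and to force $z_i$ to be determined by $X_i$, one wants the functional dependency $X_i \to z_i$, i.e.\ the constraint $(X_i, X_i \cup \{z_i\}, 0)$. This is an FD constraint (case (b)), and its ``reverse'' $z_i \to X_i$, i.e.\ $(\{z_i\}, \{z_i\} \cup X_i, 0)$, is a \emph{simple} FD. Together these two FDs force $h(X_i \cup \{z_i\}) = h(X_i) = h(\{z_i\})$ on any polymatroid, which is exactly the functional-equivalence we need; so for case (b) we add, for every general constraint, the simple constraint $(\{z_i\}, Y_i \cup \{z_i\}, c_i)$ together with the two FDs above and the cardinality-type bookkeeping to tie $Y_i \cup \{z_i\}$ back to $Y_i$. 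One checks that projecting an optimal polymatroid on the enlarged instance down to $[n]$ is feasible for $\dc$ with the same objective value, and conversely any optimal $h$ for $\dc$ extends (via $h(S \cup \{z_i\}) := h(S \cup X_i)$, with a standard verification of submodularity/monotonicity of the extension) to a feasible polymatroid on the enlarged instance with the same value. For case (a) one replaces the non-simple FD $X_i \to z_i$ by an \emph{acyclic} gadget: since $z_i$ is a fresh variable appearing only as the head of constraints built around $X_i$, the constraint dependency graph restricted to the added edges is acyclic, and one argues the union with the simple part still admits the same projection/extension correspondence. For case (c), the reduction is different in flavor: here we must simulate a general $(X, Y, c)$ using only constraints with $|X| \le 2$, $|Y| \le 3$. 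The plan is to build $z_i$ incrementally --- introduce a chain of auxiliary ``prefix-bundle'' variables $z_i^{(1)}, z_i^{(2)}, \dots$ where $z_i^{(j)}$ bundles the first $j$ variables of $X_i$, enforced by width-$\le 3$ FD-like constraints of the form (previous bundle, next original variable) $\to$ (next bundle), and similarly grow a bundle for $Y_i$; each such step touches only $3$ variables, and $O(|X_i| + |Y_i|) = O(n)$ auxiliary variables and constraints suffice per original constraint.

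In all three cases the structure of the proof is the same: (i) describe the gadget and the enlarged instance $\dc'$ over variable set $[n'] = [n] \cup \{\text{auxiliaries}\}$ with $n' = O(n k)$ or $O(n^2)$ depending on the case; (ii) show $\dc'$ lies in the claimed restricted class; (iii) prove $\dc'[\Gamma_{n'}] = \dc[\Gamma_n]$ by exhibiting the two directions of the inequality via projection and extension of polymatroid functions; (iv) conclude that a poly-time algorithm for the restricted class yields one for general constraints, which is not known and (per the discussion in the introduction, and assuming standard hardness) would collapse an open problem, so the restricted problems are ``as hard as'' the general one. I would state (iii) as the core lemma and prove it carefully once in a general form that covers ``functional-equivalence gadgets,'' then instantiate it three times.

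The main obstacle I expect is step (iii), specifically the \emph{extension} direction: given an arbitrary optimal polymatroid $h$ on $[n]$ (not necessarily normal), one must extend it to a polymatroid $h'$ on the enlarged set so that the new variables genuinely behave like the bundles they represent \emph{and} submodularity holds for \emph{all} new subsets, not just the ones appearing in constraints. Defining $h'(S) := h(\text{``flatten the bundles in $S$''})$ is the natural choice, but verifying that flattening is compatible with submodularity --- i.e.\ that $h' $ is monotone and submodular on $2^{[n']}$ --- requires a short lemma about pushforwards of polymatroids under the map that identifies a fresh element with an existing set. For case (c) the incremental/chained gadget compounds this difficulty, since one is composing many such identifications; I would handle it by induction on the length of the chain, at each step invoking the single-identification lemma. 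A secondary but real subtlety is making sure the \emph{objective} is preserved: the extended polymatroid must have $h'([n']) = h([n])$, which holds because every auxiliary variable is, by construction, functionally determined by original variables, so adding it to $[n]$ does not increase $h'$; this needs to be stated as part of the gadget's correctness. Once the single-identification lemma is in place, the rest is routine case-checking that each gadget uses only the permitted constraint shapes and has polynomial size.
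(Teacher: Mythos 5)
Your proposal is correct and uses the same fundamental technique as the paper: introduce fresh auxiliary variables that are forced (via functional dependencies) to be ``functionally equivalent'' to certain sets of original variables, and then establish $\dc'[\Gamma_{n'}]=\dc[\Gamma_n]$ by a two-way projection/extension argument, where the extension direction relies on the pushforward lemma (that $h' := h\circ\phi$ is a polymatroid when $\phi : 2^{[n']}\to 2^{[n]}$ is a union-preserving, intersection-sub-preserving map); this is exactly the content of Claims \ref{claim:g-properties}--\ref{claim:h-properties} and their analogues in the paper's proof. Your verification that the extension preserves both the objective $h'([n'])=h([n])$ and feasibility is the right set of checks, and your observation that the acyclic bucket in case~(a) is acyclic because all edges flow from $[n]$ into the fresh $z_i$'s is correct under the intended reading of ``a union of a set of simple and a set of acyclic constraints'' (the two sets need only be internally simple and internally acyclic; the combined dependency graph may well have cycles, and this is also true of the paper's own gadget).

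Where you differ from the paper is in the specific gadgets. For~(a), the paper duplicates every variable $i$ into two copies $x_i,y_i$, links them by two simple two-variable FDs, and rewrites each constraint $(A,B,c)$ as $(\{x_i:i\in A\},\{y_j:j\in B\},c)$, which makes the rewritten constraints trivially acyclic (all edges go from the $x$-side to the $y$-side) and buys a stronger conclusion: \emph{every} FD in the resulting instance has exactly two variables. Your (a)/(b) gadget instead introduces one fresh bundle variable $z_i$ per constraint with $(X_i,X_i\cup\{z_i\},0)$, $(\{z_i\},X_i\cup\{z_i\},0)$, $(\{z_i\},Y_i\cup\{z_i\},c_i)$; this is correct, adds fewer variables ($n+k$ rather than $2n$), but yields simple constraints of unbounded arity. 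For~(b)/(c), the paper iteratively merges two variables at a time (with a potential-function termination argument), which is essentially your prefix-chain construction unrolled in a slightly different order; both give polynomial-size instances with $|X|\le 2$, $|Y|\le 3$ for~(c). In short: same proof strategy, same core pushforward lemma, interchangeable gadget details.
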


As for simple degree constraints there is always an optimal polymatroid function that is
normal, one might plausibly conjecture that an explanation for the polymatroid bound being
efficiently computable for simple degree constraints is that the optimal normal function is
efficiently computable for general degree constraints. We  show that this conjecture is
highly unlikely to be true by showing that it is NP-hard to compute the optimal normal
function on general  instances. This is stated in Theorem \ref{thm:normalhard}, which is
proved in Section  \ref{sec:hardness1}.

\begin{thm}
  \label{thm:normalhard}
      Given a collection $\DC$ of degree constraints, it is NP-hard to compute the polymatroid bound
      $\DC[\Gamma_n]$ when the functions are additionally restricted to be normal.
\end{thm}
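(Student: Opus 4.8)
\textbf{Proof plan for Theorem~\ref{thm:normalhard}.} The plan is to rewrite the normal‑restricted polymatroid bound as an explicit linear program, show that its separation problem is NP‑hard, and transfer the hardness to the optimal \emph{value} via the equivalence between separation and optimization. Throughout, write $\text{Normal}$ for the set of normal polymatroids on $[n]$ and $\DC[\Gamma_n\cap\text{Normal}]$ for the bound in the statement.

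\textbf{Step 1 (an explicit LP).} First I would recall the structural fact behind~\cite{DBLP:conf/pods/KhamisK0S20} (and behind the dual of $\Dsimple$ discussed after Theorem~\ref{thm:main1}): a polymatroid $h$ on $[n]$ is normal exactly when it is a nonnegative combination $h=\sum_{\emptyset\neq W\subseteq[n]}c_W\,b_W$ of the \emph{boolean} polymatroids $b_W$, where $b_W(S)=1$ if $S\cap W\neq\emptyset$ and $b_W(S)=0$ otherwise. Each $b_W$ is a polymatroid, so $h\in\Gamma_n$ holds automatically; moreover $b_W([n])=1$, and (since $X_i\subseteq Y_i$) the difference $b_W(Y_i)-b_W(X_i)$ equals $1$ when $W\cap X_i=\emptyset$ and $W\cap Y_i\neq\emptyset$, and $0$ otherwise. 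Substituting into~\eqref{eqn:polymatroid-bound} makes $h\in\Gamma_n$ vacuous and turns $\DC[\Gamma_n\cap\text{Normal}]$ into the packing LP $\max\{\sum_{\emptyset\neq W}c_W : \sum_{W:\,W\cap X_i=\emptyset,\,W\cap Y_i\neq\emptyset}c_W\le c_i\ (\forall i\in[k]),\ c\ge 0\}$, which has exponentially many variables but only $k$ nontrivial constraints. Its LP dual is the covering LP $\min\{\sum_{i}c_i y_i : y\ge 0,\ \sum_{i:\,W\cap X_i=\emptyset,\,W\cap Y_i\neq\emptyset}y_i\ge 1\ \text{for every }\emptyset\neq W\subseteq[n]\}$; strong duality holds because any cardinality constraint $(\emptyset,\{v\},\cdot)$ caps each $c_W$, so the packing LP is bounded.

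\textbf{Step 2 (NP‑hard separation).} The separation problem for the covering LP is: given $y\ge 0$, decide whether some nonempty $W$ has $\sum_{i:\,W\cap X_i=\emptyset,\,W\cap Y_i\neq\emptyset}y_i<1$. I would show that even computing $\phi(y):=\min_{\emptyset\neq W}\sum_{i:\,W\cap X_i=\emptyset,\,W\cap Y_i\neq\emptyset}y_i$ is NP‑hard, by a reduction from minimum vertex cover. Given a graph $H=(V,E)$ with $|E|\ge 1$, set $[n]=V$ and create, for each $v\in V$, a constraint $(\emptyset,\{v\},\cdot)$ with weight $y=1$, and for each edge $e=\{a,b\}\in E$ a constraint $(\{a,b\},V,\cdot)$ with weight $y=|V|+1$. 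For nonempty $W$, the $v$‑constraint contributes iff $v\in W$ and the $e$‑constraint contributes iff $W\cap e=\emptyset$, so the objective at $W$ is $|W|+(|V|+1)\cdot|\{e\in E: e\cap W=\emptyset\}|$. Any $W$ that is not a vertex cover pays at least $|V|+1$, which exceeds the cost $|W|$ of any vertex cover and in particular exceeds $\tau(H)\le|V|$ (the minimum vertex cover number, attained at a nonempty cover since $|E|\ge1$). Hence $\phi(y)=\tau(H)$, so computing $\phi$ — equivalently, separating over the covering LP — is NP‑hard.

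\textbf{Step 3 (transfer).} By the polynomial‑time equivalence between the separation and optimization problems for a polyhedron, a polynomial‑time algorithm for $\DC[\Gamma_n\cap\text{Normal}]$ — which, keeping $(X_i,Y_i)$ fixed and varying only the constants $c_i$, is a polynomial‑time linear‑optimization oracle over the polyhedron of Step~1 — would yield a polynomial‑time separation oracle for it, hence a polynomial‑time algorithm for the NP‑hard problem of Step~2; this is the desired reduction. The main obstacle is the plumbing here: one must check that the normal‑bound oracle realizes a sufficiently rich family of objectives (nonnegative integer $c_i$'s, which is all that is needed, are realizable as degree constraints — after scaling/padding if convenient — and infeasible or degenerate cases are controlled by the cardinality constraints already present) and that $\phi(y)$ can be read off to the required precision (binary search on a scalar multiple of $y$, using that $\phi(y)$ is a sum of at most $k$ of the $y_i$'s). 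A self‑contained alternative, avoiding the equivalence theorem, would be to construct degree constraints whose normal‑restricted bound \emph{equals} $\tau(H)$ outright; the difficult part there is the complementary‑slackness argument showing the exponentially many $c_W$ cannot beat an integral vertex‑cover packing, and I would fall back on the route above if that turns unwieldy. Either way Theorem~\ref{thm:normalhard} follows, and in particular it refutes the conjecture stated just before it — that efficient computability of the polymatroid bound for simple degree constraints is explained by efficient computability of the optimal normal function for general degree constraints.
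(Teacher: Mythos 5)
Your proposal is correct and follows the same high-level three-step strategy as the paper: (1) reformulate the normal bound as an LP with exponentially many variables indexed by step/boolean functions and take its dual to get a covering LP with polynomially many variables $\delta_i$ and exponentially many constraints (this is exactly the paper's $\Dsimple$); (2) show separation over that covering polyhedron $\po(\mathrm{DC})$ is NP-hard; (3) transfer this hardness to the optimization problem. Step~1 is identical to the paper's. Step~2 differs in the source problem — you reduce from minimum vertex cover, while the paper reduces from Hitting Set — but these are essentially equivalent and your gadget (cardinality constraints with weight $1$ per vertex, a constraint $(\{a,b\},V,\cdot)$ with a ``blocking'' weight $|V|+1$ per edge) is a clean special case of the paper's gadget. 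Your arithmetic showing $\phi(y)=\tau(H)$ for nonempty $W$ checks out.

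The real divergence is Step~3. You invoke the Grötschel–Lovász–Schrijver separation/optimization equivalence as a black box, observing that the normal-bound oracle (with $(X_i,Y_i)$ fixed and $c_i$ varying) realizes linear optimization over $\po(\mathrm{DC})$, and you acknowledge but do not fully carry out the plumbing: that nonnegative integer $c_i$ suffice (true, by homogeneity), that negative objective components give unboundedness (true, since the recession cone of $\po(\mathrm{DC})$ is $\R_{\ge 0}^k$), and that $\phi(y)$ is recoverable by binary search (true, since $\phi$ is $1$-homogeneous and takes values that are sums of at most $k$ of the $y_i$). The paper, by contrast, deliberately avoids invoking GLS — remarking that hardness of membership does not automatically imply hardness of optimization — and instead works into its separation lemma the extra guarantee that $\lambda\hat\delta\in\po(\mathrm{DC})$ for some $\lambda>1$, which it then exploits in a bespoke argument: it defines $R=\{w : w\cdot(\delta-\hat\delta)>0\ \forall\delta\in\po(\mathrm{DC})\}$, shows $\hat\delta\notin\po(\mathrm{DC})$ iff $R\neq\emptyset$ via the extra guarantee, and tests emptiness of $R$ with the ellipsoid method using the normal-bound oracle as the separation oracle. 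The tradeoff: your GLS route is shorter and more modular but leans on a heavy external theorem whose preconditions must still be verified (well-described polyhedron, handling of unboundedness, precision), while the paper's route is longer and more hands-on but self-contained and explicit about exactly which structural property of $\po(\mathrm{DC})$ makes the transfer work. Both close the argument; if you go the GLS route you should actually discharge the preconditions rather than leave them as ``plumbing,'' since they are precisely the part the paper thought worth making explicit.
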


Theorem \ref{thm:not:much:better} implies that, perhaps, we have reached (or are at least
nearing) the limits of natural classes of degree constraints for which computing the
polymatroid bound is easier than computing the polymatroid bound for general instances. We
next seek a tractable relaxation of the polymatroid bound. Section \ref{sec:flow:bound}
introduces a new upper bound called the {\em flow bound}, denoted by $\flowbound(\dc, \pi)$,
which is parameterized by a permutation $\pi$ of the variables. The flow bound is strictly
better than the previously known tractable relaxation called the {\em chain bound}
$\dc_\pi[\Gamma_n]$ (See~\cite{DBLP:conf/pods/000118} and Sec~\ref{sec:background}), thus it
can be used as a tighter yet poly-time computable cardinality estimator.

\begin{thm}
  The flow-bound satisfies the following properties:
  \begin{itemize}
    \item[(a)] For all collections $\dc$ of degree constraints, and any given
    permutation $\pi$ of the variables,
    the flow bound can be computed in polynomial time, and is tighter than the chain bound, that is:
      \begin{align}
        \dc[\Gamma_n] \le \flowbound(\dc, \pi) \le \dc_\pi[\Gamma_n]
      \end{align}
      \item[(b)] If $\dc$ is either simple or acyclic, then in polynomial time in $n$ and $|\dc|$,
      we can compute a permutation $\pi$ such that $\flowbound(\dc, \pi) = \dc[\Gamma_n]$.
      \item[(c)] There are classes of instances $\dc$ where there exists permutations $\pi$
      for which the ratio-gap $\frac{\dc_\pi[\Gamma_n]}{\flowbound(\dc, \pi)}$ is unbounded
      above.
  \end{itemize}
  \label{thm:flow:bound}
\end{thm}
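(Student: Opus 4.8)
The plan is to treat $\flowbound(\dc,\pi)$ as the optimum of the polynomial-size min-cost flow LP of Section~\ref{sec:flow:bound}, i.e.\ the natural extension of the LP $\Dflow$ from the proof of Theorem~\ref{thm:main1} to arbitrary degree constraints, in which the permutation $\pi$ serves only to ``linearize'' the multi-element antecedents $X_i$ of the degree constraints $(X_i,Y_i,c_i)$ --- when $|X_i|\le 1$ the permutation is inert. Since this LP has $O(\poly(n,|\dc|))$ variables and constraints, it is solvable in polynomial time, which gives the first assertion of~(a). The remaining assertions I would establish as follows.

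For the sandwich $\dc[\Gamma_n]\le\flowbound(\dc,\pi)\le\dc_\pi[\Gamma_n]$ in~(a): the lower bound is \emph{soundness} of the flow LP --- I would show that every feasible flow can be read as a nonnegative combination of the degree constraints together with elemental submodularity and monotonicity inequalities, arranged along $\pi$, that telescopes to a valid inequality $h([n])\le\sum_i\lambda_i c_i$ for every polymatroid $h$; this is precisely the ``proof sequence from a feasible flow'' construction behind Theorem~\ref{thm:main2}, now run on general constraints, and it yields $h([n])\le\flowbound(\dc,\pi)$ for every $h\in\Gamma_n$ with $h\models\dc$. For the upper bound, I would exhibit one explicit feasible flow whose cost equals the chain bound: route a single unit of demand along the $\pi$-path $v_1,\dots,v_n$ and charge, at step $j$, the cheapest degree constraint $(X_i,Y_i,c_i)$ with $X_i\subseteq\{v_1,\dots,v_{j-1}\}$ and $v_j\in Y_i$; by the definition of $\dc_\pi$ in~\cite{DBLP:conf/pods/000118} this flow has cost exactly $\dc_\pi[\Gamma_n]$, and minimality of $\flowbound$ concludes~(a).

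Part~(b) splits into two cases. If $\dc$ is acyclic, take $\pi$ to be a topological order of its constraint dependency graph (linear-time computable); \cite{DBLP:conf/pods/000118} shows $\dc_\pi[\Gamma_n]=\dc[\Gamma_n]$ for such $\pi$, so the sandwich from~(a) pins $\flowbound(\dc,\pi)=\dc[\Gamma_n]$. If $\dc$ is simple, every $X_i$ has at most one element, the linearization is vacuous, and the flow LP coincides with $\Dflow$ for \emph{every} $\pi$; the proof of Theorem~\ref{thm:main1} shows its optimum is $\dc[\Gamma_n]$, so any polynomial-time computable $\pi$ (indeed any $\pi$) works.

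For part~(c) I would use the ``cyclic functional dependency'' family: variables $a_1,\dots,a_n$, the simple FDs $(\{a_i\},\{a_i,a_{i+1}\},0)$ for $i\in\Z_n$, and the cardinality constraints $(\emptyset,\{a_i\},\log N)$ for all $i$. Going around the cycle makes every $a_i$ functionally determined by $a_1$, so $\dc[\Gamma_n]=\log N$, and since $\dc$ is simple the proof of~(b) gives $\flowbound(\dc,\pi)=\log N$ for every $\pi$. Now take the interleaved permutation $\pi=(a_1,a_3,a_5,\dots,a_2,a_4,\dots)$: the only degree constraint whose antecedent lies in a prefix $\{a_1,a_3,\dots\}$ and whose consequent contains the next odd variable $a_{2j+1}$ is the cardinality constraint on $a_{2j+1}$ (the FDs that would make the step free are not yet available), so the chain bound must pay $\log N$ at each of the $\Theta(n)$ odd steps: $\dc_\pi[\Gamma_n]=\Theta(n\log N)$, and hence $\dc_\pi[\Gamma_n]/\flowbound(\dc,\pi)=\Theta(n)$ is unbounded. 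The main obstacle I anticipate lies in~(a): defining the general-constraint flow LP so that soundness survives the $\pi$-linearization of antecedents with $|X_i|\ge 2$, since the clean degree-constraint-to-arc correspondence from Theorem~\ref{thm:main1} no longer applies and one must check that the $\pi$-indexed surrogate arcs still assemble into genuine Shannon-flow inequalities; a secondary technical point is confirming the value $\Theta(\log N)$ of the flow bound on the family in~(c) under the adversarial permutation, which is what certifies the separation from the chain bound.
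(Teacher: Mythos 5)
Your decomposition into soundness (lower inequality in (a)), a comparison against the chain bound (upper inequality in (a)), a reduction to the simple/acyclic results for (b), and an explicit separating family for (c) is the same skeleton as the paper's proof; parts (b) and the lower inequality in (a) are correct and match the paper (for the latter the paper extends a feasible $D_{\sf flow}$ solution to a feasible solution of the full dual $D$ — noting that the only new effect of non-simple, $\pi$-relaxed arcs is to deposit flow directly at the sink, which is exactly what the demand-reducing constraint in $D_{\sf flow}$ budgets for — which is equivalent, via LP duality of $D$ against $P$, to your proof-sequence phrasing).

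The genuine gap is in the upper inequality of (a). You propose a feasible flow that, at step $j$, charges \emph{the cheapest single} degree constraint $(X_i,Y_i,c_i)$ with $X_i\subseteq\{v_1,\dots,v_{j-1}\}$ and $v_j\in Y_i$, and you assert its cost is exactly $\dc_\pi[\Gamma_n]$. But $\dc_\pi[\Gamma_n]$ is an LP value attained by a \emph{fractional} cover of the variables, and a greedy integral choice need not match it: with the three cardinality constraints $(\emptyset,\{1,2\},1)$, $(\emptyset,\{2,3\},1)$, $(\emptyset,\{1,3\},1)$ the chain bound equals $3/2$, whereas any rule that commits one whole constraint per step pays at least $2$. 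Since $\flowbound(\dc,\pi)$ is a minimum, a feasible flow only certifies an \emph{upper} bound, so exhibiting a flow that costs strictly more than $\dc_\pi[\Gamma_n]$ proves nothing. The correct argument is the one the paper gives in a sentence: every degree constraint appearing in the chain-bound LP is either identical to, or dominated by, the corresponding constraint available in $D_{\sf flow}$ (simple constraints are kept un-relaxed, non-simple ones are relaxed identically), so an optimal fractional dual solution $\bm\delta^*$ for $\dc_\pi[\Gamma_n]$ can be transported into a feasible $D_{\sf flow}$ solution of the same cost. Making this explicit requires routing flow recursively in $\pi$-order (a simple constraint $(\{x\},Y,c)$ can only forward flow once flow has reached $\{x\}$), not greedily.

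On (c), your interleaved cyclic-FD family is correct — since $\dc$ is simple, $\flowbound(\dc,\pi)=\dc[\Gamma_n]=\log N$ for every $\pi$ (so the ``secondary technical point'' you flag is already discharged by your own part (b)), and the odd-first permutation forces the chain bound to pay $\log N$ at each odd position, giving a $\Theta(n)$ ratio. The paper's example is more economical and gives a stronger separation from a constant-size instance: two variables with constraints $(\emptyset,\{2\},1)$ and $(\{2\},\{1,2\},1)$ under $\pi=(1,2)$; the chain relaxation makes the second constraint vacuous and leaves variable $1$ completely unconstrained, so $\dc_\pi[\Gamma_n]=\infty$ while the flow bound is finite, whence the ratio is unbounded without a limit over $n$.
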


\section{Background}
\label{sec:background}

This section presents the minimal background required to understand the results in this
paper.

\subsection{Classes of set functions}

Given a function $h : 2^{[n]} \to \R_+$, and $X \subseteq Y \subseteq [n]$, define $$h(Y|X) =
h(Y) - h(X).$$ The function $h$ is {\em monotone} if $h(Y|X) \geq 0$ for all $X\subseteq Y$; it
is {\em submodular} if $h(I | I \cap J) \geq h(I\cup J | J)$, $\forall I,J\subseteq [n]$.
It is a {\em polymatroid} if it is non-negative, monotone, submodular, with $h(\emptyset)=0$.

For every $W \subsetneq [n]$, a {\em step function} $s_W : 2^{[n]} \to \R_+$
is defined by
\begin{align}
    s_W(X) = \begin{cases}
        0 & \text{if } X \subseteq W \\
        1 & \text{otherwise}
    \end{cases}
\end{align}
A function $h$ is {\em normal}  (also called a {\em weighted coverage
function}) if it is a non-negative linear combination of step functions.
A function $h$ is {\em modular} if there is a non-negative value
$w_i$ for each variable $i \in [n]$ such that for all $X \subseteq [n]$
it is the case that $h(X) = \sum_{i \in X} w_i$.

Let $\Gamma_n$, $M_n$, $\Nor_n$ denote the set of all polymatroid, modular, normal functions
over $[n]$, respectively. It is known~\cite{panda} that $M_n \subseteq N_n \subseteq
\Gamma_n$. Note that all three sets are polyhedral. In other words, when we view each
function as a vector over $2^{[n]}$, the set of vectors forms a convex polyhedron in that
vector space, defined by a finite number of hyperplanes.
To optimize linear objectives over these sets is to solve linear programs.

\subsection{Shannon-flow inequalities and proof sequences}

Let $\calP \subseteq 2^{[n]} \times 2^{[n]}$ denote the set of all pairs $(X,Y)$
such that $\emptyset \subseteq X \subsetneq Y \subseteq [n]$.
Let $\bm \delta = (\delta_{Y|X})_{(X,Y) \in \calP}$ be a vector of non-negative reals.
The inequality
\begin{align}
    h([n] | \emptyset) \leq \sum_{(X, Y) \in \calP} \delta_{Y | X} \cdot h(Y | X)
    \label{eqn:sfi}
\end{align}
is called a {\em Shannon-flow inequality}~\cite{panda} if it is satisfied by all polymatroids
$h \in \Gamma_n$.

One way to prove that~\eqref{eqn:sfi} holds for all polymatroids is to turn the RHS into
the LHS by repeatedly applying one of the following replacements:
\begin{itemize}
    \item  {\em Decomposition:} $h(Y|\emptyset) \to h(X|\emptyset) + h(Y|X)$, for $X \subsetneq Y$.
    \item {\em Composition:} $h(X|\emptyset) + h(Y|X) \to h(Y|\emptyset)$, for $X \subsetneq Y$.
    \item {\em Monotonicity:} $h(Y|X) \to 0$, for $X \subseteq Y$.
    \item  {\em Submodularity:} $h(I|I\cap J) \to h(I\cup J|J)$, for $I \perp J$,
    which means $I \not\subseteq J$ and $J \not\subseteq I$.
\end{itemize}
Note that, we can replace $\to$ by $\geq$ in all four cases above to obtain valid
Shannon inequalities that are satisfied by all polymatroids.
Each replacement step is called a
``{\em proof step},'' which can be multiplied with a {\em non-negative} weight $w$.
For example, $w \cdot h(I |I \cap J) \geq w \cdot h(I \cup J | J)$ is a valid inequality.
We prepend $w$ to the name of the operation to denote the weight being used, so we will refer to the prior
inequality as an $w$-submodularity step.
Note also that, the terms $h(Y|X)$ are manipulated as symbolic variables.

A {\em proof sequence} for the inequality~\eqref{eqn:sfi} is a sequence of proof steps, where
\begin{itemize}
    \item Every step is one of the above proof steps, accompanied by a non-negative weight $w$
    \item After every step is applied, the coefficient of every term $h(Y|X)$ remains
    non-negative.
    \item The sequence starts with the RHS of~\eqref{eqn:sfi} and ends with the LHS.
\end{itemize}
Every proof step thus transforms a collection of non-negatively weighted (``conditional
polymatroid'') terms into another collection of non-negatively weighted terms. One of the
main results in the paper~\cite{panda} stats that,
the inequality~\eqref{eqn:sfi} is a Shannon-flow inequality {\em if and only if} there exists
a proof sequence for it.

\subsection{Comparison of polymatroid bound to other bounds}
\label{subsec:boundscomparisons}

Recall the bound $\dc[\calF]$ defined in~\eqref{eqn:dcF}. By parameterizing this bound with
combinations of $\calF$ and $\dc$, we obtain a hierarchy of bounds, which we briefly
summarize here. By setting $\calF = \Gamma_n$, $\Nor_n$, $M_n$, we obtain the {\em
polymatroid bound} $\dc[\Gamma_n]$, the {\em normal bound} $\dc[\Nor_n]$, and the {\em
modular bound} $\dc[M_n]$, respectively. For a permutation $\pi$ of $[n]$, let $\dc_\pi$
denote the collection of degree constraints obtained  by modifying each $(X_i,Y_i,c_i) \in
\dc$ by retaining the variables in $Y_i$ that either are in $X_i$ or are listed after all
variables in $X_i$ in the $\pi$-order. Note that $\dc_\pi$ is a relaxation of $\dc$ in the
sense that if a database instance satisfies $\dc$, then it  also satisfies $\dc_\pi$. The
following inequalities are known \cite{DBLP:conf/pods/000118,suciu2023applications}:
\begin{align}
    \dc[M_n] &\leq \dc[\Nor_n]
    \leq \dc[\Gamma_n] \leq \min_\pi \dc_\pi[\Gamma_n] \\
   \log \max_{\bm I \models \dc} |Q(\bm I)| &\leq \dc[\Gamma_n] \\
   \dc[M_n] &\leq n \cdot \log \max_{\bm I \models \dc} |Q(\bm I)| \\
   \dc[\Nor_n] &\leq 2^n \cdot \log \max_{\bm I \models \dc} |Q(\bm I)|
\end{align}
Fix a permutation $\pi$ (a ``chain''), then the {\em chain-bound}
$\dc_\pi[\Gamma_n]$ is computable in poly-time~\cite{DBLP:conf/pods/000118}.
If $\dc$ is acyclic, then we can compute a permutation $\pi$ (in poly-time)
such that $\dc_\pi[\Gamma_n] = \dc[M_n]$.

\subsection{Additional Background}
\label{subsetc:background}
We illustrate how the various bounds compare using the following triangle query as a running example:

\begin{align}
   Q_{\Delta}(a,b,c) \leftarrow R(a,b)\wedge S(b,c)\wedge T(a,c).
\end{align}
Suppose we know the cardinalities of $R$, $S$, and $T$. We can then derive a series of output
size bound for this query as follows.
Let $\bm 1_{R(a,b)}, \bm 1_{S(b,c)}, \bm 1_{T(a,c)}$ denote the indicator variables for
whether $(a,b)\in R, (b,c) \in S$, and $(a,c)\in T$, respectively. Then, the following
optimization problem captures the {\em exact} size bound for this problem:
\begin{align*}
    \max & \qquad \sum_{a,b,c} \bm 1_{R(a,b)}\bm 1_{S(b,c)}\bm 1_{T(a,c)}\\
    \text{s.t.} & \qquad \sum_{a,b}\bm 1_{R(a,b)} \leq |R| \\
                &\qquad    \sum_{a,b}\bm 1_{S(b,c)} \leq |S| \\
                &\qquad  \sum_{a,c}\bm 1_{T(a,c)} \leq |T|.
\end{align*}
While the bound is exact, it is certainly impractical and computing it this way may even
be worse than computing the query itself.

\paragraph*{The entropic bound}
Chung et al.~\cite{MR859293} gave the following {\em entropy argument}.
Consider the input to the query that gives the maximum output size $|Q_\Delta|$.
Pick a triple $(a,b,c)$ from this worst-case $Q_\Delta$ uniformly at random, and let $H$
denote the entropy of this uniform distribution. Then, it is easy to see that
\begin{align*}
    H(a,b,c) &= \log |Q_\Delta| = \log \text{combinatorial-bound} \\
    H(a,b) &\leq \log |R|, \ H(b,c) \leq \log |S|, \ H(a,c) \leq \log |T|.
\end{align*}
Thus, let $\Gamma^*_3$ denote the set of all entropic functions on finite and discrete
3D-distributions, the following is an upper bound for $|Q_\Delta|$:
\[
\max_{h \in \Gamma^*_3} \{ h(a,b,c) \suchthat h(a,b)\leq \log|R|, h(b,c)\leq \log |S|,
        h(a,c)\leq \log|T|\}
\]

\paragraph*{The polymatroid bound}
Thanks to Shannon~\cite{Yeung:2008:ITN:1457455}, we have known  for a long time that every
entropic function is a polymatroidal function (or {\em polymatroid} for short)
These are functions $h : 2^{\{a,b,c\}} \to \R_+$ that are
monotone and submodular (see Sec~\ref{sec:background}).
In particular, $\Gamma^*_3 \subseteq \Gamma_3$, which is the
set of all polymatroids over $3$ variables. The entropic bound can be relaxed into:
\[
\max_{h \in \Gamma_3} \{ h(a,b,c) \suchthat h(a,b)\leq \log|R|, h(b,c)\leq \log |S|,
        h(a,c)\leq \log|T|\}
\]
The main advantage of this so-called {\em polymatroid bound} is that it is a linear program,
because $\Gamma_3$ is a polyhedron and all constraints are linear.
While the entropic bound is not known to be computable at all (for more than $3$
dimensions), the polymatroid bound is computable by solving a linear program.
However, there are exponentially (in $n$) many facets in the polyhedron
$\Gamma_n$; thus, while computable the bound cannot be used as is
in practice.

\paragraph*{The modular bound}
An idea from Lovasz~\cite{MR717403} called ``modularization'' can be used to reduce the
dimensionality of the above polymatroid bound. Given any feasible solution $h$ to the
polymatroid LP, define a {\em modular} function $g :
2^{\{a,b,c\}}\to \R_+$ by :
\[ g(a) := h(a), \ g(b) := h(a,b)-h(a), \ g(c) := h(a,b,c)-h(a,b). \]
Then, extend modularly: $g(X) := \sum_{x \in X} g(x)$ for any $X \subseteq \{a,b,c\}$.
It is easy to see that $g$ satisfies all
constraints that $h$ satisfies, because, due to $h$'s submodularity,
\[ g(a,b)=h(a,b), \ g(b,c) \leq h(b,c), \ g(a,c) \leq h(a,c), \]
and due to telescoping $h(a,b,c) = g(a,b,c)$.
Let $\Mod_3$ denote the set of all modular functions on three variables, then, $\Mod_3
\subseteq \Gamma_3$. Thus, the polymatroid bound is {\em equal} to the following ``modular
bound''
\begin{align}
    \max &\qquad g(a)+g(b)+g(c) \label{eqn:modular:bound}\\
    \text{s.t.} &\qquad g(a)+g(b)\leq \log|R|, \ \ g(b)+g(c) \leq \log |S|,\nonumber \\
                &\qquad g(a)+g(c) \leq \log|T|, \ g(a),g(b),g(c)\geq 0. \nonumber
\end{align}
The beauty of the modular bound is that the LP now has polynomial size in $n$ and $|\dc|$,
and thus can be computed efficiently.

\paragraph*{The AGM bound}
The acute reader might have noticed that the LP~\eqref{eqn:modular:bound} is the
vertex-packing LP of the triangle graph. The {\em dual} of that LP, the {\em fractional edge
cover} LP, is known as the AGM-bound~\cite{AGM} for the triangle query.

\paragraph*{Tightness}
In the quest to obtain an efficiently computable bound, we have relaxed the combinatorial
bound to the entropic bound, polymatroid bound, which is equal to the modular and AGM
bounds. How much did we lose in this gradual relaxation process? The answer comes from
Atserias, Grohe, and Marx~\cite{AGM}, where the modular bound~\eqref{eqn:modular:bound}
plays the key role. Let $g$ denote an optimal solution to~\eqref{eqn:modular:bound}, and
define the database instance:
$R = [\lfloor 2^{g(a)}\rfloor] \times [\lfloor 2^{g(b)}\rfloor]$,
$S = [\lfloor 2^{g(b)}\rfloor] \times [\lfloor 2^{g(c)}\rfloor]$,
$T = [\lfloor 2^{g(a)}\rfloor] \times [\lfloor 2^{g(c)}\rfloor]$,
Then
\begin{align*}
    \text{combinatorial-bound} \geq
    |R \Join S \Join T|
    \geq \frac 1 8 2^{g(a)+g(b)+g(c)} \geq \frac 1 8 \text{combinatorial-bound},
\end{align*}
in particular, all bounds described thus far
are within $1/8$ of the combinatorial-bound for the triangle query.

The above line of derivations works almost verbatim for an arbitrary full conjunctive query
where all input relations are EDBs and the only thing we know about them are their
cardinalities. That is one way to derive the AGM-bound~\cite{AGM}.

\section{Computing the polymatroid bound in polynomial time}
\label{sec:polymatroidbound}
\subsection{Review of the linear programming formulation}

This subsection reviews the relevant progress made in \cite{panda}. The LP
formulation for computing the polymatroid bound $\mathrm{DC}[\Gamma_n]$ on a
collection $\dc = \{(X_i, Y_i, c_i) \mid i \in [k] \}$ of degree constraints was shown
in equation~\eqref{eqn:polymatroid-bound}.
We now write down this LP more explicitly, by listing all the constraints
defining the polymatroids.
To make the formulation more symmetrical,
in the following, we do not restrict $h(\emptyset)=0$; instead of maximizing $h([n])$,
we maximize the shifted quantity $h([n]) - h(\emptyset)$.
The function $h'(X) = h(X) - h(\emptyset)$ is a polymatroid.

There is a variable $h(X)$ for each subset $X$ of  $[n]$. The LP, denoted by $P$, is:
 \begin{equation}
\begin{array}{rrlll}
\displaystyle P:& \multicolumn{3}{l}{\max  \quad h([n]) - h(\emptyset)} \\
\textrm{s.t.} &  h(Y\cup X) - h(X) - h(Y) +  h( Y \cap X) & \leq & 0 & \forall X \forall Y,  X \perp Y  \\
&h(Y) - h( X) & \geq & 0 &\forall X \forall Y,  X \subsetneq Y \\
&\displaystyle h(Y_i) - h(X_i) &\le& c_i  &\forall i \in [k]
\end{array}
\label{eqn:LP-P}
\end{equation}
where  $X \incomp Y$ means $X \not\subseteq Y$ and $Y \not\subseteq X$. The first collection
of constraints enforce that the function $h$ is submodular, the second enforces that the
function $h$ is monotone, and the third enforces the degree constraints. We will adopt the
convention that all variables in our LPs are constrained to be {\em non-negative}
unless explicitly mentioned otherwise. Note that the linear program $P$ has both
exponentially many variables and exponentially many constraints.

To formulate the dual LP $D$ of $P$, we associate dual variables
$\sigma_{X,Y}$, dual variables $\mu_{X,Y}$ and dual variables $\delta_i$ with the three
types of constraints in $P$ (in that order). The dual $D$ is then:

\begin{equation}
\begin{array}{rrclcl}
\displaystyle D: \quad \quad \min & \multicolumn{3}{l}{\sum_{i \in [k]} c_i \cdot \delta_i} \\
\textrm{s.t.} & \flow([n])  & \geq & 1  \\
&\flow(\emptyset)  & \geq & -1\\
&\flow(Z)              & \geq& 0, \qquad \forall Z  \ne \emptyset, [n]
\end{array}
\label{eqn:LP-D}
\end{equation}
where $\flow(Z)$ is
defined by:
\begin{multline*}
   \flow(Z) :=
     \sum_{i: Z = Y_i} \delta_i -
   \sum_{i: Z=X_i }\delta_i +
   \sum_{\substack{I\incomp J\\I\cap J = Z}}\sigma_{I,J}
   +
   \sum_{\substack{I'\incomp J'\\I'\cup J' = Z}}\sigma_{I',J'}
   - \sum_{J: J\incomp Z}\sigma_{Z,J}-
   \sum_{X: X\subsetneq Z}\mu_{X,Z}+ \sum_{Y: Z\subsetneq Y}\mu_{Z,Y}
\end{multline*}
We use $\bm \delta, \bm \sigma, \bm \mu$ to denote the vectors of $\delta_i$,
$\sigma_{X,Y}$, and $\mu_{X,Y}$ variables.

The dual $D$ can be interpreted as encoding a  min-cost flow problem on a hypergraph
$\calL$. The vertices of $\mathcal L$ are the  subsets of $[n]$. Each variable $\mu_{X,Y}$
in $D$ represents the flow on a directed edge in $\calL$ from $Y$ to $X$. Each variable
$\delta_i$ in $D$ represents both the flow and capacity on the directed edge from $X_i$ to
$Y_i$ in $\calL$. The cost to buy this capacity is $\delta_i \cdot c_i$. Each variable
$\sigma_{X,Y}$ in $D$ represents the flow leaving each of $X$ and $Y$, and entering each of
$X \cap Y$ and $X \cup Y$ through the hyperedge containing four vertices.  The $\sigma$
variables involve flow between four nodes, which is why this is a hypergraph flow problem,
not a graph flow problem. Flow can not be created at any vertex other than the empty set,
and is conserved at all vertices other than the empty set and $[n]$.  That is, there is flow
conservation at each vertex like a standard flow problem. The objective is to  minimize the
cost of the bought capacity subject to the constraint that this capacity can support a unit
of flow from the source $s=\emptyset$ to the sink $t=[n]$.

\begin{ex}[Running Example Instance] It is challenging to construct a small example that
illustrates all the interesting aspects of our algorithm design, but the following
collection of degree constraints is sufficient to illustrate many interesting aspects:
$$
(1)\;  h(ab) \le  1;\quad
(2)\; h(bc) \le 2; \quad
(3)\;  h(ac) \le  1; \quad
(4)\;  h(ad) - h(a) \le  1
$$
These degree constraints are indexed (1) through (4) as indicated and $k=4$. The optimal
solution for $D$ to buy the following capacities, and route flow as follows:

\begin{figure}[h]
\begin{tikzpicture}[scale=1.5, every node/.style={scale=1.0}]
  \node (abcd) at (1,3) { $\{a, b, c, d\}$ };
  \node (abc) at (-1,2) { $\{a, b, c\}$ };
  \node (ab) at (-2,1) { $\{a, b\}$ };
  \node (ac) at (0,1) { $\{a, c\}$ };
  \node (ad) at (2,1) { $\{a, d\}$ };
  \node (a) at (-1,0) { $\{a\}$ };
  \node (empty) at (-1,-1) { $\emptyset$ };
  \node[text=red] (sig1) at (-2,1.5) { $\sigma_{\{a,b\}, \{a,c\}}$ };
  \node[text=blue] (sig2) at (1,1.5) { $\sigma_{\{a,b,c\}, \{a,d\}}$ };

  \draw[thick,red] (abc) -- (ab);
  \draw[thick,red] (abc) -- (ac);
  \draw[thick,red] (ab) -- (a);
  \draw[thick,red] (ac) -- (a);
  \draw[thick,blue] (abcd) -- (ad);
  \draw[thick,blue] (ad) -- (a);
  \draw[thick,blue] (a) -- (abc);
  \draw[thick,blue] (abcd) -- (abc);

  \draw[>=stealth,<-, orange] (empty) -- (a) node[midway, left] {$\mu_{\emptyset,\{a\}}$};
  \draw[>=stealth,->, bend left=45] (empty) to node[left] {$\delta_{\emptyset,\{a,b\}}$} (ab);
  \draw[>=stealth,->, bend right=45] (empty) to node[right] {$\delta_{\emptyset,\{a,c\}}$} (ac);
  \draw[>=stealth,->, bend right=25] (a) to node[right] {$\delta_{\{a\},\{a,d\}}$} (ad);
\end{tikzpicture}
\label{fig:flow}
\caption{The feasible flow in the optimal solution for $D$,  where all depicted variables
are set to $1$.}
\end{figure}
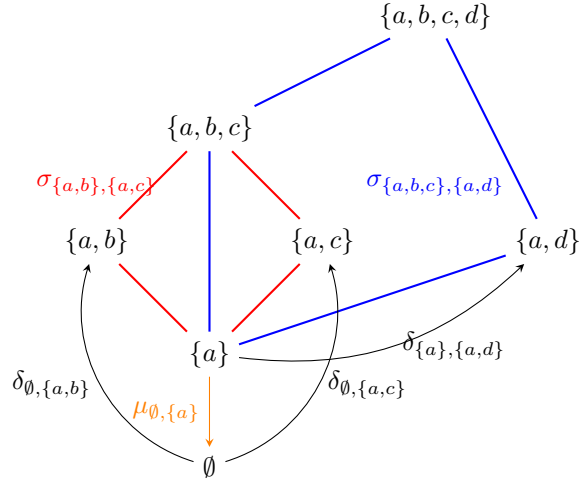

Set $\delta_1, \delta_3,\delta_4 $ equal to $1$ and $\delta_2$ to $0$.  The objective will
be $3$.  This can support the flow as follows.  A flow of $1$ is sent from $\emptyset$ to
$\{a,b\}$ and from $\emptyset$ to $\{a,c\}$ on the corresponding degree constraints. Then a
flow of $1$ is sent from $\{a,b\}$ and $\{a,c\}$ to both $\{a,b,c\}$ and $\{a\}$ using
$\sigma_{\{a,b\},\{a,c\}}$. The flow of $1$ at $\{a\}$ is sent to $\{a,d\}$ on the degree
constraint $\delta_4$.  A flow of $1$ is then sent to both $\{a,b,c,d\}$ and $\{a,\}$ from
$\{a,b,c\}$ and $\{a,d\}$ using $\sigma_{\{a,b,c\}, \{a,d\}}$. Thus, a flow of $1$ reaches
the set of all elements. Finally the leftover flow of $1$ at $\{a\}$ is returned to
$\emptyset$ on $\mu_{\emptyset,\{a\}}$.
\end{ex}

\subsection{Polynomial sized linear programming formulation (proof of Theorem~\ref{thm:main1})}
\label{subsec:simple}

This is where we leave the foundation laid by \cite{panda} and begin covering new ground.
We shall show that the LP $D$ can be replaced by projecting its feasible region
down to the space of the $\delta_i$ variables, resulting in the following LP:
\begin{align}
    \Dsimple: && \min    & \qquad \sum_{i \in [k]} c_i \cdot \delta_i&& \label{eqn:DS} \\
         && \mbox{s.t.}  & \qquad \sum_{i \in [k]: \ X_i \subseteq V,  Y_i \not\subseteq V}
                        \delta_i \geq 1  & \forall V \subsetneq [n]   \nonumber
\end{align}
$\Dsimple$ can be interpreted as a min-cost cut problem in the hypergraph network $\mathcal
L$, where the goal is to cheaply buy sufficient capacity on the $\delta$ edges  so the
$\delta$ edges crossing various cuts, determined by the degree constraints, have in
aggregate at least unit capacity. Note that $\Dsimple$ has only linearly many variables (one
for each degree constraint), but exponentially many constraints.

We next present a linear program $\Dflow$ that shall be shown to be equivalent to $D$ and
$\Dsimple$, but with only polynomially many variables and only polynomially many
constraints. The intuition behind $\Dflow$ is that it has a natural interpretation as a
min-cost flow problem in a (directed graph) network $G=(V, E)$ with a single source $s=\emptyset$ and
one sink for each variable. The vertices in $V$ consist of the empty set $\emptyset$, the
singleton sets $\{i\}$ for $i \in [n]$, and the sets $Y_i$, $i \in [k]$. The edges in $E$
are all the degree constraint edges $(X_i, Y_i)$  from $\mathcal L$, and all the $\mu_{X,Y}$
edges from $\mathcal L$ where $X$ and $Y$ are vertices in $V$. The cost of the degree
constraint edges $(X_i, Y_i)$ remains $\delta_i$, and the costs of the $\mu_{X,Y}$ edges
remain 0. So $G$ is a subgraph of the hypergraph $\mathcal L$. See Figure ~\ref{fig:auxG}
for an illustration of $G$ for our running example. Now the  problem is to spend as little
as possible to buy enough capacity so that for all sinks/vertices $t \in [n]$ it is the case
that there is sufficient capacity to route a unit of flow from the source $s=\emptyset$ to
the sink $t$.

\begin{figure}[h]
    \centering
    \begin{minipage}{0.45\textwidth}
        \centering
        \begin{tikzpicture}[scale=1.0, every node/.style={scale=1.0}]
            \node (empty) at (0,0) {$\emptyset$};
            \node (a) at (0,1) {$\{a\}$};
            \node (b) at (0,2) {$\{b\}$};
            \node (c) at (0,3) {$\{c\}$};
            \node (d) at (0,4) {$\{d\}$};
            \node (ab) at (4,0.5) {$\{a,b\}$};
            \node (bc) at (4,1.5) {$\{b,c\}$};
            \node (ac) at (4,2.5) {$\{a,c\}$};
            \node (ad) at (4,3.5) {$\{a,d\}$};
            \draw[>=stealth,->,thick,blue] (empty) to node[left] {$1$} (ab);
            \draw[>=stealth,->,thick,blue] (empty) to node[right] {$2$} (bc);
            \draw[>=stealth,->,thick,blue] (empty) to node[left] {$1$} (ac);
            \draw[>=stealth,->,thick,DarkGreen, bend left=10] (ab) to (a);
            \draw[>=stealth,->,thick,DarkGreen] (ab) -- (b);
            \draw[>=stealth,->,thick,blue, bend left=20] (a) to node[left] {$1$} (ad);
            \draw[>=stealth,->,thick,DarkGreen] (ad) -- (d);
            \draw[>=stealth,->,thick,DarkGreen] (ad) -- (a);
            \draw[>=stealth,->,thick,DarkGreen] (ac) -- (a);
            \draw[>=stealth,->,thick,DarkGreen] (ac) -- (c);
            \draw[>=stealth,->,thick,DarkGreen] (bc) -- (b);
            \draw[>=stealth,->,thick,DarkGreen] (bc) -- (c);
        \end{tikzpicture}
        \caption{\label{fig:auxG} The auxiliary graph $G$ for the running example.
        The blue edges correspond to the degree constraints, with annotated costs.
        The green edges correspond to the $\mu$ variables, and cost $0$.}
    \end{minipage}\hfill
    \begin{minipage}{0.45\textwidth}
        \centering
        \begin{tikzpicture}[scale=1.0, every node/.style={scale=1.0}]
            \node (empty) at (0,0) {$\emptyset$};
            \node (a) at (0,1) {$\{a\}$};
            \node (b) at (0,2) {$\{b\}$};
            \node (c) at (0,3) {$\{c\}$};
            \node (d) at (0,4) {$\{d\}$};
            \node (ab) at (4,0.5) {$\{a,b\}$};
            \node (bc) at (4,1.5) {$\{b,c\}$};
            \node (ac) at (4,2.5) {$\{a,c\}$};
            \node (ad) at (4,3.5) {$\{a,d\}$};
            \draw[>=stealth,->,thick,blue] (empty) -- (ab);
            \draw[>=stealth,->,thick,DarkGreen, bend left=10] (ab) to (a);
            \draw[>=stealth,->,thick,blue, bend left=20] (a) to (ad);
            \draw[>=stealth,->,thick,DarkGreen] (ad) -- (d);
        \end{tikzpicture}

        \caption{\label{fig:auxGflow}A unit flow in the auxiliary graph $G$ from
        $\emptyset$ to $\{d\}$.  For this flow to be feasible, a unit capacity must be
        bought on these edges.}
    \end{minipage}
\end{figure}

This problem is naturally modeled by the following linear program $\Dflow$:
\begin{align}
    \Dflow: & &\min \sum_{i \in [k]} c_i \cdot \delta_i \label{eqn:Dflow} \\
          & \textrm{s.t.} &\qquad f_{i,t} &\leq \delta_i  && \forall i \in [k] & \forall t
          \in [n] \nonumber\\
          && \qquad \flow_t(t) &= 1  & &&  \forall t \in [n]\nonumber\\
          && \qquad \flow_t(\emptyset) &= -1  &&& \forall t \in [n]\nonumber\\
          && \qquad \flow_t(Z) &\geq 0  && \forall Z   \in  G \setminus \{\emptyset\}
    \setminus \{t\} & \forall t \in [n] \nonumber
\end{align}
where,
\begin{align*}
   \flow_t(Z) &:=
     \sum_{i: Z = Y_i} f_{i,t} -
   \sum_{i: Z=X_i }f_{i,t} +
   \sum_{X: X\subsetneq Z}\mu_{X,Z, t}+ \sum_{Y: Z\subsetneq Y}\mu_{Z,Y, t}
\end{align*}
The interpretation of $f_{i,t}$ is the flow routed from $X_i$ to $Y_i$ in $G$ for the flow
problem where the sink is $\{t\}$; $\mu_{Z,Y,t}$ is the flow routed from $Y$ to $Z$ in $G$.
So the first set of constraints say that the capacity
bounds are respected, the second and third set of constraints ensure that unit flow leaves
the source and arrives at the appropriate sink, and the last set of constraints ensure flow
conservation. Note that as the graph $G$ has  $O(k+ n)$ vertices and $O(kn)$ edges, the
LP $\Dflow$ has $O(kn^2)$ variables and $O(k  n)$ constraints.

\begin{ex}[Running example]
The optimal solution for $\Dflow$ on our running example is to buy unit capacity on the
$\mu$ edges for a cost of 0, buy unit capacity on degree constraint edges  $\emptyset \to ab$,
$\emptyset \to ac$, and $a \to ad$ for a cost of $1$ each, resulting
in a total cost of $3$. This supports a flow of $1$ to $a$ by routing a unit of flow on the path
$\emptyset \to ab \to a$, supports a flow of $1$ to $b$ by routing a unit of flow on the
path  $\emptyset \to ab \to b$, supports a flow of $1$ to $c$ by routing a unit of flow on
the path  $\emptyset \to ac \to c$, and supports a flow of 1 to $d$ by routing a unit of
flow on the path  $\emptyset \to ab \to a \to ad \to d$. This flow to $d$ is shown in
Figure \ref{fig:auxGflow}.
\end{ex}

We now formally show that the linear programs $D$, $\Dsimple$ and $\Dflow$ all have the same
objective function. Refer to Fig.~\ref{fig:simple:dpd} for the high-level plan. To show that
they are equivalent, it is sufficient to just consider the feasible regions for these LPs.
We prove equivalence in the following manner. Lemma \ref{lem:simplecut} shows feasible
regions of the LPs $\Dsimple$ and $\Dflow$ are identical. Then Lemma \ref{lem:simpleliftA1}
shows that the polyhedron defined by projecting the feasible region of the LP $D$ onto the
${\bm \delta}$-space is a subset of the feasible region defined by the LP $\Dflow$. Finally
Lemma \ref{lem:simpleliftA2} shows that polyhedron defined by projecting the feasible region
of the LP $D$ onto the ${\bm \delta}$-space is a superset of the feasible region defined by
the LP $\Dflow$.

\begin{lmm} \label{lem:simplecut}
The feasible regions of the linear programs $\Dsimple$ and $\Dflow$ are identical.
\end{lmm}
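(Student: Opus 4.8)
The plan is to read $\Dflow$ as, for each sink $t\in[n]$ separately, the feasibility linear program of a standard single-commodity flow in the graph $G$, and then invoke the max-flow--min-cut theorem to characterize exactly which capacity vectors $\bm\delta$ support all of these flows. First I would observe that the flow variables $f_{i,t},\mu_{X,Y,t}$ for distinct sinks $t$ interact only through the capacity inequalities $f_{i,t}\le\delta_i$: the conservation constraints $\flow_t(Z)\ge 0$, $\flow_t(t)=1$, $\flow_t(\emptyset)=-1$ each involve a single $t$. Hence a vector $\bm\delta\ge\bm 0$ is feasible for $\Dflow$ (after projecting away $\bm f$ and $\bm\mu$) if and only if, for every $t\in[n]$, the network $G$ with capacity $\delta_i$ on each degree-constraint edge $(X_i,Y_i)$ and capacity $+\infty$ on each $\mu$-edge admits a unit flow from $s=\emptyset$ to the vertex $\{t\}$. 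By max-flow--min-cut (equivalently, LP duality applied to the flow LP), this holds if and only if every cut of $G$ separating $\emptyset$ from $\{t\}$ has capacity at least $1$.

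The main work is a clean description of these cuts, and this is the step I expect to be the real obstacle. Fix $t$ and let $(S,\bar S)$ be a cut of $G$ with $\emptyset\in S$ and $\{t\}\in\bar S$ of finite capacity, so that no $\mu$-edge crosses from $S$ to $\bar S$. Since the $\mu$-edges of $G$ are precisely the directed pairs $Y\to X$ over all vertices $X\subsetneq Y$ of $G$, finiteness forces $S$ to be downward closed among the vertices of $G$: if $Y\in S$ and $X\subsetneq Y$ is a vertex of $G$, then $X\in S$. Put $V:=\{\,j\in[n]:\{j\}\in S\,\}$; then $V\subsetneq[n]$, $t\notin V$, and downward closure gives $Y_i\in S\Rightarrow Y_i\subseteq V$ for every $i$. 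I would then compare $(S,\bar S)$ with the canonical cut $S_V:=\{\emptyset\}\cup\{\{j\}:j\in V\}\cup\{Y_i:Y_i\subseteq V\}$: downward closure yields $S\subseteq S_V$, and a degree edge $(X_i,Y_i)$ crossing $(S_V,\bar S_V)$ has $X_i\subseteq V$ (hence $X_i\in S$, using $|X_i|\le 1$ since $\dc$ is simple, so $X_i$ is $\emptyset$ or a singleton of $V$) and $Y_i\not\subseteq V$ (hence $Y_i\notin S$), so it also crosses $(S,\bar S)$; thus $\mathrm{cap}(S_V,\bar S_V)\le\mathrm{cap}(S,\bar S)$. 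Conversely, for any $V\subsetneq[n]$ with $t\notin V$ the set $S_V$ is itself downward closed and $\{t\}\notin S_V$, giving a finite $\emptyset$--$\{t\}$ cut of capacity $\sum_{i:\,X_i\subseteq V,\ Y_i\not\subseteq V}\delta_i$. Therefore the minimum $\emptyset$--$\{t\}$ cut value equals
\[
\min_{\substack{V\subsetneq[n]\\ t\notin V}}\ \sum_{i\in[k]:\ X_i\subseteq V,\ Y_i\not\subseteq V}\delta_i,
\]
which is exactly the left-hand side of the $\Dsimple$ constraints indexed by the sets $V$ that do not contain $t$.

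Putting the pieces together: $\bm\delta$ is feasible for $\Dflow$ if and only if for every $t\in[n]$ and every $V\subsetneq[n]$ with $t\notin V$ one has $\sum_{i:\,X_i\subseteq V,\,Y_i\not\subseteq V}\delta_i\ge 1$. Since each $V\subsetneq[n]$ omits at least one element $t\in[n]$, and conversely every such pair $(V,t)$ has $V\subsetneq[n]$, this family of inequalities is precisely the constraint set of $\Dsimple$. As $\Dsimple$ and $\Dflow$ share the objective $\sum_{i\in[k]}c_i\delta_i$, which depends only on $\bm\delta$, their feasible regions (with $\Dflow$ projected onto $\bm\delta$-space) coincide, which is the assertion of the lemma. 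The only delicate point is the downward-closure/rounding argument of the second paragraph — in particular using the uncapacitated $\mu$-edges together with the simplicity hypothesis $|X_i|\le 1$ to show that an optimal cut is determined by the set $V$ of singletons it keeps on the source side; everything else is bookkeeping plus the textbook max-flow--min-cut theorem.
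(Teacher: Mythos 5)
Your proof is correct and follows the same high-level strategy as the paper: read $\Dflow$ as $n$ independent single-commodity flow problems coupled only through the capacities $\delta_i$, apply max-flow--min-cut for each sink $t$, and match cuts of $G$ with the subsets $V\subsetneq[n]$ that index the $\Dsimple$ constraints. The step you flagged as delicate is in fact correct, and you handle it more carefully than the paper does. In the direction ``$\Dflow$ infeasible $\Rightarrow$ $\Dsimple$ infeasible,'' the paper takes a min cut $W$ with capacity $<1$ and sets $V:=\{j:\{j\}\in W\}$ without spelling out why every degree edge $(X_i,Y_i)$ with $X_i\subseteq V$, $Y_i\not\subseteq V$ crosses $W$; your observation that the uncapacitated $\mu$-edges force $W$ to be downward closed (whence $Y_i\in W\Rightarrow Y_i\subseteq V$, and $X_i\in W$ using $|X_i|\le 1$) is precisely the missing justification. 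In the converse direction the paper uses the smaller set $W=\{\emptyset\}\cup\{\{j\}:j\in V\}$ and asserts its cut capacity equals $\sum_{i:\,X_i\subseteq V,\,Y_i\not\subseteq V}\delta_i$; this is a slight overcount, since a degree edge $(X_i,Y_i)$ with $X_i\subseteq V$, $Y_i\subseteq V$, $|Y_i|\ge 2$ also leaves $W$ (as $Y_i$ is not a singleton). Your canonical cut $S_V=\{\emptyset\}\cup\{\{j\}:j\in V\}\cup\{Y_i:Y_i\subseteq V\}$ keeps exactly those $Y_i$ on the source side, is still downward closed, and has capacity exactly the $\Dsimple$ left-hand side, which is what the argument requires. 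So your write-up is both a correct proof of the lemma and a small tightening of the one in the paper.
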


\begin{proof}
Assume  for some setting of the $\delta_i$ variables, that $\Dflow$ is infeasible. Then
there exists a $t \in [n]$ such that the max flow between the source $s=\emptyset$ and
$\{t\}$ is less than 1. Since the value of the maximum  $s$-$t$ flow is equal the value of
the minimum $s$-$t$ cut, there must be a subset $W$ of vertices in $G$ such that $s \in W$
and $t \notin W$, where the aggregate capacities leaving $W$ is less than one. By taking $V
:= \{ i \in [n] \ | \ \{i\} \in W \}$ we obtain a violated constraint for $\Dsimple$.

Conversely, assume that for some setting of the $\delta_i$ variables, that $\Dsimple$ is
infeasible. Then there is a set $V \subsetneq [n]$ such that $\sum_{i \in [k]: X_i \subseteq
V, Y_i \not\subseteq V} \delta_i < 1$. Fix an arbitrary $t \in [n] \setminus V$, and let $W
:= \{ s \} \cup \{ \{i\} \ | \ i \in V\}$. The $(s,t)$-cut $(W, V(G) \setminus W)$ has
capacity $\sum_{i \in [k]: X_i \subseteq V, Y_i \not\subseteq V} \delta_i$ which is strictly
less than $1$. This means $\bm \delta$ is not feasible for $\Dflow$, a contradiction.
\end{proof}

\begin{lmm} \label{lem:simpleliftA1}
The polyhedron defined by projecting the feasible region of the linear program
$D$ onto the ${\bm \delta}$-space is a subset of the feasible region defined by the
linear program $\Dflow$.
\end{lmm}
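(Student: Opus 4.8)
Let $D$, $\Dsimple$, $\Dflow$ be as above, and recall from Lemma~\ref{lem:simplecut} that a vector $\bm\delta$ is feasible for $\Dsimple$ if and only if it extends to a feasible solution of $\Dflow$. So the plan is: take an arbitrary feasible solution $(\bm\delta,\bm\sigma,\bm\mu)$ of the LP $D$ in~\eqref{eqn:LP-D}, and show that its projection $\bm\delta$ satisfies every constraint of $\Dsimple$; the desired inclusion into the feasible region of $\Dflow$ then follows immediately from Lemma~\ref{lem:simplecut}. Thus I fix an arbitrary $V\subsetneq[n]$ and aim to prove $\sum_{i\in[k]:\,X_i\subseteq V,\ Y_i\not\subseteq V}\delta_i\ge 1$, which is exactly the $\Dsimple$-constraint indexed by $V$.

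The mechanism is to realize this inequality as a nonnegative combination of the constraints of $D$ in which all $\sigma$- and $\mu$-variables are eliminated (a Fourier--Motzkin-style certificate). The combination I would use puts weight $1$ on the $D$-constraint at every set $Z$ with $Z\not\subseteq V$; concretely, weight $1$ on $\flow([n])\ge 1$ and weight $1$ on $\flow(Z)\ge 0$ for every $Z$ with $\emptyset\ne Z\subsetneq[n]$ and $Z\not\subseteq V$. Note $[n]$ belongs to this family since $V\subsetneq[n]$, while $\emptyset$ does not since $\emptyset\subseteq V$, so the combined right-hand side is exactly $1$, giving $\sum_{Z:\,Z\not\subseteq V}\flow(Z)\ge 1$. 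It then remains to expand the left-hand side via the definition of $\flow(\cdot)$ and to establish the two coefficient claims: the coefficient of each $\delta_i$ equals $1$ when $X_i\subseteq V$ and $Y_i\not\subseteq V$, and equals $0$ otherwise; and the coefficient of every $\mu$- and every $\sigma$-variable is $\le 0$. Since all variables of $D$ are nonnegative, discarding the nonpositive $\mu$- and $\sigma$-terms only decreases the left-hand side, so $\sum_{i:\,X_i\subseteq V,\ Y_i\not\subseteq V}\delta_i\ \ge\ \sum_{Z:\,Z\not\subseteq V}\flow(Z)\ \ge\ 1$, which is what we want. (Summing instead over $\{Z\subseteq V\}$ does not work: that family contains $\emptyset$ and not $[n]$, so it only yields a useless bound.)

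Checking the coefficient claims is the technical core. For $\delta_i$: it occurs in $\flow(\cdot)$ with $+1$ at $Y_i$ and $-1$ at $X_i$, so its net coefficient in the sum is $[Y_i\not\subseteq V]-[X_i\not\subseteq V]$ (writing $[P]\in\{0,1\}$ for the truth value of $P$), which is $1-1=0$ when $X_i\not\subseteq V$ and is $1-0=1$ when $X_i\subseteq V$ but $Y_i\not\subseteq V$; here the only fact used is $X_i\subseteq Y_i$, forcing $X_i\not\subseteq V\Rightarrow Y_i\not\subseteq V$. The same observation handles $\mu_{X,Y}$ (with $X\subsetneq Y$), which occurs with $+1$ at $X$ and $-1$ at $Y$: its net coefficient $[X\not\subseteq V]-[Y\not\subseteq V]$ is $\le 0$ because $X\subseteq Y$. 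The delicate case is $\sigma_{X,Y}$ (for $X\incomp Y$), which enters $\flow(\cdot)$ at the four sets $X\cap Y$, $X\cup Y$ with coefficient $+1$ and $X$, $Y$ with coefficient $-1$; its net coefficient in the sum is therefore $[X\cap Y\not\subseteq V]+[X\cup Y\not\subseteq V]-[X\not\subseteq V]-[Y\not\subseteq V]$, and a short case split on which of $X,Y$ are contained in $V$ — using the lattice facts that $X\cup Y\not\subseteq V$ whenever $X\not\subseteq V$, and $X\cap Y\subseteq V$ whenever $X\subseteq V$ — shows this is $\le 0$ in every case.

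The main obstacle is precisely this last sign computation on the submodularity ($\sigma$) coefficients; the $\delta$- and $\mu$-bookkeeping is routine. It is worth noting that this direction of the equivalence does not invoke the hypothesis that the degree constraints are simple — it holds for any collection of degree constraints for which the network $G$ underlying $\Dflow$ is defined — so the force of the ``simple'' assumption is reserved entirely for the reverse inclusion established in Lemma~\ref{lem:simpleliftA2}.
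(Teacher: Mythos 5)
Your proposal is correct and uses essentially the same argument as the paper: both hinge on the observation that summing $\flow(Z)\ge 0$ over all $Z\not\subseteq V$ (together with $\flow([n])\ge 1$) produces a nonpositive coefficient on every $\sigma$- and $\mu$-variable and coefficient $1$ on $\delta_i$ exactly when $X_i\subseteq V$ and $Y_i\not\subseteq V$, which gives the $\Dsimple$-constraint at $V$. You package this as a direct Fourier--Motzkin certificate and then pass to $\Dflow$ via Lemma~\ref{lem:simplecut}, whereas the paper argues by contradiction from a deficient $s$-$t$ cut in $G$ and transfers it to the downward-closed family $\{Z : Z \subseteq V\}$ in $\calL$; the coefficient computations underneath are identical, so this is only an expository difference (your framing is, if anything, a touch cleaner, since the paper's set $W$ is nominally only $\{s\}\cup\{\{i\}:i\in V\}$ even though its union-closure claim is really about all subsets of $V$).
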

\begin{proof}
Let $(\bm \delta,\bm \sigma,\bm \mu)$ be a feasible solution to the linear program $D$,
where $\bm \delta = (\delta_i)_{i\in [k]}$. We show that $\bm \delta$ is feasible for the
linear program $\Dflow$. Assume to the contrary that $\bm \delta$ is {\em not} feasible for
$\Dflow$, then there exists a $t \in [n] $ such that there is a cut in the flow network $G$
with capacity $< 1$ that  separates  $s=\emptyset$ and $\{t\}$. In particular, let $V$  be
the union of all singleton sets that are on the same side of this cut as $s=\emptyset$;
then, $\sum_{i \in [k]: \ X_i \subseteq V, Y_i \not\subseteq V}\delta_i < 1$.

Now consider the flow hypernetwork $\mathcal L$ associated with the linear program $D$. Let
$W := \{s\} \cup \{ \{i\} \suchthat i \in V\}$ be a set of vertices in $\calL$. Then in the
hypergraph $\mathcal L$ we claim that the aggregate flow coming out of $W$ can be at most
$\sum_{i \in [k]: \ X_i \subseteq V, Y_i \not\subseteq V}\delta_i < 1$. Since the net flow
out of $W$ is equal to the total flow received by vertices not in $W$, which is $\sum_{Z
\notin W} \flow(Z) \geq 1$, we reach a contradiction.

To see that the claim holds, note that no $\mu_{X,Y}$ edge can cause flow to escape $W$, and
no $\sigma_{X,Y}$ hyperedge can contribute a positive flow to leave $W$: if $X \in W$ and
$Y\in W$ then $X \cup Y \in W$, and if either $X \notin W$ or $Y \notin W$ then
$\sigma_{X,Y}$ does not route any positive net flow out of $W$.
\end{proof}

\begin{lmm} \label{lem:simpleliftA2}
The polyhedron defined by projecting the feasible region of the linear program
$D$ onto the ${\bm \delta}$-space is a superset of the feasible region defined by the
linear program $\Dflow$.
\end{lmm}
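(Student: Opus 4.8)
\emph{Proof proposal.} By Lemma~\ref{lem:simplecut} the feasible regions of $\Dsimple$ and $\Dflow$ coincide, so it suffices to take a $\bm\delta$ that is feasible for $\Dsimple$ --- i.e.\ $\sum_{i\in[k]:\,X_i\subseteq V,\,Y_i\not\subseteq V}\delta_i\ge 1$ for every $V\subsetneq[n]$ --- and produce $\bm\sigma,\bm\mu\ge 0$ with $(\bm\delta,\bm\sigma,\bm\mu)$ feasible for $D$. Unwinding the hypergraph-flow reading of $D$, this amounts to exhibiting a flow in the hypernetwork $\calL$ that pushes $\delta_i$ units along each degree edge $(X_i,Y_i)$ and for which $\flow(Z)\ge 0$ at every node $Z\notin\{\emptyset,[n]\}$, $\flow(\emptyset)\ge -1$, and $\flow([n])\ge 1$ (in the notation of $D$). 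A preliminary simplification: it is enough to build such a flow that uses \emph{at most} $\delta_i$ on each degree edge, since any slack between the load on $(X_i,Y_i)$ and $\delta_i$ can be absorbed by increasing $\mu_{X_i,Y_i}$ by that amount (legal because $X_i\subsetneq Y_i$, so $\calL$ contains that $\mu$-edge).

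The flow I would build is assembled incrementally, using that $\bm\delta$ feasible for $\Dflow$ supplies, for every sink $t\in[n]$, a unit flow from $\emptyset$ to $\{t\}$ in the graph $G$ with degree-edge amounts $f_{i,t}\le\delta_i$. The idea is to grow a single unit of flow from $\emptyset$ up to $[n]$ by a sequence of submodularity-hyperedge applications: if a unit of flow sits at $A$ and a unit sits at $B$ with $A\incomp B$, then one application of $\sigma_{A,B}$ consumes both and deposits a unit at $A\cup B$ together with a ``byproduct'' unit at $A\cap B$, which may then be pushed further down along $\mu$-edges. Between merges one routes (part of) a flow $f_{\cdot,t}$ from the $\Dflow$ solution along degree edges to manufacture the units at the smaller sets needed for the next merge. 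The decisive use of simplicity is that, since $|X_i|\le 1$, every degree edge is rooted at $\emptyset$ or at a singleton; hence every byproduct can be funnelled down to singletons and re-used to pay for subsequent degree-edge hops at no extra cost to the capacity budget, rather than drawing a fresh unit out of the source each time. It is precisely this recycling that keeps the degree-edge loads bounded: the naive alternative --- route an independent unit flow to each singleton and then merge them all --- produces only trivial ($\emptyset$) byproducts and overloads some degree edge by a factor of up to $\Theta(n)$.

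The crux is to certify that no degree edge ever carries more than $\delta_i$ during this process and that the process can always be continued until a unit reaches $[n]$. I would prove this by maintaining the invariant that, after each merge, a unit of flow is available somewhere, no degree edge is loaded beyond $\delta_i$, and the ``frontier'' set reached so far strictly grows; the $\Dsimple$ inequality $\sum_{i:\,X_i\subseteq V,\,Y_i\not\subseteq V}\delta_i\ge 1$ is invoked at each step to guarantee that the residual degree-edge network still admits the unit of flow required for the next move. Choosing the schedule of merges and of the residual routings so that this invariant survives --- in particular so that byproducts land where the degree edges still to be used are rooted --- is the delicate part, and it is exactly where the ``simple'' hypothesis is indispensable (Theorem~\ref{thm:not:much:better} rules out any comparable construction once constraints are allowed to be slightly more general); I expect this capacity-bounding/scheduling step to be the main obstacle. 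Finally, I would remark that, by LP duality applied to $D$ with $\bm\delta$ held fixed, the lemma is equivalent to the assertion that the Shannon-flow inequality $h([n]\,|\,\emptyset)\le\sum_i\delta_i\,h(Y_i\,|\,X_i)$ holds for every polymatroid $h$ whenever $\bm\delta$ satisfies the $\Dsimple$ cut condition; the construction above is a constructive proof of that inequality, which is why it can be re-used in Section~\ref{sec:ps:simple} to extract an explicit proof sequence. Combined with Lemma~\ref{lem:simpleliftA1}, it shows that the projection of the feasible region of $D$ onto $\bm\delta$-space equals that of $\Dflow$; and since dualizing $\Dsimple$ yields exactly the polymatroid-bound LP restricted to normal functions, the argument also re-derives, by pure duality, the fact that simple instances always have an optimal normal polymatroid.
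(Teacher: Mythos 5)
Your high-level picture of what must be shown is correct: starting from $\bm\delta$ feasible for $\Dflow$ (equivalently, for $\Dsimple$), you must produce $\bm\sigma,\bm\mu\ge 0$ making $(\bm\delta,\bm\sigma,\bm\mu)$ feasible for $D$, and the submodularity hyperedges $\sigma_{A,B}$ that merge a unit at $A$ with a unit at $B$ into a unit at $A\cup B$ (plus a byproduct at $A\cap B$) are indeed the engine. Your observation that any slack $\delta_i - (\text{load on }(X_i,Y_i))$ can be absorbed by $\mu_{X_i,Y_i}$ is also correct and is exactly what the paper's initialization step does.

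However, you identify the crux --- scheduling the merges and routing the byproducts so that no degree edge is ever loaded beyond $\delta_i$ --- and then explicitly concede that you do not resolve it (``I expect this capacity-bounding/scheduling step to be the main obstacle''). That step \emph{is} the lemma, so the proposal has a genuine gap. The missing idea is a \emph{lifting} device that makes the bookkeeping tractable. The paper processes sinks in the fixed order $1,2,\dots,n$. At outer step $i$, it takes the $\Dflow$ unit flow from $\emptyset$ to $\{i{+}1\}$ (a path decomposition $\calP^{i+1}$ with amounts $f_{j,i+1}\le\delta_j$) and routes it along the \emph{lifted} path obtained by replacing each edge $(A,B)$ with $(A\cup[i],B\cup[i])$. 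This moves one unit of excess from $[i]$ to $[i+1]$ using only paths near the prefix chain $\emptyset\subsetneq[1]\subsetneq\cdots\subsetneq[n]$, so there is never any merge-scheduling to agonize over: the ``frontier'' is always a prefix $[i]$. A backward pass (replaying the path lifted by $[i+1]$, now a closed loop at $[i+1]$) and a cleanup pass then restore the invariant that $\mu_{X_j\cup[i+1],\,Y_j\cup[i+1]}=\delta_j$ for every constraint $j$, using only $\sigma$-hyperedges between lifted sets. It is precisely this invariant --- ``every degree constraint, viewed at the lifted level $[i+1]$, has full budget $\delta_j$ stored in its $\mu$ variable'' --- that certifies the degree-edge capacity is never exceeded at the next step, because the forward pass at step $i+1$ only draws down that $\mu$ by $f_{j,i+2}\le\delta_j$. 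Your proposal reaches for a global recycling argument over all singletons simultaneously, which is both harder to schedule and, as you note, threatens a $\Theta(n)$ blow-up; the lifting serializes the process so that at any moment the excess is concentrated at a single prefix vertex and the capacity accounting is local.

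A secondary note: you frame the argument as producing a flow in $\calL$ ``that pushes $\delta_i$ units along each degree edge'' and then clipping the slack with $\mu_{X_i,Y_i}$. That is workable, but the paper's choice --- initialize $\mu_{X_j,Y_j}\gets\delta_j$ so that the net flow on every degree edge starts at $0$, and then \emph{decrease} these $\mu$'s as flow is actually consumed --- makes the non-negativity of $\bm\mu$ and the capacity bound fall out together from the single invariant above, and you should adopt that framing. Your closing remarks (the equivalence with the Shannon-flow inequality, the re-derivation of optimal-normal via duality) are correct and match the paper's Section~\ref{subsec:simple:normal}, but they do not substitute for the missing construction.
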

\begin{proof}
To prove this lemma we constructively show how to extend a feasible solution $\bm \delta $
for $\Dflow$ to a feasible solution $(\bm \delta, \bm \sigma,\bm \mu)$ for $D$ by setting
$\bm \mu$ and $\bm \sigma$ variables. The extension is shown in
Algorithm~\ref{algo:delta:to:sigma:mu}.
\begin{algorithm}[th]
     \caption{Constructing a feasible solution $(\bm\delta,\bm\sigma,\bm\mu)$ to the LP $D$}
     \label{algo:delta:to:sigma:mu}

     \begin{algorithmic}[1]
          \State $\bm \mu \gets \bm 0$, $\bm \sigma \gets \bm 0$
          \For {$j = 1$ to $k$}
               \algorithmiccomment{For each degree constraint $(X_j,Y_j,c_j)$}
               \State $\mu_{X_j, Y_j} \gets \delta_{j}$
               \label{line:initialize:mu}
          \EndFor
          \For {$i = 0$ to $n-1$}
              \algorithmiccomment{Outer Loop}
              \State Let $\calP^{i+1}$ be the collection of simple flow paths
              routing $1$ unit of flow from $s=\emptyset$ to $t=\{i+1\}$
              \For {each path $P \in \calP^{i+1}$ with flow value $\epsilon$}
                    \algorithmiccomment{Forward Path Loop}
                    \For {each edge $(A,B) \in P$ from $\emptyset$ to $\{i+1\}$}
                         \If {$A \cup [i] \subsetneq B \cup [i]$}
                              \State Decrease $\mu_{A \cup [i], B \cup [i]}$ by $\epsilon$
                              \label{line:decrease:mu1}
                         \ElsIf {$B \cup [i] \subsetneq A \cup [i]$}
                              \State Increase $\mu_{B \cup [i], A \cup [i]}$ by $\epsilon$
                              \label{line:increase:mu1}
                         \EndIf
                    \EndFor
              \EndFor
              \For {each path $P \in \calP^{i+1}$ with flow value $\epsilon$}
                    \algorithmiccomment{Backward Path Loop}
                    \For {each edge $(A,B) \in P$ from $\{i+1\}$ back to $\emptyset$}
                         \If {$A \cup [i+1] \subsetneq B \cup [i+1]$}
                              \State Increase $\mu_{A \cup [i+1], B \cup [i+1]}$ by $\epsilon$
                              \label{line:increase:mu2}
                         \ElsIf {$B \cup [i+1] \subsetneq A \cup [i+1]$}
                              \State Decrease $\mu_{B \cup [i], A \cup [i]}$ by $\epsilon$
                              \If {$i+1 \in A$}
                                   \algorithmiccomment{This means $A\cup [i] = A\cup [i+1]$}
                                   \State Increase  $\mu_{B \cup [i],B \cup [i+1]}$  by $\epsilon$
                              \label{line:increase:mu3}
                              \Else
                                   \State Increase  $\sigma_{A \cup [i],B \cup [i+1]}$  by $\epsilon$
                              \label{line:increase:sigma1}
                              \EndIf
                         \EndIf
                    \EndFor
              \EndFor
              \For {each $j \in [k]$, where $X_j \cup [i+1] \subsetneq Y_j \cup [i+1]$}
                    \algorithmiccomment{Cleanup Loop}
                    \State $\epsilon \gets \delta_j - f_{j,i+1}$
                    \State Reduce $\mu_{X_j \cup [i], Y_j \cup [i]}$ by $\epsilon$
                    \If {$i+1 \in Y_j$}
                         \State Increase $\mu_{X_j \cup [i], X_j \cup [i+1]}$ and
                         $\mu_{X_j \cup [i+1], Y_j \cup [i+1]}$ by $\epsilon$ each
                    \Else
                         \State Increase $\sigma_{X_j \cup [i+1], Y_j \cup [i]}$ by $\epsilon$
                    \EndIf
              \EndFor
          \EndFor
      \end{algorithmic}
\end{algorithm}

After initialization, the outer loop iterates over $i$ from $0$ to $n-1$. We shall show that
this loop maintains the following {\em outer loop invariant} on the setting of the variables
in $D$, at the end of iteration $i$
\begin{enumerate}
\item The excess  at the vertex $[i+1]$ in $\calL$ is $1$.
\item The excess at every vertex in $\calL$, besides $\emptyset$ and $[i+1]$ is zero.
\item For every  $j\in [k]$, if  $X_j \cup[i+1] \subsetneq Y_j \cup[i+1]$, then  we have
$\mu_{X_j \cup [i+1], Y_j \cup [i+1]} = \delta_j$.
\item $\bm\mu, \bm\sigma \geq \bm 0$.
\end{enumerate}
Note that, if the invariant is satisfied at the end of iteration $i=n-1$, then the resulting $(\bm
\delta,\bm\sigma,\bm\mu)$ will represent a feasible solution for the LP $D$,
which proves the lemma.

We now prove the invariant by induction on $i$. It is
satisfied at $i=0$, where $[0]$ is defined to be the empty set.
For the inductive step, note that the body of the outer loop
has three blocks of inner loops: the forward path loop, the backward path loop, and the
cleanup loop. To extend the inductive hypothesis from $i$ to $i+1$, let $P \in
\mathcal{P}^{i+1}$ be a simple flow path in the graph $G$ that routes $\epsilon$-amount of
flow from $s=\emptyset$ to $t=\{i+1\}$. The first claim examines the effect of the forward
path loop on the variables in $D$.

\begin{claim}
The net effect on an iteration of the forward path loop processing a path $P$, with flow
$\epsilon$, on the setting of the variables in $D$ is:
\begin{itemize}
\item[(a)] The excess  at the vertex $[i]$ in  $\calL$ is reduced by $\epsilon$,
and at the vertex $[i+1]$ is increased by $\epsilon$.
\item[(b)] The excess at every vertex in $\calL$, besides $\emptyset$, $[i]$ and $[i+1]$  is zero.
\item[(c)] For each degree constraint $j \in [k]$ where $(X_j, Y_j) = (A, B) \in P$, and $A
\cup[i] \subsetneq B \cup[i]$, it is the case that $\mu_{[i] \cup X_j, [i] \cup Y_j}$
decreases by  $\epsilon$.
\item[(d)] $\bm\mu, \bm\sigma \geq \bm 0$.
\end{itemize}
\end{claim}
While the algorithm examines each edge $(A,B)$ in $P$ one by one, the real
changes are on the ``lifted'' edge $(A \cup [i], B\cup [i])$.
In particular, the algorithm
examines the path $P[i]$ constructed from $P$ by replacing each edge $(A,B)$ by $(A \cup
[i], B \cup [i])$.
An illustration of the paths $P$, $P[i]$, $P[i+1]$, and flow value settings is shown
in Figure~\ref{fig:forward:backward}.
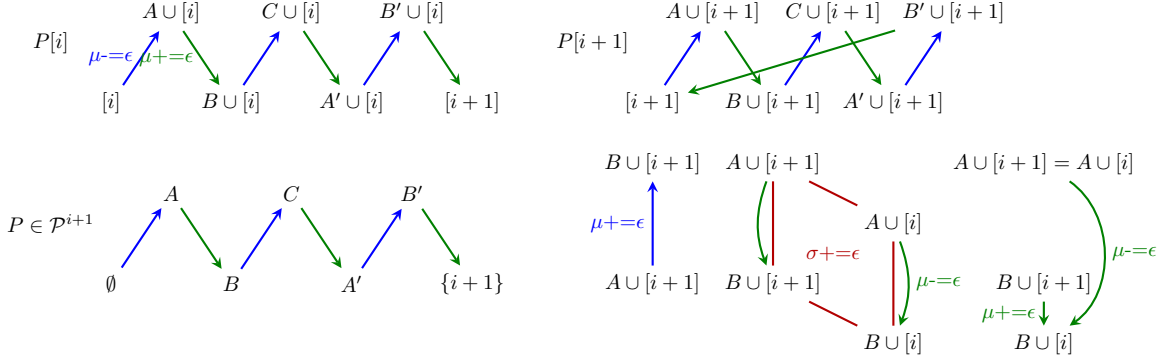
\begin{figure}[th]
\begin{tikzpicture}[scale=0.8, every node/.style={scale=0.8}]
    \node (P) at (-1,0) {$P \in \calP^{i+1}$};
    \node (empty) at (0,-1) {$\emptyset$};
    \node (a) at (1,0.5) {$A$};
    \node (b) at (2,-1) {$B$};
    \node (c) at (3,0.5) {$C$};
    \node (aa) at (4,-1) {$A'$};
    \node (bb) at (5,0.5) {$B'$};
    \node (ip1) at (6,-1) {$\{i+1\}$};
    \draw[>=stealth,->,thick,blue] (empty) to (a);
    \draw[>=stealth,->,thick,DarkGreen] (a) to (b);
    \draw[>=stealth,->,thick,blue] (b) -- (c);
    \draw[>=stealth,->,thick,DarkGreen] (c) -- (aa);
    \draw[>=stealth,->,thick,blue] (aa) -- (bb);
    \draw[>=stealth,->,thick,DarkGreen] (bb) -- (ip1);

    \node (Pi) at (-1,3) {$P[i]$};
    \node (emptyi) at (0,2) {$[i]$};
    \node (ai) at (1,3.5) {$A \cup [i]$};
    \node (bi) at (2,2) {$B \cup [i]$};
    \node (ci) at (3,3.5) {$C \cup [i]$};
    \node (aai) at (4,2) {$A' \cup [i]$};
    \node (bbi) at (5,3.5) {$B' \cup [i]$};
    \node (ip1i) at (6,2) {$[i+1]$};
    \draw[>=stealth,->,thick,blue] (emptyi) to node[left] {$\mu \text{-=} \epsilon$} (ai);
    \draw[>=stealth,->,thick,DarkGreen] (ai) to node[left] {$\mu \text{+=} \epsilon$} (bi);
    \draw[>=stealth,->,thick,blue] (bi) -- (ci);
    \draw[>=stealth,->,thick,DarkGreen] (ci) -- (aai);
    \draw[>=stealth,->,thick,blue] (aai) -- (bbi);
    \draw[>=stealth,->,thick,DarkGreen] (bbi) -- (ip1i);

    \node (Pii)     at (8,3) {$P[i+1]$};
    \node (emptyii) at (9,2) {$[i+1]$};
    \node (aii)     at (10,3.5) {$A \cup [i+1]$};
    \node (bii)     at (11,2) {$B \cup [i+1]$};
    \node (cii)     at (12,3.5) {$C \cup [i+1]$};
    \node (aaii)    at (13,2) {$A' \cup [i+1]$};
    \node (bbii)    at (14,3.5) {$B' \cup [i+1]$};
    \draw[>=stealth,->,thick,blue] (emptyii) to (aii);
    \draw[>=stealth,->,thick,DarkGreen] (aii) to (bii);
    \draw[>=stealth,->,thick,blue] (bii) -- (cii);
    \draw[>=stealth,->,thick,DarkGreen] (cii) -- (aaii);
    \draw[>=stealth,->,thick,blue] (aaii) -- (bbii);
    \draw[>=stealth,->,thick,DarkGreen] (bbii) -- (emptyii);

    \node (Bip1) at (9,1) {$B \cup [i+1]$};
    \node (Aip1) at (9,-1) {$A \cup [i+1]$};
    \draw[>=stealth,->,thick,blue] (Aip1) to node[left] {$\mu \text{+=} \epsilon$} (Bip1);

    \node (Aip12) at (11,1) {$A \cup [i+1]$};
    \node (Bip12) at (11,-1) {$B \cup [i+1]$};
    \draw[>=stealth,->,thick,DarkGreen, bend right=20] (Aip12) to (Bip12);
    \node (Ai2) at (13,0) {$A \cup [i]$};
    \node (Bi2) at (13,-2) {$B \cup [i]$};
    \node[DarkRed] (sigma) at (12,-0.5) {$\sigma \text{+=} \epsilon$};
    \draw[>=stealth,->,thick,DarkGreen, bend left=20] (Ai2) to node[right] {$\mu \text{-=} \epsilon$} (Bi2);
    \draw[thick,DarkRed] (Ai2) -- (Aip12) -- (Bip12) -- (Bi2);
    \draw[thick,DarkRed] (Ai2) to (Bi2);

    \node (AA) at (15.5,1) {$A \cup [i+1] = A \cup [i]$};
    \node (BB) at (15.5,-1) {$B \cup [i+1]$};
    \node (BBB) at (15.5,-2) {$B \cup [i]$};
    \draw[>=stealth,->,thick,DarkGreen, bend left=55] (AA) to node[right] {$\mu \text{-=} \epsilon$} (BBB);
    \draw[>=stealth,->,thick,DarkGreen] (BB) to node[left] {$\mu \text{+=} \epsilon$} (BBB);
\end{tikzpicture}
\caption{Illustrations of Forward and backward passes}
\label{fig:forward:backward}
\end{figure}

Note that $P[i]$ starts from $[i]$ and ends at $[i+1]$; furthermore, if
$A \cup [i] = B \cup [i]$ (i.e. a self-loop) then the edge is not processed. Statements (a)
and (b) follow from the fact that, as we enter a vertex on the path $P[i]$, we either
increase or decrease the excess by $\epsilon$, only to decrease or increase it by $\epsilon$
when processing the very next edge. The only exceptions are the starting vertex $[i]$ which
loses $\epsilon$ excess, and the ending vertex $[i+1]$ which gains $\epsilon$ excess.
Statement (c) follows trivially from line~\ref{line:decrease:mu1} of the algorithm. Part (d)
follows from the induction hypothesis that condition (3) of the outer loop invariant holds
for iteration $i$.

Thus the aggregate effect of the forward path loop (after all paths are processed) is to
increase the excess at $[i+1]$  from 0 to 1, and  to decrease the excess at $[i]$ from from
1 to 0. This establishes the first two conditions of the outer loop invariant for iteration
$i+1$. The purpose of the backward path loop and the cleanup loop is to establish the 3rd
condition of the outer loop invariant for iteration $i+1$. We always maintain $\bm
\sigma,\bm\mu \geq \bm 0$ throughout. The backward path loop increases such a $\mu_{X_j \cup
[i+1], Y_j \cup [i+1]}$  to the flow value $f_{j,i+1}$, where the value of $f_{j,i+1}$ comes
from the optimal solution to the linear program $\Dflow$.  This can still be smaller than
$\delta_j$. The cleanup loop then increases $\mu_{X_j \cup [i+1], Y_j \cup [i+1]}$ to
$\delta_j$, giving the desired property.

The next claim examines the effect of the backward path loop on the variables in $D$.
\begin{claim}
The net effect of an iteration of the backward path loop processing a path $P$ is:
\begin{itemize}
\item[(a)] The excess of all nodes in $\calL$ does not change.
\item[(b)] If there is a degree constraint $j$ where $(X_j, Y_j) = (A,B) \in P$, and $X_j
\cup [i+1] \ne Y_j \cup [i+1]$ then  $\mu_{A \cup [i+1], B \cup [i+1]}$ will increase by
$\epsilon$.
\item[(c)] $\bm\mu, \bm\sigma \geq \bm 0$.
\end{itemize}
\end{claim}
Let $P[i+1]$ be the lifted path constructed from $P$ by replacing each edge $(A, B)$ by an
edge $(A \cup [i+1], B \cup [i+1])$, and removing all self loops. The backward path loop
processes edges in $P[i+1]$. Note that $P[i+1]$ is a closed loop as both the start and the
end are $[i+1]$. Thus it will be sufficient to argue that for each {\em processed} edge $(A
, B)$ in $P$ it is the case that the excess of $A \cup [i+1]$ increases by $\epsilon$, the
excess of $B \cup [i+1]$ decreases by $\epsilon$, the excess of all other nodes does not
change, and  if there is a degree constraint $j$ where $A= X_j$, $B = Y_j$, and $X_j \cup
[i+1] \ne Y_j \cup [i+1]$ then  $\mu_{A \cup [i+1], B \cup [i+1]}$ will increase by
$\epsilon$.

First consider the case that $A \cup [i+1]\subsetneq B\cup [i+1]$, and thus there is a
degree constraint $j$ where $(A,B)=(X_j,Y_j)$. Since we increase $\mu_{A \cup [i+1], B \cup
[i+1]}$ by $\epsilon$ in line~\ref{line:increase:mu2}, it follows that the excess of $A \cup
[i+1]$ increases by $\epsilon$ and the excess of $B \cup [i+1]$ decreases by $\epsilon$. The
excess of all other nodes does not change. Furthermore, (b) and (c) hold for this case.

Second, consider the case that $B \cup [i+1]\subsetneq A\cup [i+1]$.
Decreasing $\mu_{B \cup [i], A \cup [i]}$ by $\epsilon$ in line~\ref{line:increase:mu2} has
the effect of increasing the excess of $A \cup [i]$ and
decreasing the excess of $B \cup [i]$. Figure~\ref{fig:forward:backward} illustrates the
following cases.
If $i+1 \in A$ then, since $A \cup [i] = A\cup [i+1]$ we have an excess increase of
$\epsilon$ at $A \cup [i+1]$. Increasing $\mu_{B \cup [i],B \cup [i+1]}$ by $\epsilon$ in
line~\ref{line:increase:mu3} balances the excess at $B \cup [i]$ and reduces the excess at
$B \cup [i+1]$ by $\epsilon$.
If $i+1 \notin A$, then $(B\cup [i+1]) \cap A \cup [i] = B \cup [i]$ and $(B \cup [i+1]) \cup
(A \cup [i]) = A\cup [i+1]$. In this case, increasing $\sigma_{A \cup [i],B \cup [i+1]}$ by
$\epsilon$ in line~\ref{line:increase:sigma1} balances the excess changes at $A \cup [i]$
and $B \cup [i]$ (due to the decrease in $\mu_{B \cup [i], A \cup [i]}$), increases the
excess at $A \cup [i+1]$, and decreases the excess at $B \cup [i+1]$, as desired.

In summary, after the backward path loop, for each degree constraint $j \in [k]$, where
$X_j \cup[i+1] \subsetneq Y_j \cup[i+1]$, it is the case that $\mu_{ X_j \cup[i+1],  Y_j
\cup [i+1]} =  f_{j,i+1}$, and the value of  $\mu_{ X_j \cup[i],  Y_j \cup [i]}$ is what
remains, which is $\delta_j - f_{j,i+1}$. To bring it up to $\delta_j$, in the cleanup loop
we iterate over each degree constraint $j \in [k]$ where $f_{j,i+1} < \delta_j$ and adjust
the $\bm \mu$ and $\bm \sigma$ variables accordingly. The analysis is analogous to the
analysis of the backward path loop above.
\end{proof}

\begin{ex}[Running example -- constructing  a feasible solution for $D$]
Order the nodes as
$$(1) \; a, (2) \; b, (3) \; c, (4) \; d$$
Briefly, we  discuss iterations $i = 0$
and  $i=3$.  Recall that the optimal solution sets $\delta_1, \delta_3$ and $\delta_4$ all
to one and initially these variables are one in the lattice.  Before the outer loop,
$\mu_{\emptyset, \{a,b\}}, \mu_{\emptyset, \{a,c\}}$ and $\mu_{\{a\}, \{a,d\}}$ are all also
set to one.

\emph{First Iteration:} Consider the outer loop where $i=0$.  In the auxiliary
graph, the flow to $\{a\}$ is a single path  $\emptyset, \{ab\}, \{ab\},\{a\}$ .  During the
forward path loop, the variable $\mu_{\emptyset, \{a,b\}}$ decreases by $1$ and
$\mu_{\{a,b\}, \{a\}}$ increases by one.  In the backward path loop, $\mu_{\{a\},\{a,b\}}$
decreases by one and then increases again by one. During the cleanup loop, the degree
constraint $(\emptyset, \{a,c\})$ results in $\mu_{\emptyset, \{a,c\}}$ decreases by one.
Then $\mu_{\emptyset, \{a\}}$ and $\mu_{\{a\}, \{a,c\}}$  increase by one. Next consider the
degree constraint, $(\{a\}, \{a,d\})$.  The variable $\mu_{\{a\}, \{a,d\}}$ decreases by one
and then immediately increase by one again.

\emph{Last Iteration:}  In this case $[i]$ is $\{a,b,c\}$.  Just before the outer
loop, $\mu_{\{a,b,c\}, \{a,b,c,d\}}$ is set to one, corresponding to  $\delta_4$ and the
excess at vertex $\{a,b,c\}$ is one.   In the auxiliary graph, the flow to $\{d\}$ a single
path  $\emptyset, \{ab\},\{ab\},\{a\},\{a\},\{ad\},\{a,d\},\{d\}$.  In the forward pass,
nothing changes when processing the edges $\emptyset,\{ab\}$ and
$\{ab\},\{a\}$.\footnote{There is no change  because  $\emptyset\cup [i] = \{ab\} \cup [i] =
\{a\} \cup [i]$.}  When $\{a\},\{ad\}$ is processed,  $\mu_{\{a,b,c\}, \{a,b,c,d\}}$
decreases by $1$. Nothing changes when processing the last edge of the path and $[i+1]$ is
the universe $\{a,b,c,d\}$ so nothing changes in the backward or cleanup loops. This
effectively gets a flow of $1$ to the universe.
\end{ex}

\subsection{Simple degree constraints and the normal bound}
\label{subsec:simple:normal}

An interesting consequence of our approach is the following result, first
proved in~\cite{DBLP:conf/pods/KhamisK0S20}.

\begin{prop}
   If the input degree constraints are simple, then $\dc[\Gamma_n]=\dc[\Nor_n]$.
\end{prop}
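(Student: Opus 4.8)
The plan is to extract the proposition from the chain of linear-program equivalences already built in Section~\ref{subsec:simple}, by ``closing the loop'' at the bottom of Figure~\ref{fig:simple:dpd}. One inequality is immediate: since $\Nor_n \subseteq \Gamma_n$, every feasible point of the normal LP is feasible for the polymatroid LP, so $\dc[\Nor_n] \le \dc[\Gamma_n]$. For the reverse inequality I would take the LP dual of $\Dsimple$ (the ``(taking dual)'' arrow into $P_{\bm\delta}^{\sf simple}$ in Figure~\ref{fig:simple:dpd}) and show that this dual is, literally, the linear program that defines the normal bound $\dc[\Nor_n]$.

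Concretely, $\Dsimple$ is a covering LP with one constraint per proper subset $V \subsetneq [n]$, so its dual $P_{\bm\delta}^{\sf simple}$ is the packing LP with a non-negative variable $\lambda_V$ for each such $V$, objective $\max \sum_{V \subsetneq [n]} \lambda_V$, and, for each $i \in [k]$, the constraint $\sum_{V :\, X_i \subseteq V,\ Y_i \not\subseteq V} \lambda_V \le c_i$. Now read $\lambda_V$ as the coefficient of the step function $s_V$ in the normal function $h = \sum_{V \subsetneq [n]} \lambda_V s_V$; by the definition of ``normal'' every normal function arises this way for some $\lambda \ge 0$, and conversely. The two routine identities to verify are $h([n]) = \sum_{V \subsetneq [n]} \lambda_V$, which holds because $s_V([n]) = 1$ whenever $V \subsetneq [n]$, and --- using $X_i \subseteq Y_i$ --- that $s_V(Y_i) - s_V(X_i)$ equals $1$ exactly when $X_i \subseteq V$ and $Y_i \not\subseteq V$, and $0$ otherwise, whence $h(Y_i) - h(X_i) = \sum_{V :\, X_i \subseteq V,\ Y_i \not\subseteq V} \lambda_V$. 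Thus the objective of $P_{\bm\delta}^{\sf simple}$ is precisely $h([n])$ and its $i$-th constraint is precisely the degree constraint $h(Y_i) - h(X_i) \le c_i$; i.e., $P_{\bm\delta}^{\sf simple}$ is the linear program $\max\{ h([n]) : h \in \Nor_n \cap \dc \}$, whose optimal value is $\dc[\Nor_n]$ by definition.

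It then remains only to chain equalities. By strong LP duality, $\mathrm{val}(P_{\bm\delta}^{\sf simple}) = \mathrm{val}(\Dsimple)$; by Lemmas~\ref{lem:simplecut}, \ref{lem:simpleliftA1} and~\ref{lem:simpleliftA2} the feasible region of $\Dsimple$ coincides with the projection of the feasible region of $D$ onto the ${\bm\delta}$-space, and since both $D$ and $\Dsimple$ minimize the same objective $\sum_i c_i \delta_i$, we get $\mathrm{val}(\Dsimple) = \mathrm{val}(D)$; and by strong duality between $P$ and $D$, $\mathrm{val}(D) = \mathrm{val}(P) = \dc[\Gamma_n]$. Combining, $\dc[\Nor_n] = \mathrm{val}(P_{\bm\delta}^{\sf simple}) = \mathrm{val}(\Dsimple) = \dc[\Gamma_n]$. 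I do not expect a substantive obstacle here, since the hard work has already gone into establishing $\mathrm{val}(\Dsimple) = \dc[\Gamma_n]$; the only point needing a word of care is that optimizing $\sum_V \lambda_V$ over $\lambda \ge 0$ subject to the translated constraints has the same optimal value as optimizing $h([n])$ over $\Nor_n \cap \dc$, which follows from the surjectivity of $\lambda \mapsto \sum_V \lambda_V s_V$ onto $\Nor_n$ (injectivity is not needed, as we only compare optimal values). Note also that the translation step uses only $X_i \subseteq Y_i$, so simplicity of $\dc$ is invoked solely through the identity $\mathrm{val}(\Dsimple) = \dc[\Gamma_n]$.
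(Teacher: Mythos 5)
Your proposal is correct and matches the paper's proof essentially verbatim: both take the dual of $\Dsimple$ to obtain $P_{\bm\delta}^{\sf simple}$, identify the dual variables $\lambda_V$ with coefficients in the step-function expansion of a normal function, verify the same translation identities for $h([n])$ and $h(Y_i)-h(X_i)$, and conclude via $\mathrm{val}(\Dsimple)=\dc[\Gamma_n]$. The only cosmetic difference is that you separately record the easy containment $\dc[\Nor_n]\le\dc[\Gamma_n]$, which the paper's argument absorbs into the chain of equalities.
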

\begin{proof}
The dual LP of $\Dsimple$ \eqref{eqn:DS} is
\begin{align}
    P_{\bm \delta}^{\sf simple}: && \max  &\qquad \sum_{V\subset [n]} \lambda_V && \label{eqn:PS} \\
         && \text{s.t.} & \qquad
        \sum_{V \subset [n]: \  X_i \subseteq V,  Y_i \not\subseteq V}
    \lambda_V \le c_i, & \forall i \in [k] \nonumber
\end{align}
From the results of Section~\ref{subsec:simple}, we know that the LP $P_{\bm \delta}^{\sf
simple}$ shown in~\eqref{eqn:PS} has optimal value equal $\dc[\Gamma_n]$. We prove that this
LP is the same as $\dc[\Nor_n]$. Recall that $h \in \Nor_n$ iff $h = \sum_V \lambda_V s_V$,
for some non-negative $\lambda_V$, and where $s_V$ is the step function defined by $s_V(X)
:= \bm 1_{X \not\subseteq V}$ (the indicator function for the event $X \subseteq V$). It
follows that~\eqref{eqn:PS} is {\em exactly} $\dc[\Nor_n]$, because
\begin{align*}
h(X) &= \sum_{V: X \not\subseteq V} \lambda_V &&
h([n]) = \sum_{V \subset [n]} \lambda_V &&
h(Y)-h(X) = \sum_{V: X \subseteq V, Y \not\subseteq V} \lambda_V. && X \subseteq Y
\end{align*}
In terms of weighted coverage function~\cite{DBLP:books/cu/p/0001G14},
$\lambda_V$ is the weight of the vertex connected to vertices in $V$ in the standard
bipartite representation (WLOG we assume that there is only one such vertex).
\end{proof}

\section{Computing a polynomial sized proof-sequence for simple instances}
\label{sec:ps:simple}
This section proves Theorem~\ref{thm:main2} by constructing, from a feasible solution to the
linear program $\Dflow$, a proof sequence for the Shannon-flow inequality $h([n] |
\emptyset) \leq \sum_{j \in [k]} \delta_j \cdot h(Y|X)$. The construction is given in
Algorithm~\ref{algo:delta:to:ps}, whose execution traces the construction of the $\sigma$
and $\mu$ variables from the $\delta$ variables in Algorithm~\ref{algo:delta:to:sigma:mu}.
We now show in Lemma \ref{lem:alg2invariant} an invariant of
Algorithm~\ref{algo:delta:to:ps} that is sufficient to show that the proof sequence is
correct. To see this note that this invariant implies that in the end the excess at $[n]$ is
$1$.

\begin{algorithm}[h]
     \caption{Constructing a proof sequence from $\bm\delta$ feasible to $\Dflow$}
     \label{algo:delta:to:ps}

     \begin{algorithmic}[1]
          \State (We write $h(X)$ to mean $h(X \mid \emptyset)$ for short)
          \For {$i = 0$ to $n-1$}
              \algorithmiccomment{Outer Loop}
              \State Let $\calP^{i+1}$ be the collection of simple flow paths
              routing $1$ unit of flow from $s=\emptyset$ to $t=\{i+1\}$
              \For {each path $P \in \calP^{i+1}$ with flow value $\epsilon$}
                    \algorithmiccomment{Forward Path Loop}
                    \For {each edge $(A,B) \in P$ from $\emptyset$ to $\{i+1\}$}
                         \If {$A \cup [i] \subsetneq B \cup [i]$}
                              \State $\epsilon$-Compose: $h(A\cup [i])+ h(B\cup[i]  \mid  A\cup [i]) \to h(B\cup [i])$
                              \label{line:comp:1}
                         \ElsIf {$B \cup [i] \subsetneq A \cup [i]$}
                              \State $\epsilon$-Decompose: $h(A \cup [i]) \to h(B\cup [i])+ h(A\cup[i]  \mid  B\cup [i])$
                              \label{line:decomp:1}
                         \EndIf
                    \EndFor
              \EndFor
              \For {each path $P \in \calP^{i+1}$ with flow value $\epsilon$}
                    \algorithmiccomment{Backward Path Loop}
                    \For {each edge $(A,B) \in P$ from $\{i+1\}$ back to $\emptyset$}
                         \If {$A \cup [i+1] \subsetneq B \cup [i+1]$}
                              \State $\epsilon$-Decompose: $h(B \cup [i+1]) \to h(A\cup [i+1])+ h(B\cup[i+1]  \mid  A\cup [i+1])$
                              \label{line:decomp:2}
                         \ElsIf {$B \cup [i+1] \subsetneq A \cup [i+1]$}
                              \If {$i+1 \in A$}
                                   \algorithmiccomment{This means $A\cup [i] = A\cup [i+1]$}
                                   \If {$B \cup [i+1] \neq B\cup [i]$}
                                        \State $\epsilon$-Decompose: $h(B \cup [i+1]) \to h(B\cup [i])+ h(B\cup[i+1]  \mid  B\cup [i])$
                                   \label{line:decomp:3}
                                   \EndIf
                                   \State $\epsilon$-Compose: $h(B\cup [i])+ h(A\cup[i]  \mid  B\cup [i]) \to h(A\cup [i]) = h(A \cup [i+1])$
                                   \label{line:comp:2}
                              \Else
                                   \State $\epsilon$-Submodularity: $h(A\cup [i]  \mid  B \cup [i]) \to h(A\cup [i+1]  \mid  B \cup [i+1])$
                                   \label{line:submod:1}
                                   \State $\epsilon$-Compose: $h(B\cup [i+1])+ h(A\cup[i+1]  \mid  B\cup [i+1]) \to h(A\cup [i+1])$
                                   \label{line:comp:3}
                              \EndIf
                         \EndIf
                    \EndFor
              \EndFor
              \For {each $j \in [k]$, where $X_j \cup [i+1] \subsetneq Y_j \cup [i+1]$} and $[i+1] \not\subset X$
                    \algorithmiccomment{Cleanup Loop}
                    \State $\epsilon \gets \delta_j - f_{j,i+1}$
                    \If {$i+1 \in Y_j$}
                        \State $\epsilon$-Decompose:
                              $h(Y_j\cup [i]  \mid  X_j \cup [i]) \to h(Y_j\cup [i]  \mid  X_j \cup [i+1]) + h(X_j \cup [i+1]\mid X_j \cup [i]) $
                        \State $\epsilon$-Monotonicity:
                              $h(X_j \cup [i+1]\mid X_j \cup [i]) \to 0$
                    \Else
                         \State $\epsilon$-Submodularity: $h(X_j\cup [i]  \mid  Y_j \cup [i]) \to h(X_j\cup [i+1]  \mid  Y_j \cup [i+1])$
                    \EndIf
              \EndFor
          \EndFor
      \end{algorithmic}
\end{algorithm}

\begin{lmm}
\label{lem:alg2invariant}
    Starting from the sum $\sum_{j \in [k]} \delta_j \cdot h(Y_j|X_j)$, the proof sequence
    constructed in Algorithm~\ref{algo:delta:to:ps} satisfies the following invariants. If
    we were to run Algorithm~\ref{algo:delta:to:sigma:mu} in lock step with
    Algorithm~\ref{algo:delta:to:ps}, then after every edge $(A,B)$ is processed  the
    following holds:
    \begin{itemize}
        \item[(a)] The coefficient of $h(Y \mid \emptyset)$ is the excess at $Y$, for every
        $Y \subseteq [n]$.
        \item[(b)] The coefficient of $h(Y \mid X)$ is exactly $\mu_{X,Y}$, for every
        $\emptyset \neq X \subseteq Y \subseteq [n]$.
    \end{itemize}
\end{lmm}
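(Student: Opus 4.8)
The plan is to prove Lemma~\ref{lem:alg2invariant} by induction, running Algorithm~\ref{algo:delta:to:ps} and Algorithm~\ref{algo:delta:to:sigma:mu} in parallel and maintaining the dictionary between the current non-negatively weighted collection of conditional terms and the current flow in $\calL$: the coefficient of $h(Y\mid\emptyset)$ equals the excess $\flow(Y)$ at vertex $Y$, and the coefficient of $h(Y\mid X)$, for $\emptyset\neq X\subsetneq Y$, equals $\mu_{X,Y}$. Note that $h(\emptyset\mid\emptyset)\equiv 0$, so its coefficient is immaterial, and the $\sigma$-variables are not referenced by the invariant. The two algorithms share control flow — the same outer loop over $i$, the same forward/backward/cleanup passes, and the same enumeration of paths and edges — so the induction unit is the processing of a single edge $(A,B)$ of a path (or a single degree constraint $j$ in the cleanup loop), and what must be shown is that the \emph{net} change Algorithm~\ref{algo:delta:to:ps} makes to the coefficients equals, under the dictionary, the \emph{net} change Algorithm~\ref{algo:delta:to:sigma:mu} makes to $(\flow,\bm\mu)$; the conclusion that the resulting sequence is a valid proof sequence for $h([n]\mid\emptyset)\le\sum_j\delta_j h(Y_j\mid X_j)$ then follows from this invariant together with the already-established terminal state of Algorithm~\ref{algo:delta:to:sigma:mu} ($\flow([n])=1$, all other excesses $0$).

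The first ingredient is a bookkeeping calculation: an $\epsilon$-decompose $h(Y)\to h(X)+h(Y\mid X)$ moves weight $\epsilon$ off $h(Y\mid\emptyset)$ onto $h(X\mid\emptyset)$ and $h(Y\mid X)$; an $\epsilon$-compose reverses this; an $\epsilon$-monotonicity removes $\epsilon$ from $h(Y\mid X)$; an $\epsilon$-submodularity moves $\epsilon$ from $h(I\mid I\cap J)$ to $h(I\cup J\mid J)$. The second ingredient reads the corresponding flow move off the definition of $\flow(Z)$: decreasing $\mu_{X,Y}$ by $\epsilon$ decreases $\mu_{X,Y}$ and $\flow(X)$ and increases $\flow(Y)$ by $\epsilon$ each; increasing it reverses this; increasing $\sigma_{I,J}$ by $\epsilon$ increases $\flow(I\cap J)$ and $\flow(I\cup J)$ and decreases $\flow(I)$ and $\flow(J)$ by $\epsilon$ each, touching no $\mu$. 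For the forward pass these match immediately: line~\ref{line:comp:1} pairs with the decrease of $\mu_{A\cup[i],B\cup[i]}$ in line~\ref{line:decrease:mu1}, and line~\ref{line:decomp:1} with the increase of $\mu_{B\cup[i],A\cup[i]}$ in line~\ref{line:increase:mu1} (with $[0]=\emptyset$ and the self-loop convention accounting for no-op edges). For the base case one checks that after the first processed edge — necessarily a cardinality-constraint edge $(\emptyset,Y_j)$, whose $\epsilon$-compose $h(\emptyset)+h(Y_j\mid\emptyset)\to h(Y_j)$ is a no-op on coefficients while the paired decrease of $\mu_{\emptyset,Y_j}$ restores $\flow(Y_j)$ from the value the initialization loop (line~\ref{line:initialize:mu}) had cancelled — the dictionary holds.

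The backward pass and cleanup loop are the same analysis split into cases. When $A\cup[i+1]\subsetneq B\cup[i+1]$, line~\ref{line:decomp:2} pairs with the increase of $\mu_{A\cup[i+1],B\cup[i+1]}$; when $B\cup[i+1]\subsetneq A\cup[i+1]$ and $i+1\in A$, lines~\ref{line:decomp:3}--\ref{line:comp:2} pair with the $\mu$-decrease/$\mu$-increase moves, using $A\cup[i]=A\cup[i+1]$; when $B\cup[i+1]\subsetneq A\cup[i+1]$ and $i+1\notin A$, lines~\ref{line:submod:1}--\ref{line:comp:3} pair with the decrease of $\mu_{B\cup[i],A\cup[i]}$ together with the increase of $\sigma_{A\cup[i],B\cup[i+1]}$ in line~\ref{line:increase:sigma1}, which requires the set identities $(A\cup[i])\cap(B\cup[i+1])=B\cup[i]$ and $(A\cup[i])\cup(B\cup[i+1])=A\cup[i+1]$ — exactly the identities already verified in the proof of Lemma~\ref{lem:simpleliftA2}. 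The cleanup loop is identical in spirit, with the decompose/monotonicity bundle matching the $\mu$-moves when $i+1\in Y_j$ and the submodularity step matching the $\sigma$-move otherwise.

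The crux is the submodularity cases, where Algorithm~\ref{algo:delta:to:sigma:mu} performs a four-node $\sigma$-move while Algorithm~\ref{algo:delta:to:ps} uses a two-step bundle: a submodularity that produces a transient term $h(I\cup J\mid J)$ tracked by no $\mu$, immediately consumed by a composition. One must (i) use the set arithmetic above to identify that transient term as precisely $h(A\cup[i+1]\mid B\cup[i+1])$, whose coefficient therefore returns to its $\mu$-dictated value by the end of the edge; (ii) verify that the combined net change from the $\sigma$-increase and the paired $\mu$-decrease to the four excesses at $A\cup[i]$, $B\cup[i]$, $A\cup[i+1]$, $B\cup[i+1]$ agrees with the net change the bundle makes to the corresponding $h(\cdot\mid\emptyset)$ coefficients (the changes at $A\cup[i]$ and $B\cup[i]$ cancelling to zero); and (iii) separately establish non-negativity at the intra-edge steps — which the dictionary does not guarantee since it is only asserted at edge boundaries — by showing that whenever a step removes weight $\epsilon$ from a term, that term currently carries coefficient at least $\epsilon$: for $\mu$-type terms this is the non-negativity of $\bm\mu$ maintained throughout Algorithm~\ref{algo:delta:to:sigma:mu}, and for $h(\cdot\mid\emptyset)$-type terms it follows from the flow-path structure, since a path routing $\epsilon$ units into a vertex deposits $\epsilon$ units of excess there for the subsequent compose step to consume.
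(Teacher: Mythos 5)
Your proposal is correct and mirrors the paper's proof almost exactly: the paper also proceeds by induction, running Algorithms~\ref{algo:delta:to:sigma:mu} and~\ref{algo:delta:to:ps} in lock step and checking, edge by edge, that each $\mu$-move (decrease/increase) and each $\sigma$-move corresponds under the dictionary to the coefficient change made by the matching compose/decompose/submodularity/monotonicity step; the forward, backward, and cleanup passes are handled case by case just as you do. The one place you go beyond the paper is point~(iii) of your ``crux'' paragraph, where you note that the dictionary alone only certifies non-negativity at edge boundaries, so intra-edge non-negativity of the intermediate terms (in particular the transient $h(A\cup[i+1]\mid B\cup[i+1])$ produced by a submodularity step and immediately consumed by the following compose) must be argued separately from the non-negativity of $\bm\mu$ and the flow-path structure; the paper's proof passes over this silently, so your observation is a genuine, if small, strengthening rather than a different method.
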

\begin{proof}
    We prove the lemma by induction. Initially, when we initialize $\bm \mu$ in
    line~\ref{line:initialize:mu} of Algorithm~\ref{algo:delta:to:sigma:mu}, the invariants
    hold trivially. We verify that the invariant holds after each proof step. Please also
    refer to the proof of Lemma~\ref{lem:simpleliftA2} as we need to run the two proofs in
    parallel.

    In the forward path loop, we traverse edges $(A\cup [i],B \cup [i])$ of $P[i]$ from
    $\emptyset$ to $\{i+1\}$. When $A \cup [i] \subsetneq B \cup [i]$, in
    Algorithm~\ref{algo:delta:to:sigma:mu} we decrease $\mu_{A \cup [i], B \cup [i]}$ by
    $\epsilon$, increase the excess at $B\cup [i]$ by $\epsilon$, reduce the excess at
    $A\cup [i]$ by $\epsilon$, which correspond precisely to $\epsilon$-composing $h(A\cup
    [i])+ h(B\cup[i]  \mid  A\cup [i]) \to h(B\cup [i])$ in
    Algorithm~\ref{algo:delta:to:ps}. The case when $B \cup [i] \subsetneq A \cup [i]$ is
    the converse.

    In the backward path loop, we traverse edges $(A\cup[i+1],B\cup[i+1])$ of $P[i+1]$ from
    ${i+1}$ back to $\emptyset$. For $A \cup [i+1] \subsetneq B \cup [i+1]$, in
    Algorithm~\ref{algo:delta:to:sigma:mu} we increase $\mu_{A \cup [i+1], B \cup [i+1]}$,
    decrease the excess at $B\cup [i+1]$, and increase the excess at $A\cup [i+1]$ by
    $\epsilon$. These correspond precisely to $\epsilon$-decomposing $h(B \cup [i+1]) \to
    h(A\cup [i+1])+ h(B\cup[i+1]  \mid  A\cup [i+1])$ in Algorithm~\ref{algo:delta:to:ps}.
    When $B \cup [i+1] \subsetneq A \cup [i+1]$, there are two cases. For $i+1 \in A$,
    Algorithm~\ref{algo:delta:to:sigma:mu} decreases $\mu_{B \cup [i], A \cup [i]}$
    increases $\mu_{B \cup [i],B \cup [i+1]}$, reduces the excess at $B \cup [i+1]$, and
    increases the excess at $A \cup [i+1]$ by $\epsilon$. This corresponds to the
    $\epsilon$-decomposing step $h(B \cup [i+1]) \to h(B\cup [i])+ h(B\cup[i+1]  \mid  B\cup
    [i])$ and the $\epsilon$-composing step $h(B\cup [i])+ h(A\cup[i]  \mid  B\cup [i]) \to
    h(A\cup [i]) = h(A \cup [i+1])$ in Algorithm~\ref{algo:delta:to:ps}. For $i+1 \notin A$,
    Algorithm~\ref{algo:delta:to:sigma:mu} increases $\sigma_{A \cup [i],B \cup [i+1]}$,
    reduces $\mu_{B\cup[i],A\cup[i]}$, reduces the excess at $B \cup [i+1]$, and increases
    the excess at $A \cup [i+1]$ by $\epsilon$. This corresponds to the
    $\epsilon$-submodularity and $\epsilon$-compose steps as shown in
    lines~\ref{line:submod:1} and ~\ref{line:comp:3} of Algorithm~\ref{algo:delta:to:ps}.

    Lastly, the cleanup loop is self-explanatory, designed to maintain the invariants.
\end{proof}

Note that as the graph $G$ has $O(k + n)$ vertices and $O(k n)$ edges, one can assume
without loss of generality that the cardinality of  $\calP^{i+1}$ is $O(kn)$, and the length
of each path   $P \in \calP^{t}$ is $O(k+n)$. As each edge $e \in P$ introduces $O(1)$ steps
to the proof sequence when $P$ is processed, and there are at most $n$ choices for the sink
$t$, we can conclude that the length of the resulting proof sequence is $O( (k+n) k n^2)$,
or equivalently $O(  k^2 n^2 + kn^3)$.

\begin{ex}[Running example]
Algorithm~\ref{algo:delta:to:ps} yields the following proof sequence:
\begin{eqnarray*}
&&h(\emptyset)+ h(ab| \emptyset) + h(ac| \emptyset)+ h(ad | a)\\
&=&  h(ab) +h(ac|\emptyset)+ h(ad | a)  \qquad \mbox{[Forward pass $1$-Compose: $h(ab) = h(\emptyset) + h(ab|\emptyset)$]}\\
&=& h(a) + h(ab|a)  +h(ac|\emptyset)+ h(ad | a)  \qquad \mbox{[Forward pass $1$-Decompose: $h(ab) = h(a) + h(ab|a)$]}\\
&=& h(ab) + h(ac| \emptyset)+ h(ad | a)  \qquad \mbox{[Backward pass $1$-Compose: $ h(a) + h(ab|a) = h(ab) $]}\\
&=& h(a) + h(ab|a) + h(ac| \emptyset)+ h(ad | a)  \qquad \mbox{[Backward pass $1$-Decompose: $ h(ab) = h(a)+ h(ab|a) $]}\\
%
&\geq& h(a) + h(ab|a) +h(ac|a) + h(a|\emptyset)+ h(ad | a)  \;\;\;\mbox{[Clean-up: $1$-Decompose: $  h(ac|\emptyset) \geq h(ac|a)+  h(a|\emptyset)$]}\\
&\geq& h(a) + h(ab|a) +h(ac|a)+ h(ad | a)  \qquad \mbox{[Clean-up: $1$-Monotonicity: $  h(a|\emptyset) \geq 0 $]}\\
&=& h(ab) +h(ac|a)+ h(ad | a)  \qquad \mbox{[Forward pass $1$-Compose: $h(ab) = h(a) + h(ab|a)$]}\\
&\geq& h(ab) +h(abc|ab)+ h(ad | a)  \qquad  \mbox{[Clean-up: $1$-Submodularity: $  h(ac|a) \geq h(abc|ab) $]}\\
&\geq& h(ab) + h(abc|ab)+ h(abd | ab)  \qquad  \mbox{[Clean-up: $1$-Submodularity: $  h(ad|a) \geq h(abd|ab) $]}\\
&=& h(abc)+  h(abd | ab)  \qquad  \mbox{[Forward pass $1$-Compose: $h(abc) = h(ab) + h(abc|ab)$]}\\
&\geq& h(abc)+  h(abcd | abc)  \qquad \mbox{[Clean-up: $1$-Submodularity: $  h(abd | ab)\geq h(abcd|abc) $]}\\
&\geq& h(abcd) \qquad \mbox{[Forward pass $1$-Compose: $h(abcd) = h(abc) + h(abcd|abc)$]}\\
\end{eqnarray*}

\end{ex}

\section{Lower bounds}
\label{sec:lower-bound}
In this section we present three classes of seemingly easy instances which turn out to be as
hard as general instances. Lemmas~\ref{lmm:acyclicplussimple-main},
~\ref{lmm:2-3-reduction-main}, and ~\ref{lmm:simple-fd-reduction-main} combined would imply
Theorem~\ref{thm:not:much:better}.

\begin{lmm}
\label{lmm:acyclicplussimple-main}
For the problem of computing the polymatroid bound, an arbitrary instance can be converted
into another instance in polynomial time without changing the bound, where the degree
constraints is the union of a set of acyclic degree constraints and a set of simple degree
constraints (in fact functional dependencies) Further, each FD contains exactly two
variables.
\end{lmm}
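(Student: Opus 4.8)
The plan is to keep the original variables as a skeleton and, for each original degree constraint $(X_i,Y_i,c_i)$, attach a small gadget built from fresh variables. The gadget introduces one variable $z_i$ intended to be information-theoretically equivalent to the \emph{set} $X_i$, and then re-expresses the original constraint as the \emph{simple} degree constraint $B_i:=(\{z_i\},(Y_i\setminus X_i)\cup\{z_i\},c_i)$. The reason one cannot just assert $z_i\to X_i$ and stop is that making $z_i$ equivalent to $X_i$ also requires $X_i\to z_i$: the direction $z_i\to X_i$ breaks into two-variable FDs, but $X_i\to z_i$ is a genuine multi-variable FD, so it must live in the ``acyclic'' bucket, and the whole point is to arrange this so that the acyclic bucket really is acyclic.

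Concretely, for each $i$ and each $v\in X_i$ I would add a fresh copy $x_{i,v}$ and put into the FD bucket the two-variable FDs $v\to x_{i,v}$, $x_{i,v}\to v$ (so $x_{i,v}$ is interchangeable with $v$) and $z_i\to x_{i,v}$ (which, composed with $x_{i,v}\to v$, yields $z_i\to v$, hence $z_i\to X_i$). Into the acyclic bucket I would put $A_i:=(\{x_{i,v}:v\in X_i\},\{x_{i,v}:v\in X_i\}\cup\{z_i\},0)$ — which, because the copies are interchangeable with the originals, enforces $X_i\to z_i$ — together with $B_i$. Combining $A_i$ with the $z_i$-FDs makes $z_i$ interchangeable with $X_i$ in every context, so $B_i$ is exactly equivalent to $h(Y_i)-h(X_i)\le c_i$. (If $X_i=\emptyset$ then $A_i$ degenerates to the cardinality constraint $h(z_i)\le 0$, still an acyclic degree constraint; every FD placed in the FD bucket has exactly two variables.)

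The key structural check — and the step I expect to be the main obstacle — is that the acyclic bucket is acyclic. In each $A_i$ and each $B_i$ the only determiners (left-hand-side variables) are the fresh copies $x_{i,v}$ and the fresh variables $z_i$, and the original variables of $[n]$ appear only on right-hand sides. Hence in the dependency graph of the acyclic bucket every edge goes from a copy to some $z_i$ or from some $z_i$ to an original variable, giving a topological order $(\text{copies})<(z_i\text{'s})<([n])$. This is precisely why the copy variables are needed: had $A_i$ been written directly as $(X_i,X_i\cup\{z_i\},0)$, the original variables would reacquire outgoing edges and any directed cycle of the original instance would reappear among the $z_i$'s. All cyclicity is instead absorbed into the FD bucket, which is permitted to be cyclic.

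Finally I would check that the polymatroid bound is unchanged, via two inequalities. For ``$\ge$'', given an optimal polymatroid $h$ on $[n]$, define $h'$ on the enlarged variable set by $h'(S):=h(\phi(S))$, where $\phi$ replaces each $x_{i,v}$ by $v$ and each $z_i$ by the set $X_i$; composing a polymatroid with such an element-identification map is again a polymatroid, all gadget constraints hold by construction, and $h'$ of the full new variable set equals $h([n])$. For ``$\le$'', given an optimal polymatroid $h'$ on the new variable set, derive from the FD bucket and the $A_i$'s the identities $h'(\{z_i\}\cup S)=h'(X_i\cup S)$ and $h'(\{x_{i,v}\}\cup S)=h'(\{v\}\cup S)$ for all $S$; these show simultaneously that $h'$ of the full new set equals $h'([n])$ and that the restriction of $h'$ to $2^{[n]}$ satisfies every original constraint (invoking $B_i$ to bound $h(Y_i)-h(X_i)$). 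Since the number of new variables and constraints is linear in $\sum_i|Y_i|$, the reduction is polynomial, which completes the plan.
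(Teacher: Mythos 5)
Your proposal is correct, and it reaches the same conclusion as the paper, but via a genuinely different gadget. You attach, to each original constraint $(X_i,Y_i,c_i)$, a proxy variable $z_i$ for the left-hand side together with per-constraint copies $x_{i,v}$ of the determiners; the two-variable FDs go into the simple bucket, and the multi-variable pieces $A_i$ (enforcing $X_i\to z_i$) and the rewritten constraint $B_i$ go into the acyclic bucket, with acyclicity certified by the layering (copies) $<$ ($z_i$'s) $<$ ($[n]$). The paper instead uses a single global bipartite duplication: every original variable $i$ gets exactly two copies $x_i$ and $y_i$, tied together by the two-variable simple FDs $(\{x_i\},\{x_i,y_i\},0)$ and $(\{y_i\},\{x_i,y_i\},0)$; each constraint $(A,B,c)$ is rewritten as $(A',B',c)$ with $A'\subseteq\{x_1,\dots,x_n\}$ and $B'\subseteq\{y_1,\dots,y_n\}$, so all edges in the dependency graph go from the $x$-side to the $y$-side and acyclicity is immediate. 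The bound-preservation argument is structurally the same in both: a union-preserving element-to-set map $\phi$ makes $h\circ\phi$ a polymatroid for the forward direction, and the ``copies are interchangeable'' identities let you drop the fresh variables for the backward direction — the paper isolates these as Claims~\ref{claim:g-properties} and~\ref{claim:h-properties}. The paper's construction is more economical (exactly $2n$ variables, no proxy variable $z_i$, and no case analysis for $X_i=\emptyset$) and the acyclicity check is trivial (bipartite). Your gadget is larger and per-constraint, but it is sound and yields the same classification; one small caveat is that your $A_i$'s are themselves FDs on more than two variables, so if the lemma's ``each FD contains exactly two variables'' is read as applying to every FD in the produced instance (rather than only the simple ones, which is how the paper's construction reads), your instance would not literally satisfy it when $|X_i|\ge 2$, whereas the paper places the multi-variable constraints in the acyclic bucket with their original (possibly nonzero) $c$-values so only the two-variable simple FDs carry $c=0$.
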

\begin{proof}
We first describe the reduction. Suppose we are given an arbitrary instance $I$ consisting
of the universe $U := [n]$ and a set of degree constraints, $\mathrm{DC}$. The new instance
$I'$ has $U' := \bigcup_{i \in [n]}\{x_i, y_i\}$ as universe where $x_i$ and $y_i$ are
distinct copies of $i$ and the following set $\mathrm{DC'}$ of constraints. For each $i \in
[n]$, we first add the following {\em simple} functional dependencies to $\mathrm{DC'}$:
\begin{align*}
    (\{x_i\}, \{x_i, y_i\}, 0) && (\{y_i\}, \{x_i, y_i\}, 0)
\end{align*}
Then for each constraint $(A, B, c) \in \mathrm{DC}$, we create a new constraint
$(A', B', c)$ by replacing each $i \in A$ with $x_i$ and each $j \in B$ with $y_j$, and add
it to $\mathrm{DC'}$. By construction these degree constraints are from $\{x_1, x_2, \cdots,
x_n\}$ to $\{y_1, y_2, \cdots, y_n\}$ and therefore are {\em acyclic}.

The following simple observation states that $x_i$ and $y_i$ are indistinguishable in
computing the polymatroid bound for $I'$.

\begin{claim}
\label{claim:g-properties}
Let $g$ be a submodular function that satisfies $\mathrm{DC'}$  of $I'$. For any $i \in [n]$
and any $B \subseteq U'$ such that $x_i, y_i \not \in B$, we have $g(B\cup \{x_i\}) = g(B
\cup \{y_i\}) = g(B \cup \{x_i,y_i\}).$
\end{claim}
By monotonicity and submodularity of $g$, and an FD in $\mathrm{DC'}$
involving $x_i, y_i$, we have:
\begin{align*}
0 \leq g(B\cup \{x_i,y_i\}) - g(B\cup \{x_i\}) \leq g(\{x_i,y_i\}) - g(\{x_i\}) \leq 0.
\end{align*}
This proves $g(B\cup \{x_i,y_i\}) = g(B\cup \{x_i\})$. The other equality $g(B\cup
\{x_i,y_i\}) = g(B\cup \{y_i\})$ is established analogously.

Henceforth, we will show the following to complete the proof of
Lemma~\ref{lmm:acyclicplussimple-main}:
\begin{itemize}
\item[$\Rightarrow$] Given a monotone submodular function $f$ achieving the optimum
polymatroid bound for $I$, we can create a monotone submodular function $g$ for $I'$ such
that $f(U) = g(U')$.
\item[$\Leftarrow$] Conversely, given a monotone submodular function $g$ achieving the
optimum polymatroid bound for $I'$, we can create a monotone submodular function $f$ for $I$
such that $f(U) = g(U')$.
\end{itemize}

We start with the forward direction.  Define $h : 2^{U'} \rightarrow 2^{U}$ as follows: for
any $B \subseteq U'$, we have
\[ h(B) := \{i \in [n] \mid x_i \in B \text{ or } y_i \in B\}.  \]
and set $g(B) := f(h(B))$.

\begin{claim}
\label{claim:h-properties}
For any $A, B \subseteq U'$, $h(A) \cup h(B) = h(A \cup B)$ and $h(A) \cap h(B) \supseteq
h(A \cap B)$.
\end{claim}
The first statement follows because  if $x_i$ or $y_i$ is in any of $A$ and $B$, it is also
in $A\cup B$. The second statement follows because if $i  \in h(A \cap B)$, we have $x_i \in
A \cap B$ or $y_i \in A \cap B$ and in both cases, we have $i \in h(A) \cap h(B)$.

At the universe $U$ and $U'$: by definition we have $g(U') = f(h(U')) = f(U)$. Therefore, we
only need to show $g$ is monotone and submodular. Showing monotonicity is trivial and is
left as an easy exercise. We can show that $g$ is submodular as follows. For any $A, B
\subseteq U'$, we have,
\begin{align*}
    g(A) + g(B) &= f(h(A)) + f(h(B)) \\
    &\geq f(h(A)\cup h(B)) + f(h(A)\cap h(B)) \\
    &\geq f(h(A \cup B)) + f(h(A \cap B)) \\
    &= g(A\cup B) + g(A\cap B),
\end{align*}
where the first inequality follows from $f$'s submodularity and the second from $f$'s
monotonicity and Claim~\ref{claim:h-properties}. Thus we have shown the first direction.

To show the backward direction, define
\[
f(A) := g(\{x_i \mid i \in A \})
\]
By definition, we have $f(U) = g(\{x_1, x_2, \cdots, x_n\})$. Further, by repeatedly
applying Claim~\ref{claim:g-properties}, we have $g(\{x_1, x_2, \cdots, x_n\}) = g(U')$. Thus
we have shown $f(U) = g(U')$. Further,  $f$ is essentially identical to $g$ restricted to
$\{x_1, x_2, \ldots, x_n\}$. Thus, $f$ inherits $g$'s monotonicity and sumodularity.

This completes the proof of Lemma~\ref{lmm:acyclicplussimple-main}.
\end{proof}

\begin{lmm}
\label{lmm:2-3-reduction-main}
There is a polynomial-time reduction from a general instance to an instance preserving the
polymatroid bound, where for each degree constraint $(X, Y, c)$ we have $|Y| \leq 3$ and
$|X| \leq 2$. Further, the new instance satisfies the following:
\begin{itemize}
\item If $|Y| = 3$, then $|X|=2$ and $c = 0$.
\item If $|Y| = 2$, then $|X|=1$.
\end{itemize}
\end{lmm}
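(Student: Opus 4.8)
The plan is to reduce an arbitrary degree constraint $(X,Y,c)$ — in which $X$ and $Y$ may be arbitrarily large — to a bounded collection of constraints of the required shape by introducing, per constraint, a chain of fresh \emph{super-variables} that encode the sets $X$ and $Y$ using only $2$- and $3$-variable FDs. The core building block is the following gadget. Given a set $S=\{u_1,\dots,u_m\}$ of variables already present, introduce fresh variables $z_2,\dots,z_m$ (with the convention $z_1:=u_1$) and, for each $j\in\{2,\dots,m\}$, the three FDs
\[
(\{z_{j-1},u_j\},\{z_{j-1},u_j,z_j\},0),\qquad (\{z_j\},\{z_j,z_{j-1}\},0),\qquad (\{z_j\},\{z_j,u_j\},0).
\]
The first has $|Y|=3$, $|X|=2$, $c=0$, and the other two have $|Y|=2$, $|X|=1$, so all are of the allowed shape. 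I would prove that any polymatroid $h$ satisfying these FDs satisfies $h(\{z_m\}\cup T)=h(S\cup T)$ for every $T$ disjoint from $\{z_2,\dots,z_m\}$, by induction on $j$ establishing $h(\{z_j\}\cup T)=h(\{u_1,\dots,u_j\}\cup T)$. The key technical observation is that an FD $(A,B,0)$ forces $h(B\cup T)=h(A\cup T)$ for every $T$ disjoint from $B$ (lift by submodularity, then invoke monotonicity); combining the three level-$j$ FDs with the inductive hypothesis applied to suitably shifted sets $T$ yields both inequalities. Symmetrically, chaining the first FD over all levels shows $z_j$ is functionally determined by $\{u_1,\dots,u_j\}$, i.e. $h(\{u_1,\dots,u_j,z_j\})=h(\{u_1,\dots,u_j\})$.

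The reduction itself: for each $(X_i,Y_i,c_i)\in\dc$, order $Y_i$ so that the elements of $X_i$ come first, run the gadget on $Y_i$, let $z$ be the super-variable produced for the prefix $X_i$ (or $z\in X_i$ itself when $|X_i|\le 1$, or ``nothing'' when $X_i=\emptyset$) and $w$ the super-variable for all of $Y_i$, and replace $(X_i,Y_i,c_i)$ by $(\{z\},\{z,w\},c_i)$ (resp.\ $(\emptyset,\{w\},c_i)$ when $X_i=\emptyset$). This replacement constraint has $|Y|=2$, $|X|=1$ (resp.\ $|Y|=1$), hence is of the allowed shape \emph{regardless of the value of $c_i$}, and the gadget adds only $O(|Y_i|)$ fresh variables and $O(|Y_i|)$ FDs of the allowed shape; summing over the $k$ constraints gives a polynomial blowup. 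Original variables are retained (we may assume every variable occurs in some constraint, since otherwise the polymatroid bound of the original instance is already unbounded).

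Correctness of the reduction follows the two-direction template of Lemma~\ref{lmm:acyclicplussimple-main}. For $\ge$: take an optimal polymatroid $f$ for the original instance and set $g:=f\circ\pi$, where $\pi$ maps a set $B$ of new-instance variables to $\bigcup_{v\in B}\rho(v)$, with $\rho(v)$ the original subset represented by $v$ and $\rho$ the identity on original variables. Since $\pi$ preserves unions and satisfies $\pi(A\cap B)\subseteq\pi(A)\cap\pi(B)$, the computation of Claim~\ref{claim:h-properties} shows $g$ is a polymatroid; every gadget FD maps to the same set on both sides under $\pi$, so $g$ satisfies it; $g(\{z\})=f(X_i)$ and $g(\{z,w\})=f(X_i\cup Y_i)=f(Y_i)$ give the replacement constraint; and $\pi$ is onto $[n]$, so $g(U')=f([n])$. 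For $\le$: take an optimal polymatroid $g$ for the new instance and restrict it to the original ground set to obtain a polymatroid $f$. By the ``functionally determined'' half of the gadget claim, used exactly as Claim~\ref{claim:g-properties} is used, $g(U')=g([n])=f([n])$; and the super-variable property together with the chain FDs $w\to z$ and the replacement constraint give $f(Y_i)-f(X_i)=g(\{w\})-g(\{z\})=g(\{z,w\})-g(\{z\})\le c_i$. Hence the two polymatroid bounds coincide, and together with Lemmas~\ref{lmm:acyclicplussimple-main} and the companion reduction this yields Theorem~\ref{thm:not:much:better}.

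The main obstacle is the super-variable gadget: showing that a chain of only $2$- and $3$-variable FDs suffices to force $h(\{z_m\}\cup T)=h(S\cup T)$ for \emph{all} disjoint $T$, not merely for $T=\emptyset$. This requires repeatedly ``lifting'' FD-equalities by an arbitrary disjoint set through submodularity, and the induction must be arranged so that at each level both the ``$z_j$ contains $z_{j-1}\cup\{u_j\}$'' direction and the reverse direction are simultaneously available; pinning down the right order of the three FDs and the right shifted sets $T$ is where essentially all the care lies. Once that gadget lemma is in place, the rest is bookkeeping that parallels Lemma~\ref{lmm:acyclicplussimple-main} almost verbatim.
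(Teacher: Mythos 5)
Your proposal is correct and uses essentially the same construction as the paper: the three-FD gadget that merges a pair $\{x,y\}$ into a fresh $z$ via $(\{x,y\},\{x,y,z\},0)$, $(\{z\},\{z,x\},0)$, $(\{z\},\{z,y\},0)$ is identical, and the correctness argument (lift FD-equalities through submodularity and monotonicity, transfer polymatroids between instances by a union-preserving map in one direction and by restriction in the other) is the paper's Claims~\ref{claim:1-2-equal} and~\ref{claim:h-properties} recast. The only organizational difference is that you build the full prefix chain for each constraint in one shot, whereas the paper merges one pair at a time across a sequence of iterations and argues termination with a potential function; the two unwind to the same collection of consistency FDs.
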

\newcommand{\opt}{\textsf{opt}}
\begin{proof}[Proof of Lemma~\ref{lmm:2-3-reduction-main}]
The high-level idea is to repeatedly replace two variables with a new variable in a degree
constraint. We first discuss how to choose two variables to combine. Assume there is a
degree constraint $(X, Y, c)$ where $|X| > 2$. Then we combine an arbitrary pair of elements
in $X$. If $|X| \leq 2$ for all degree constraints, and there is a degree constraint where
$|Y| > 3$, we combine arbitrary two variables in $Y \setminus X$. If there is a degree
constraint $(X, Y, c)$ where $|X| = 2$, $|Y| = 3$ and $c > 0$, we combine the two variables
in $X$. It is important to note that we make this replacement in only one degree constraint
in each iteration.

Assume that we are to combine variables $x,y \in [n]$ into a new variable $z \notin [n]$
(say $z=n+1$) in a degree constraint $(X, Y, c) \in \mathrm{DC}$. Then, we create $(X', Y',
c)$ and add it to $\mathrm{DC'}$ where
\begin{align*}
    (X', Y', c) &:=
    \begin{cases}
        (X \setminus \{x,y\} \cup \{z\}, Y \setminus \{x,y\} \cup \{z\}, c)
            & \text{if } \{x,y\} \subseteq X \subset Y \\
        (X, Y \setminus \{x,y\} \cup \{z\}, c)
            & \text{if } \{x,y\} \subseteq Y \setminus X
    \end{cases}
\end{align*}
Further, we add functional dependencies $(\{x,y\}, \{z, x,y\},0 )$, $(\{z\}, \{z,
x\},0)$ and $(\{z\}, \{z,y\},0)$ to $\mathrm{DC'}$, which we call {\em consistency
constraints}. Intuitively, consistency constraints is to enforce the fact that variable $z$
and the tuple of variables $(x,y)$ are equivalent.
The other constraints are called {\em non-trivial constraints}.

We show that the reduction process terminates by showing that each iteration reduces
a potential. Define $m(X, Y,
c) := |X| + |Y|$ for a non-trivial constraint $(X, Y, c)$. The potential is defined as the
sum of $m(X, Y, c)$ over all non-trivial constraints.  Observe that in each iteration,
either $m(X, Y, c) > m(X', Y', c)$, or $|X'| = 1$ and $|Y'| = 2$.  In the latter case, the
resulting non-trivial constraint $(X', Y', c)$ doesn't change in the subsequent iterations.
In the former case the potential decreases. Further, initially the potential is at most $2n
|\mathrm{DC}|$ and the number of non-trivial constraints never increases, where
$\mathrm{DC}$ is the set of degree constraints initially given. Therefore, the reduction
terminates in a polynomial number of iterations. It is straightforward to see that we
only have degree constraints of the forms that are stated in
Lemma~\ref{lmm:2-3-reduction-main} at the end of the reduction.

Next, we prove that the reduction preserves the polymatroid bound. We consider one iteration
where a non-trivial constraint $(X, Y, c)$ is replaced according to the reduction described
above. Let $\opt$ and $\opt'$ be the polymatroid bounds before and after performing the
iteration respectively. Let $\mathrm{DC}$ and $\mathrm{DC'}$ be the sets of the degree
constraints before and after the iteration respectively.

We first show $\opt \geq \opt'$. Let $g: \{z\} \cup [n] \rightarrow [0, \infty)$ be a
polymatroid function that achieves $\opt'$ subject to $\mathrm{DC'}$.  Define $f:
[n] \rightarrow [0, \infty)$ such that $f(A)  = g(A)$ for all $A \subseteq [n]$. It is
immediate that $f$ is a polymatroid because it is a restriction of $f$ onto $[n]$.
We show below that it satisfies the degree constraints in $\mathrm{DC}$ and that
$f([n]) = g([n] \cup \{z\}) = \opt'$.
We start with a claim.

\begin{claim}
    \label{claim:1-2-equal}
	For any set $Z$, $g(Z \cup \{z\} )= g(Z \cup \{z, x,y\}) = g(Z \cup \{x,y\})$.
\end{claim}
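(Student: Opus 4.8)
The plan is to prove the three-way equality by a two-sided sandwich in which monotonicity of $g$ supplies the ``$\leq$'' direction and submodularity combined with the consistency constraints supplies the ``$\geq$'' direction. Recall that, being a polymatroid satisfying $\mathrm{DC'}$, $g$ is monotone and submodular and in particular obeys the three functional dependencies $(\{z\},\{z,x\},0)$, $(\{z\},\{z,y\},0)$, $(\{x,y\},\{z,x,y\},0)$, which as constraints on $g$ say $g(\{z,x\}) \leq g(\{z\})$, $g(\{z,y\}) \leq g(\{z\})$, and $g(\{z,x,y\}) \leq g(\{x,y\})$. The one tool I would use repeatedly is the diminishing-returns form of submodularity: for any $A \subseteq B$ and any element $e$, $g(B \cup \{e\}) - g(B) \leq g(A \cup \{e\}) - g(A)$, which is the submodularity inequality $g(I \mid I\cap J) \geq g(I\cup J \mid J)$ instantiated at $I = A \cup \{e\}$, $J = B$ (using monotonicity to dispatch the degenerate case $e \in B$).

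For the first equality $g(Z \cup \{z\}) = g(Z \cup \{z,x,y\})$ I would proceed in two steps. First, diminishing returns with $A = \{z\}$, $B = Z \cup \{z\}$, $e = x$ gives $g(Z \cup \{z,x\}) - g(Z \cup \{z\}) \leq g(\{z,x\}) - g(\{z\}) \leq 0$ by the consistency constraint, and since monotonicity gives the reverse inequality we get $g(Z \cup \{z,x\}) = g(Z \cup \{z\})$. Applying the same step with $A = \{z\}$, $B = Z \cup \{z,x\}$, $e = y$ and the consistency constraint $g(\{z,y\}) \leq g(\{z\})$ gives $g(Z \cup \{z,x,y\}) = g(Z \cup \{z,x\})$; chaining the two equalities yields the claim.

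For the second equality $g(Z \cup \{x,y\}) = g(Z \cup \{z,x,y\})$, monotonicity gives ``$\leq$'', and for ``$\geq$'' I would apply diminishing returns with $A = \{x,y\}$, $B = Z \cup \{x,y\}$, $e = z$ to get $g(Z \cup \{x,y,z\}) - g(Z \cup \{x,y\}) \leq g(\{z,x,y\}) - g(\{x,y\}) \leq 0$ using the third consistency constraint; together with monotonicity this finishes the proof. I do not expect a genuine obstacle here — every step is a one-line sandwich — and the argument as written makes no use of disjointness of $Z$ from $\{x,y,z\}$, so the degenerate cases (e.g.\ $z \in Z$, where all three sets coincide) are handled automatically.
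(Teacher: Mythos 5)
Your proof is correct, and it takes a somewhat different (and arguably cleaner) route than the paper's. The paper first establishes a chain of \emph{pointwise} equalities among small sets --- namely $g(\{z,x,y\}) = g(\{x,y\})$ and $g(\{z\}) = g(\{z,x\}) = g(\{z,y\})$ from the consistency constraints plus monotonicity, and then the additional equality $g(\{z\}) = g(\{z,x,y\})$, which requires a separate submodularity argument
$g(\{x,y,z\}) \leq g(\{z,x\}) + g(\{z,y\}) - g(\{z\})$. It then proves a general propagation lemma (if $g(A) = g(A\cup B)$ then $g(Z\cup A) = g(Z\cup A\cup B)$ for all $Z$, via two applications of submodularity/monotonicity) and instantiates it twice. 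Your argument instead invokes the single-element diminishing-returns form of submodularity directly, three times, each time pitting the marginal gain of adding one of $x,y,z$ to $Z\cup(\cdots)$ against the marginal implicit in one consistency constraint. This bypasses both the intermediate derivation of $g(\{z\}) = g(\{z,x,y\})$ and the explicit propagation lemma. Each step of yours is a clean two-sided sandwich and you correctly note that monotonicity handles the degenerate case where the element you add is already present, so the absence of any disjointness hypothesis on $Z$ causes no trouble. Both proofs are valid; yours is shorter and more uniform.
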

From the consistency constraint
$(\{x,y\}, \{z, x,y\}, 0)$ and the monotonicity of $g$, we have
$0 \leq g(\{z, x,y\}) - g(\{x,y\}) \leq 0$, which means
$g(\{x,y,z\}) = g(\{x,y\})$.
Similarly, from the consistency constraints $(\{z\}, \{z, x\}, 0)$ and $(\{z\}, \{z,y\}, 0)$
and the monotonicity of $g$, we have $g(\{z\}) = g(\{z, x\}) = g(\{z,y\})$.
Then, from $g$'s submodularity and monotonicity, we have
\[
g(\{x,y,z\}) \leq g(\{z,x\}) + g(\{z,y\}) - g(\{z\}) = g(\{z\}) \leq g(\{x,y,z\}),
\]
This implies $g(\{z\}) = g(\{x,y,z\})$.

Now, given two sets $A, B$ such that $g(A) = g(A \cup B)$, then for any set $Z$ we have
\[
g(Z \cup A \cup B)
    \leq g(Z \cup A) + g(A \cup B) - g(A \cup (Z \cap B))
    \leq g(Z \cup A) + g(A \cup B) - g(A)
    = g(Z \cup A)
    \leq g(Z \cup A \cup B).
\]
Therefore, $g(Z \cup A) = g(Z \cup A \cup B)$. Applying this fact with
$A = \{z\}$ and $B = \{x,y,z\}$, we have $g(Z \cup \{z\}) = g(Z \cup \{x,y,z\})$.
Similarly, $g(Z \cup \{x,y\}) = g(Z \cup \{x,y,z\})$.
Claim~\ref{claim:1-2-equal} is thus proved.

We now check if $f$ satisfies $\mathrm{DC}$. Because we only replaced $(X, Y, c) \in
\mathrm{DC}$, we only need to show that $f$ satisfies it.
Note that, if $\{x,y\} \subseteq Z$, then from Claim~\ref{claim:1-2-equal}, we have
\[
f(Z) = g(Z) = g(Z \cup \{x,y\}) = g(Z \cup \{z\})
= g( (Z \setminus \{x,y\}) \cup \{x, y, z\} )
= g( (Z \setminus \{x,y\}) \cup \{z\} )
\]
We need to consider two case:
\begin{itemize}
	\item When $\{x,y\}  \subseteq X \subseteq Y$. Then,
	\[ f(Y) - f(X)  = g( Y \cup \{z\} \setminus \{x,y\}) - g(X \cup \{z\} \setminus \{x,y\})
    = g(Y') - g(X') \leq c \]
    The second inequality follows from the fact that $g$ satisfies $\mathrm{DC'}$.
	\item When $\{x,y\}  \subseteq Y \setminus X$. In this case, $X' = X$ and $Y' = Y \cup
	\{z\} \setminus \{x,y\}$; thus
	\[ f(Y) - f(X)  = g( Y \cup \{z\} \setminus \{x,y\}) - g(X) = g(Y') - g(X') \leq c \]
\end{itemize}

Finally, $f([n]) = g([n]) = g([n] \cup \{z\}) = \opt'$ due to Claim~\ref{claim:1-2-equal}.
Since we have shown $f$ is a feasible solution for $\mathrm{DC}$, we have $\opt \geq
f([n])$. Thus, we have $\opt \geq \opt'$ as desired.

We now show $\opt \leq \opt'$. Given $f$ that achieves $\opt$ subject to $\mathrm{DC}$, we
construct $g: \{z\} \cup [n] \rightarrow [0, \infty)$ as follows:
\begin{equation}
	g(A) :=
	\begin{cases}
		f(A) & \mbox{if } z \not \in A \\
		f(A \setminus \{z\} \cup \{x,y\}) & \textnormal{otherwise}
 	\end{cases}
\end{equation}

We first verify that $g$ is monotone. Consider $A \subseteq B \subseteq \{z\} \cup [n]$. If
$z \not \in A$ and $z \not \in B$, or  $z \in A$ and $z \in B$, it is easy to see that is
the case. So, assume $z \not \in A$ but $z \in B$. By definition of $g$, it suffices show
$f(A) \leq f(B \setminus \{z\} \cup \{x,y\})$, which follows from $f$'s monotonicity:
Since $z \not \in A$ and $A \subseteq B$,
we have $A \subseteq B \setminus \{z\} \cup \{x, y\}$.

Secondly we show that $g$ is submodular. So, we want to show that $g(A) + g(B) \geq g(A \cup
B) + g(A \cap B)$ for all $A, B \subseteq \{z\} \cup [n]$.

\begin{itemize}
\item When $z \not \in A$ and $z \not \in B$. This case is trivial as $g$ will have the same
value as $f$ for all subsets we're considering.
\item When $z \in A$ and $z \in B$. We need to check if $f(A \setminus \{z\} \cup \{x,
	y\}) + f(B \setminus \{z\} \cup \{x,y\})  \geq f(A \cup B \setminus \{z\} \cup \{x,
	y\}) + f(A \cap B \setminus \{z\} \cup \{x,y\})$, which follows from $f$'s
	submodularity. More concretely,  we  set $A' = A \setminus \{z\} \cup \{x,y\}$ and
	$B' = B \setminus \{z\} \cup \{x,y\}$ and use $f(A') + f(B') \geq f(A' \cup B') +
	f(A' \cap B')$.
\item When $z \in A$ and $z \not \in B$ (this is symmetric to $z \not \in A$ and $z \in B$).
We have
\begin{align*}
    g(A) + g(B) &= f(A \setminus \{z\} \cup \{x,y\}) + f(B) \\
(\text{submodularity of $f$}) &\geq f((A \setminus \{z\} \cup \{x,y\}) \cup B) + f((A \setminus \{z\} \cup \{x,y\}) \cap B) \\
&= f(((A \cup B) \setminus \{z\}) \cup \{x,y\}) + f((A \setminus \{z\} \cup \{x,y\}) \cap B) \\
&= g(A \cup B) + f((A \cup \{x,y\}) \cap B) \\
(\text{monotonicity of $f$}) &\geq g(A \cup B) + f(A \cap B)\\
&= g(A \cup B) + g(A \cap B)
\end{align*}
\end{itemize}

Thirdly, we show that $g$ satisfies $\mathrm{DC'}$. Suppose we replaced a non-trivial
constraint $(X, Y, c)$ with $(X', Y', c)$. We show $g(Y') - g(X') \leq c$ by showing $f(Y)
= g(Y')$ and $f(X) = g(X')$. Both cases are symmetric, so we only show $f(X') = g(X)$. If
$z \not \in X'$, then clearly we have $g(X') = f(X)$ since $X' = X$. If $z \in X'$, then it
must be the case that $X'  = X \setminus \{x,y\} \cup \{z\}$. By definition of $g$, we
have $g(X') = f(X'  \setminus \{z\} \cup \{x,y\})  = f(X)$ since $X'  \setminus \{z\}
\cup \{x,y\} = X$.

Now we also need to check $g$ satisfies the consistency constraints we created. So we show
\begin{itemize}
\item $g(\{z, x,y\}) \leq g(\{z\})$. Note $g(\{z, x,y\}) = f(\{x,y\}) = g(\{z\})$
by definition of $g$. Due to $g$'s monotonicity we have already shown, we have $g(\{z,
x\}) \leq g(\{z\})$ and $g(\{z,y\}) \leq g(\{z\})$.
\item $g(\{z, x,y\}) \leq g(\{x,y\})$. Both sides are equal to $ f(\{x,y\})$ by
definition of $g$.
\end{itemize}

Finally, we have $g(\{z\} \cup [n]) = f([n])$. Since $g$ is a monotone submodular function
satisfying $\mathrm{DC}$, we have $\opt' \geq \opt$ as desired.

This completes the proof of Lemma~\ref{lmm:2-3-reduction-main}.
\end{proof}

\begin{lmm}
\label{lmm:simple-fd-reduction-main}
There is a polynomial-time reduction from a general instance to an instance consisting of a
set of simple degree constraints and a set of functional dependency constraints, while
preserving the polymatroid bound.
\end{lmm}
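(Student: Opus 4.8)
The plan is to follow the variable-collapsing strategy of Lemma~\ref{lmm:2-3-reduction-main}, except that here one fresh variable will absorb an \emph{entire} left-hand side $X$ in a single step. Call a degree constraint $(X,Y,c)$ \emph{bad} if $|X| \ge 2$ and $c \ne 0$; every other constraint is already either simple ($|X|\le 1$) or a functional dependency ($c=0$), so it suffices to eliminate the bad constraints. Given a bad constraint $(X,Y,c)\in\mathrm{DC}$, I would introduce a fresh variable $z\notin[n]$, delete $(X,Y,c)$, and add in its place the \emph{simple} degree constraint $(\{z\},\,(Y\setminus X)\cup\{z\},\,c)$ together with the \emph{consistency constraints}: the functional dependency $(X,\,X\cup\{z\},\,0)$ and, for each $x\in X$, the \emph{simple} functional dependency $(\{z\},\,\{z,x\},\,0)$. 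Since none of the newly added constraints is bad, iterating this over the (at most $|\mathrm{DC}|$) original bad constraints terminates in polynomially many steps, producing an instance with at most $n+|\mathrm{DC}|$ variables and $O(n\,|\mathrm{DC}|)$ constraints, all of the required form.

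For correctness I would first establish a \emph{collapse claim}: for any polymatroid $g$ on $[n]\cup\{z\}$ satisfying the consistency constraints and any set $Z$, one has $g(Z\cup\{z\})=g(Z\cup X)=g(Z\cup X\cup\{z\})$. The argument mirrors Claim~\ref{claim:1-2-equal}: from $(\{z\},\{z,x\},0)$ and monotonicity, $g(\{z,x\})=g(\{z\})$ for every $x\in X$; a short induction on $|X|$ using submodularity upgrades this to $g(X\cup\{z\})=g(\{z\})$, and the FD $(X,X\cup\{z\},0)$ gives $g(X)=g(X\cup\{z\})$, so $g(X)=g(\{z\})$. Then the absorption fact proved inside Lemma~\ref{lmm:2-3-reduction-main} (if $g(A)=g(A\cup B)$ then $g(Z\cup A)=g(Z\cup A\cup B)$ for all $Z$), applied with $(A,B)=(\{z\},X)$ and with $(A,B)=(X,\{z\})$, yields the claim.

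Next I would prove $\opt=\opt'$ for a single replacement, where $\opt,\opt'$ denote the polymatroid bounds before and after. For $\opt\ge\opt'$: take $g$ achieving $\opt'$ for $\mathrm{DC'}$ and let $f$ be its restriction to $[n]$; then $f$ is a polymatroid, the collapse claim with $Z=Y\setminus X$ shows $f(Y)-f(X)=g((Y\setminus X)\cup\{z\})-g(\{z\})\le c$ so $f$ satisfies $(X,Y,c)$ (the remaining constraints of $\mathrm{DC}$ live entirely on $[n]$), and the collapse claim with $Z=[n]$ gives $f([n])=g([n])=g([n]\cup\{z\})=\opt'$. For $\opt\le\opt'$: given $f$ achieving $\opt$ for $\mathrm{DC}$, extend it by $g(A):=f(A)$ when $z\notin A$ and $g(A):=f\big((A\setminus\{z\})\cup X\big)$ when $z\in A$; one checks, exactly as in Lemma~\ref{lmm:2-3-reduction-main}, that $g$ is monotone and submodular (case analysis on membership of $z$ in each of $A$ and $B$), that $g$ satisfies $\mathrm{DC'}$ (the replaced constraint becomes $f(Y)-f(X)\le c$, and every consistency constraint becomes $f(X)-f(X)\le 0$), and that $g([n]\cup\{z\})=f([n]\cup X)=f([n])=\opt$, whence $\opt'\ge\opt$. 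Chaining the single-step equalities over all replacements establishes Lemma~\ref{lmm:simple-fd-reduction-main}.

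The main obstacle I anticipate is the submodularity check for $g$ in the asymmetric case $z\in A$, $z\notin B$: there one must combine $f$'s submodularity (applied to $(A\setminus\{z\})\cup X$ and $B$) with a monotonicity step to bound $f\big(((A\setminus\{z\})\cup X)\cap B\big)$ below by $f(A\cap B)$, and verify that the corresponding union term matches $g(A\cup B)$. This is precisely the delicate point already handled in Lemma~\ref{lmm:2-3-reduction-main}, so the analogous computation should go through; the only genuinely new element, the induction on $|X|$ in the collapse claim, is routine.
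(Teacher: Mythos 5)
Your proof is correct, and it takes a genuinely different route from the paper's (omitted) sketch. The paper proposes iterating the same pair-merging step used in Lemma~\ref{lmm:2-3-reduction-main}: repeatedly pick $x_1 \neq x_2 \in X$ for a bad constraint, collapse them into a fresh variable $x_{12}$, and add the two consistency FDs $(\{x_1,x_2\},\{x_1,x_2,x_{12}\},0)$ and $(\{x_{12}\},\{x_1,x_2,x_{12}\},0)$, until $|X|\le 1$ everywhere. You instead absorb \emph{all of} $X$ into a single fresh variable $z$ in one step, replacing $(X,Y,c)$ with $(\{z\},(Y\setminus X)\cup\{z\},c)$ and adding $(X,X\cup\{z\},0)$ together with $(\{z\},\{z,x\},0)$ for each $x\in X$. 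Both produce an instance consisting entirely of simple degree constraints and FDs. What your one-shot version buys: you need only one iteration per bad constraint, which removes the need for the potential-function termination argument and keeps the final variable count at $n+|\mathrm{DC}|$ rather than $n+\sum_{(X,Y,c)\text{ bad}}(|X|-1)$; the price is the short induction on $|X|$ inside the collapse claim (that $g(X\cup\{z\})=g(\{z\})$), which the pair-merging version gets for free from the two-variable special case already proved in Claim~\ref{claim:1-2-equal}. The equality-of-optima argument and the submodularity case analysis for the extension $g$ are structurally identical in either route, and you correctly identify the asymmetric case $z\in A$, $z\notin B$ as the only point needing care --- the calculation goes through with $A'=(A\setminus\{z\})\cup X$, since $A'\cup B=((A\cup B)\setminus\{z\})\cup X$ while $A'\cap B\supseteq A\cap B$ handles the intersection term by monotonicity.
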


The proof of
Lemma~\ref{lmm:simple-fd-reduction-main} is  similar to (and much simpler than) that of
Lemma~\ref{lmm:2-3-reduction-main} and thus is omitted. The high-level idea is as follows.
We repeat the following:
If there exist $x_1 \neq x_2 \in X$ for some degree constraint $(X, Y, c)$, replace the
constraint with $(X \setminus \{x_1, x_2 \} \cup \{x_{12}\}, Y \setminus \{x_1, x_2 \} \cup
\{x_{12}\}, c)$ and analogously update other constraints; and ii) For consistency, we add
functional dependencies $(\{x_1, x_2\}, \{x_1, x_2, x_{12}\}, 0)$ and $(\{x_{12}\}, \{x_1,
x_2, x_{12}\}, 0)$.

\section{Hardness of computing normal bounds}
\label{sec:hardness1}
\newcommand{\po}{\Delta}
\newcommand{\cS}{\mathcal{S}}
\newcommand{\lpn}{\textsf{LP}_{\textrm{normal}}}

This section sketches the proof that the normal bound $\mathrm{DC}[\Nor_n]$ cannot be solved
in polynomial time unless P = NP, i.e., Theorem~\ref{thm:normalhard}.

Recall that  {\em Normal} functions~~\cite{DBLP:conf/pods/KhamisK0S20,csma} (also called
{\em weighted coverage functions}, or {\em entropic functions with non-negative mutual
information}), are defined as follows. For every $V \subsetneq [n]$, a {\em step function}
$s_V : 2^{[n]} \to \R_+$ is defined by
\begin{align}
    s_V(X) &= \begin{cases}
        0 & X \subseteq V \\
        1 & \text{otherwise}
    \end{cases}
\end{align}
A function is {\em normal} if it is a non-negative linear combination of step functions.
Let $\Nor_n$ denote the set of normal functions on $[n]$.

To show the hardness result, we consider the dual linear programming formulation, which is
exactly $\Dsimple$. For easy reference we reproduce the LP below.
\begin{align*}
   && \min    & \qquad \sum_{i \in [k]} c_i \cdot \delta_i&& \label{eqn:DS} \\
         && \mbox{s.t.}  & \qquad \sum_{i \in [k]: \ X_i \subseteq V,  Y_i \not\subseteq V}
                        \delta_i \geq 1  & \forall V \subsetneq [n]   \nonumber
\end{align*}

Let $\po(\mathrm{DC})$ denote the convex region over $\delta$ defined by the constraints in
$\mathrm{DC}$. We first show the separation problem is hard.

\begin{lmm}
Given a degree constraint set $\mathrm{DC}$ and a    vector $\hat \delta \in \R_{\geq
0}^{|\mathrm{DC}|}$, checking if $\hat \delta \not\in \po(\mathrm{DC})$ is NP-complete.
Further, this remains the case under the extra condition that $\lambda \hat \delta  \in
\po(\mathrm{DC})$ for some $\lambda >1$.
\label{lmm:normal-separation}
\end{lmm}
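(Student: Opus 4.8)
The plan is to prove both halves of NP-completeness after rereading the membership test for $\po(\mathrm{DC})$ combinatorially. By definition, $\hat\delta \notin \po(\mathrm{DC})$ precisely when some $V \subsetneq [n]$ satisfies $\sum_{i \in [k]:\, X_i \subseteq V,\, Y_i \not\subseteq V} \hat\delta_i < 1$; write $\mathrm{cut}(V)$ for this sum. For membership in NP, such a $V$ is the certificate for ``$\hat\delta \notin \po(\mathrm{DC})$'': given $V$, one adds up the at most $k$ terms of $\mathrm{cut}(V)$ and compares the resulting rational number with $1$ in polynomial time. So the real content is NP-hardness together with the side condition.

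I would obtain hardness by a reduction from \textsc{Vertex Cover}: given a graph $G = (\mathcal V, E)$ with $m := |\mathcal V| \ge 3$ and at least one edge, and an integer $1 \le k \le m-1$ (the remaining parameter regimes of \textsc{Vertex Cover} being decidable directly in polynomial time), decide whether $\tau(G) \le k$, where $\tau(G)$ is the minimum size of a vertex cover. Put $[n] := \mathcal V$, and design $\mathrm{DC}$ and $\hat\delta$ so that $\mathrm{cut}(V)$ charges a strict subset $V$ according to its complement $\bar V := \mathcal V \setminus V$. I take three groups of degree constraints, all with right-hand side $0$ (the $c_i$ are irrelevant to $\po(\mathrm{DC})$): (i) for each edge $\{u,v\} \in E$, the constraint $(\{u,v\}, [n], 0)$ with $\hat\delta$-value $1$; since $[n] \not\subseteq V$ for every $V \subsetneq [n]$, this contributes exactly when $u,v \in V$, i.e.\ when $\bar V$ misses the edge $\{u,v\}$; (ii) for each vertex $v$, the cardinality constraint $(\emptyset, \{v\}, 0)$ with $\hat\delta$-value $w := \tfrac{1}{k+1}$, contributing when $v \notin V$, hence $w\,|\bar V|$ in aggregate; (iii) a single cardinality constraint $(\emptyset, [n], 0)$ with $\hat\delta$-value $\eta := \tfrac{1}{2(k+1)}$, which is active at \emph{every} $V \subsetneq [n]$ and thus adds $\eta$ to every $\mathrm{cut}(V)$. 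This is clearly a polynomial-time construction.

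Next I would verify the reduction with a constant margin. If $\tau(G) \le k$, pick a cover $C$ with $|C| \le k$ and set $V := \mathcal V \setminus C$; since $|C| \ge 1$ this $V$ is a proper subset, no edge constraint is active, and $\mathrm{cut}(V) = w|C| + \eta \le \tfrac{k}{k+1} + \tfrac{1}{2(k+1)} = \tfrac{2k+1}{2k+2} < 1$, so $\hat\delta \notin \po(\mathrm{DC})$. Conversely, if $\tau(G) > k$, take any $V \subsetneq [n]$: if $\bar V$ is a vertex cover then $|\bar V| \ge \tau(G) \ge k+1$, so group (ii) alone gives $w|\bar V| \ge 1$; otherwise some edge constraint from group (i) is active and contributes $1$; either way $\mathrm{cut}(V) \ge 1$, so $\hat\delta \in \po(\mathrm{DC})$. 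Hence $\hat\delta \notin \po(\mathrm{DC}) \iff \tau(G) \le k$, giving NP-hardness, and with NP membership NP-completeness. For the ``Further'' clause, group (iii) forces $\mathrm{cut}(V) \ge \eta > 0$ for all $V \subsetneq [n]$, so $\lambda\hat\delta \in \po(\mathrm{DC})$ with $\lambda := 1/\eta = 2(k+1) > 1$; and since $\eta < \tfrac{1}{k+1}$, group (iii) does not spoil the margin computed above, so every instance produced by the reduction satisfies the extra hypothesis.

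I expect the one delicate point to be exactly this encoding step --- choosing the gadget constraints so that the combinatorial minimum $\min_{V \subsetneq [n]} \mathrm{cut}(V)$ cleanly separates yes- and no-instances with a quantitative gap. That is what dictates the small bits of bookkeeping: using $Y_i = [n]$ to make edge constraints unconditionally ``live'', calibrating $w = \tfrac{1}{k+1}$ so the count of uncovered vertices crosses $1$ exactly at $|\bar V| = k+1$, and reserving room for the universal constraint of weight $\eta$. Everything else --- NP membership and verifying the side condition --- is routine.
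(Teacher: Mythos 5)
Your proof is correct and establishes exactly what the lemma claims. It takes a genuinely different, though closely related, route from the paper.

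The paper reduces from \textsc{Hitting Set}, adds an extra universe element $e^*$, and uses one heavy constraint $(\{e^*\},E')$ to rule out cuts containing $e^*$ and heavy set-constraints $(S_i,E')$ to rule out cuts absorbing a whole $S_i$; the margin needed for the ``Further'' clause is then obtained by scaling $\hat\delta$ by $\lambda=k+1$. You instead reduce from \textsc{Vertex Cover} (a special case of \textsc{Hitting Set}), dispense with the auxiliary element, encode the edges directly via $(\{u,v\},[n],0)$, and --- this is the key structural difference --- add a single ``universal'' constraint $(\emptyset,[n],0)$ with weight $\eta=\tfrac{1}{2(k+1)}$ that is active at every proper $V$. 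That one constraint immediately forces $\operatorname{cut}(V)\ge\eta>0$ for all $V$, so $\lambda\hat\delta\in\po(\mathrm{DC})$ for $\lambda=1/\eta$ without any further argument; you only have to double-check that this bonus $\eta$ does not spoil the forward direction, which your arithmetic $(2k+1)/(2k+2)<1$ handles. The case analysis in the converse direction (either $\bar V$ is a cover, giving $w|\bar V|\ge 1$ from the singletons, or some edge constraint contributes $1$) plays the same role as the paper's use of the heavy constraints, but avoids having to reason about the added vertex. Net effect: your reduction is slightly cleaner and the ``Further'' clause falls out almost for free; the paper's reduction is more general in that it goes through \textsc{Hitting Set} directly, but the two approaches are substantively equivalent. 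One minor wording caveat: be sure to state the full conclusion as a biconditional, $\hat\delta\notin\po(\mathrm{DC})\iff\tau(G)\le k$ --- your proof does establish both directions, but a reader should see both stated.
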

We prove this lemma using a reduction from the Hitting Set problem, which is
well-known to be NP-complete. In the Hitting Set problem, the input is a set of $n$
elements $E = \{e_1,\dots, e_n\}$, a collection $\cS = \{S_1, \dots, S_m\}$ of $m$
subsets of $E$, and an integer $k > 0$. The answer is true iff there exists a subset $L$
of $k$ elements such that for every set $S_i \in \cS$ is `hit' by the set $L$ chosen,
i.e., $L\cap S_i \neq \emptyset$ for all $i \in [m]$.

Consider an arbitrary instance $H$ to the Hitting Set.  To reduce the problem to the
membership problem w.r.t. $\po(\mathrm{DC})$, we create an instance for computing the normal
bound that has $E' = E\cup \{e^*\}$ as variables and the following set $\mathrm{DC}$ of
degree constraints and $\hat \delta$ (here we do not specify the value of $c$ associated
with each degree constraint $(X, Y)$ as it can be arbitrary and we're concerned with the
hardness of the membership test).
\begin{enumerate}
\item $(\emptyset, \{e_i\})$ for all $e_i \in E$ with $\hat \delta_{\emptyset, \{e_i\}} = 1/(k+1)$.
\item $(S_i, E')$ for all $S_i \in \cS$ with $\hat \delta_{S_i, E'} = m$.
\item $(\{e^*\}, E')$ with $\hat \delta_{\{e^*\}, E'} = m$.
\end{enumerate}
Notably, to keep the notation transparent, we used $\hat \delta_{X, Y}$ to denote the value
of $\hat \delta$ associated with $(X, Y)$. Let $\mathrm{DC}_1$, $\mathrm{DC}_2$, and
$\mathrm{DC}_3$ denote the degree constraints defined above in each line respectively, and
let $\mathrm{DC} := \mathrm{DC}_1 \cup \mathrm{DC}_2 \cup \mathrm{DC}_3$.  To establish the
reduction we aim to show the following proposition.

\begin{prop}
There exists a hitting set of size $k$ in the original instance $H$ if and only if $\hat
\delta \not \in \po(\mathrm{DC})$.
\end{prop}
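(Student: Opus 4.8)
The plan is to prove the Proposition by analyzing directly which sets $V$ can witness $\hat\delta\notin\po(\mathrm{DC})$, i.e., for which $V\subsetneq E'$ the strict inequality
\[
\sum_{(X,Y)\in\mathrm{DC}\,:\,X\subseteq V,\ Y\not\subseteq V}\hat\delta_{X,Y}\ <\ 1
\]
holds. First I would evaluate the left-hand side explicitly. Every constraint in $\mathrm{DC}_2\cup\mathrm{DC}_3$ has $Y=E'$, and since $V\subsetneq E'$ the condition $Y\not\subseteq V$ is automatic; so $(S_i,E')\in\mathrm{DC}_2$ contributes $m$ exactly when $S_i\subseteq V$, the constraint $(\{e^*\},E')\in\mathrm{DC}_3$ contributes $m$ exactly when $e^*\in V$, and $(\emptyset,\{e_i\})\in\mathrm{DC}_1$ contributes $\tfrac{1}{k+1}$ exactly when $e_i\notin V$. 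Hence the left-hand side equals
\[
\frac{|E\setminus V|}{k+1}\ +\ m\cdot\bigl|\{\,i:S_i\subseteq V\,\}\bigr|\ +\ m\cdot\mathbf{1}[e^*\in V].
\]

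Next I would prune the candidate witnesses, assuming WLOG $m\ge 1$. If $e^*\in V$ the sum is $\ge m\ge1$, and if $V\subseteq E$ with some $S_i\subseteq V$ the sum is again $\ge m\ge1$; neither case is a witness. Hence any witness $V$ satisfies $V\subseteq E$ and $S_i\not\subseteq V$ for all $i$, i.e., $L:=E\setminus V$ is a hitting set of $\cS$, and for such $V$ the sum is exactly $|L|/(k+1)$, which is $<1$ iff $|L|\le k$. Therefore $\hat\delta\notin\po(\mathrm{DC})$ iff $\cS$ admits a hitting set of size at most $k$; since (WLOG $k\le n$) a hitting set of size $\le k$ can be padded to one of size exactly $k$, this is equivalent to $H$ being a YES-instance, which is the Proposition. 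Combined with the obvious fact that a witnessing $V$ is a polynomial-size certificate that can be checked in polynomial time (so the membership test is in NP), this proves the NP-completeness claimed in Lemma~\ref{lmm:normal-separation}.

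For the ``Further'' clause I would take $\lambda=k+1>1$ and verify $\lambda\hat\delta\in\po(\mathrm{DC})$ via the displayed formula, now with the $\mathrm{DC}_1$-weights equal to $1$ and the $\mathrm{DC}_2,\mathrm{DC}_3$-weights equal to $m(k+1)$: if $e^*\in V$ the third term is $\ge m(k+1)\ge1$; if $V\subsetneq E$ the first term is $|E\setminus V|\ge1$; and if $V=E$ then every $S_i\subseteq V$, so the second term is $\ge m(k+1)\ge1$. I do not expect a genuinely hard step here---the reduction is transparent. The one place that needs care is keeping the case split on $V$ exhaustive, in particular noticing that $V=E$ is a legitimate proper subset of $E'=E\cup\{e^*\}$ (handled by the $\mathrm{DC}_2$-term), together with the routine bookkeeping that converts ``hitting set of size $\le k$'' into ``size exactly $k$'' under the standard WLOG assumptions ($m\ge1$, $k\le n$).
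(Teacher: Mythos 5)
Your proposal is correct and follows essentially the same route as the paper's proof: compute $L(V)=\sum_{(X,Y):X\subseteq V,\,Y\not\subseteq V}\hat\delta_{X,Y}$ explicitly for each $V\subsetneq E'$, observe that any witness $V$ with $L(V)<1$ must avoid $e^*$ and fail to contain any $S_i$ (so $E\setminus V$ is a hitting set of size $\le k$), and conversely build a witness from any hitting set of size $\le k$. Your version is a bit cleaner in spots---you make the case split on $V$ explicit, correctly note the WLOG assumptions $m\ge 1$ and $k\le n$, and handle the ``size $\le k$ versus size exactly $k$'' padding, which the paper elides---but the underlying argument is the same.
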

\begin{proof}
Let $L(V) := \sum_{(X, Y) \in \mathrm{DC}: X \subseteq V, V \not \subseteq Y} \hat
\delta_{Y|X}$. Let $L_{\min} := \min_{V\subset E'} L(V)$ and $V_{\min} := \arg \min_{V
\subset E'} L(V)$. To put the proposition in other words, we want to show that $H$ admits a
hitting set of size $k$ if and only if $L_{\min}  < 1$.

Let $\hat \delta(\mathrm{DC'}) := \sum_{(X, Y) \in \mathrm{DC'}} \hat \delta_{X, Y}$. Note
that $L_{\min} \leq L(\emptyset) = \hat \delta(\mathrm{DC}_1) = \frac{m}{k+1}$. Therefore,
we can have the following conclusions about $W_{\min}$.
\begin{itemize}
\item  $e^* \not \in V_{\min}$ since otherwise $L_{\min} \geq \hat \delta(\mathrm{DC})  =
m$.
\item For all $S_i \in \cS$, $S_i \not \subseteq V_{\min}$ since otherwise $L_{\min} \geq
\hat \delta(\{(S_i, E')\}) = m$.
\end{itemize}
Thus, we have shown that only the degree constraints in $\mathrm{DC}_1$ can contribute to
$L_{\min}$. As a result,
\[
L_{\min} = L(V_{\min}) = \hat \delta ( \{(\emptyset, \{e_i\}) \in \mathrm{DC}_1:  \{e_i\}
\not \subseteq V_{\min}\})  =   \frac{1}{k+1} |E \setminus V_{\min}|.
\]
As observed above,
for all $S_i \in \cS$, $S_i \not \subseteq V_{\min}$, which means $(E \setminus V_{\min})
\cap S_i \neq \emptyset$. This immediately implies that $E \setminus V_{\min}$ is a hitting
set.

To recap, if $\hat \delta \not \in \po(\mathrm{DC})$, we have $\frac{1}{k+1} |E \setminus
V_{\min}| < 1$ and therefore the original instance $H$ admits a hitting set $E \setminus
V_{\min}$ of size at most $k$.

Conversely, if the instance $H$ admits a hitting set $E'$ of size $k$, we can show that $L(E
\setminus E') = \hat \delta(\{(\emptyset, \{e_i\}) \in \mathrm{DC}_1 \; | \; e_i \in E'\}) =
\frac{k}{k+1} < 1$, which means  $\hat \delta \not \in \po(\mathrm{DC})$. This direction is
essentially identical and thus is omitted.
\end{proof}

The above proposition shows checking $\hat \delta \not \in \po(\mathrm{DC})$ is NP-hard. Further,
a violated constraint can be compactly represented by $V$; thus the problem is in NP.
Finally, if we scale up $\hat \delta$ by a factor of $\lambda = k+1$, we show $\lambda \hat
q \in \po(\mathrm{DC})$. We consider two cases. If $E \not \subseteq V$, we have $L(V) \geq
\hat \delta( \{ i \in [n] : e_i \not \in V \}) \geq \frac{1}{k+1} \lambda = 1$. If $E
\subseteq V$, it must be the case that $E = V$ since $V \neq E'$ and $E'= E \cup \{e^*\}$.
In this case $L(E) = \hat \delta(\mathrm{DC}_2) \geq m \lambda \geq 1$. Thus, for all $V
\subset E'$, we have $L(V) \geq 1$, meaning $\lambda \hat \delta \in \po(\mathrm{DC})$. This
completes the proof of Lemma~\ref{lmm:normal-separation}.

Using this lemma, we want to show that we can't solve $D'$ in polynomial time unless P =
NP. While there exist relationship among the optimization problem, membership problem and
their variants \cite{grotschel2012geometric}, in general hardness of the membership problem
doesn't necessarily imply hardness of the optimization problem. However, using the special
structure of the convex body in consideration, we can show such an implication in our
setting.
The following theorem would immediately imply Theorem~\ref{thm:normalhard}.

\begin{lmm}
We cannot solve $D'$ in polynomial time unless P = NP.
\end{lmm}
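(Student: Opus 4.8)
The aim is to show that a polynomial-time algorithm for $D'$ would put an NP-complete problem in $\p$, via a reduction from the separation problem of Lemma~\ref{lmm:normal-separation}. The starting point is the \emph{special structure} of the feasible region: for every degree-constraint set $\mathrm{DC}$, the region $\po(\mathrm{DC})$ is a covering polyhedron $\{\bm\delta\ge\bm 0 : A\bm\delta\ge\bm 1\}$ whose constraint matrix $A$ is a $0/1$ matrix (one row $\bm a_V$ per $V\subsetneq[n]$, with $(\bm a_V)_i=1$ iff $X_i\subseteq V$ and $Y_i\not\subseteq V$). Hence $\po(\mathrm{DC})$ is up-monotone, pointed, full-dimensional, and --- crucially --- a \emph{well-described polyhedron}: each facet-defining inequality is either $\bm a_V^{\top}\bm\delta\ge 1$ with $\norm{\bm a_V}_\infty\le 1$ or $\delta_i\ge 0$, so its facet complexity is $O(|\mathrm{DC}|)$, polynomial in the input. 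Moreover $D'$ minimizes a nonnegative-cost linear objective, so it is a bona fide (finitely valued) optimization over $\po(\mathrm{DC})$; and by plain LP duality its optimum equals $\mathrm{DC}[\Nor_n]$, since $D'=\Dsimple$ is the dual of the normal-bound LP $P_{\bm\delta}^{\sf simple}$ (the identification in Section~\ref{subsec:simple:normal} never used simplicity).

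Next I would invoke the equivalence between optimization and separation. Suppose $D'$ is solvable in polynomial time; then for the family $\{\po(\mathrm{DC})\}$ we have a polynomial-time \emph{strong optimization oracle}: given $\mathrm{DC}$ and any $\bm c\ge\bm 0$, return $\min\{\bm c^{\top}\bm\delta:\bm\delta\in\po(\mathrm{DC})\}$ together with an optimal vertex. Because $\po(\mathrm{DC})$ is a well-described polyhedron, the Gr\"otschel--Lov\'asz--Schrijver machinery~\cite{grotschel2012geometric} converts a strong optimization oracle into a strong \emph{separation} oracle in oracle-polynomial time. The delicate point is that $\po(\mathrm{DC})$ does not contain the origin (it is a covering polyhedron), so the ellipsoid computation underlying this direction must be anchored at a known feasible point; the promise of Lemma~\ref{lmm:normal-separation} supplies one --- e.g.\ $2(k{+}1)\hat\delta$, which is strictly feasible (all coordinates positive, and scaling up preserves membership by up-monotonicity) and has polynomially bounded encoding length. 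A standard rounding argument upgrades the weak oracles produced by the ellipsoid method to exact ones, using that all data are rationals of polynomial bit-length.

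Finally, I would run this separation oracle on the vector $\hat\delta$ of Lemma~\ref{lmm:normal-separation}: in polynomial time it either certifies $\hat\delta\in\po(\mathrm{DC})$ or returns a violated row $\bm a_V$, thereby deciding whether $\hat\delta\notin\po(\mathrm{DC})$ --- which by that lemma is NP-complete even under the stated promise. Hence a polynomial-time algorithm for $D'$ would imply $\p=\np$; since the optimum of $D'$ is the normal bound $\mathrm{DC}[\Nor_n]$, this proves the lemma and therefore Theorem~\ref{thm:normalhard}.

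The main obstacle is precisely the optimization-to-separation step. In general the hardness of membership/separation does not transfer to the hardness of optimization, and the argument succeeds here only because of the two structural facts assembled above: the bounded facet complexity (forced by the $0/1$, right-hand-side-$\bm 1$ form of $\Dsimple$) and the availability of a strictly feasible point of small encoding length (from the promise). A reader preferring not to cite the GLS equivalence as a black box can instead exploit up-monotonicity directly: $\hat\delta\in\po(\mathrm{DC})$ iff $\min_V \bm a_V^{\top}\hat\delta\ge 1$, and $\min_V \bm a_V^{\top}\hat\delta$ is the reciprocal of the largest $\mu$ for which $\mu\hat\delta$ lies on the boundary of $\po(\mathrm{DC})$; locating this boundary crossing from an optimization oracle is where all the work concentrates, and amounts essentially to the same ellipsoid-on-the-polar computation.
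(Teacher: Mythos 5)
Your proof is correct and reaches the same conclusion by the same underlying mechanism (the ellipsoid method converts a hypothetical polynomial-time $D'$-solver into a polynomial-time separation oracle for $\po(\mathrm{DC})$, contradicting Lemma~\ref{lmm:normal-separation}), but the execution differs. You invoke the Gr\"otschel--Lov\'asz--Schrijver equivalence as a black box, and you correctly single out the two structural facts that make it applicable here: $\po(\mathrm{DC})$ is a well-described $0/1$ covering polyhedron with facet complexity $O(|\mathrm{DC}|)$, and the promise in Lemma~\ref{lmm:normal-separation} furnishes a strictly interior point of polynomial encoding length (e.g.\ $2(k{+}1)\hat\delta$) to anchor the ellipsoid. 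The paper instead hand-rolls exactly this direction of GLS: it introduces the cone of strict separating normals $R := \{w \ge 0 : w\cdot(\delta - \hat\delta) > 0 \;\ \forall \delta \in \po(\mathrm{DC})\}$, shows $R \neq \emptyset$ iff $\hat\delta \notin \po(\mathrm{DC})$ (the nontrivial direction produces a facet normal from the promise $\lambda\hat\delta \in \po(\mathrm{DC})$), and then tests emptiness of $R$ by an auxiliary LP solved via ellipsoid, whose separation oracle is precisely a call to the hypothetical $D'$-solver. Both routes are sound; yours is shorter and makes the hypotheses of the black-box theorem explicit, while the paper's is self-contained and is concretely the ``ellipsoid-on-the-polar'' computation you allude to at the end. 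Your side observation that $D' = \Dsimple$ is the LP dual of the normal bound $\mathrm{DC}[\Nor_n]$ for \emph{arbitrary} degree constraints (the identification in Section~\ref{subsec:simple:normal} never uses simplicity) is also correct and is what the paper relies on implicitly to conclude Theorem~\ref{thm:normalhard}.
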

\begin{proof}
Consider an instance to the membership problem consisting of $\mathrm{DC}$ and $\hat
\delta$. By Lemma~\ref{lmm:normal-separation}, we know checking $\hat \delta \not \in
\po(\mathrm{DC})$ is NP-complete, even when $\lambda \hat \delta  \in  \po(\mathrm{DC})$ for
some $\lambda > 1$. For the sake of contradiction, suppose we can solve $D'_S$ in polynomial
time for any $w \geq 0$ over the constraints defined by the same $\po(\mathrm{DC})$. We will
draw a contradiction by showing how to exploit it to check $\hat \delta \not \in
\po(\mathrm{DC})$ in polynomial time.

Define $R := \{w \; | \; w \cdot (\delta - \hat \delta) > 0 \;\; \forall \delta \in
\po(G)\}$. 
It is straightforward to see that $R$ is convex.

We claim that $\hat \delta \not \in \po(\mathrm{DC})$ iff $R \neq \emptyset$. To show the
claim suppose $\hat \delta \not \in \po(\mathrm{DC})$. Recall from
Lemma~\ref{lmm:normal-separation} that there exists $\lambda > 1$ such that $\lambda \hat
\delta \in \po(\mathrm{DC})$. Let $\lambda' >0$ be the smallest $\lambda''$ such that
$\lambda'' \hat \delta \in \po(\mathrm{DC})$. Observe that $\lambda' > 1$ and $\lambda' \hat
\delta$ lies on a facet of $\po(\mathrm{DC})$, which corresponds to a hyperplane $\sum_{i
\in [k]: X_i \subseteq V, V\not \subseteq Y_i} \delta_i = 1$ for some $V \subset [n]$. Let
$w$ be the orthogonal binary vector of the hyperplane; so we have $w \cdot \lambda' \hat
\delta = 1$. Then, $w \cdot (\delta - \lambda' \hat \delta) \geq 0$ for all $\delta \in
\po(\mathrm{DC})$. Thus, for any $\delta \in \po(\mathrm{DC})$ we have $w \cdot  (\delta -
\hat \delta) \geq (\lambda'  -1) w \cdot  \hat \delta = \frac{\lambda' - 1}{\lambda} w \cdot
\lambda \hat \delta \geq \frac{\lambda' - 1}{\lambda} > 0$. The other direction is trivial
to show: If $\hat \delta \in \po(\mathrm{DC})$, no $w$ satisfies $w \cdot (\delta - \hat
\delta) > 0$ when $\delta = \hat \delta$.

Thanks to the claim, we can draw a contradiction if we can test if $R = \emptyset$ in
polynomial time. However, $R$ is defined on an open set which is difficult to handle.
Technically, $R$ is defined by infinitely many constraints but it is easy to see that we
only need to consider constraints for $\delta$ that are vertices of $\po(\mathrm{DC})$.
Further, $\po(\mathrm{DC})$ is defined by a finite number of (more exactly at most $2^n$)
constraints (one for each $W$). This implies that the following LP,
\begin{align*}
    \max & \;\epsilon \\
    w \cdot (\delta - \hat \delta) &\geq  \epsilon \quad  \forall \delta \in \po(\mathrm{DC}) \\
    w &\geq 0
\end{align*}
has a strictly positive optimum value iff $R \neq \emptyset$. We solve this using the
ellipsoid method. Here, the separation oracle is, given $w \geq 0$ and $\epsilon$, to
determine if $w \cdot (\delta - \hat \delta) \geq \epsilon$ for all $\delta \in
\po(\mathrm{DC})$; otherwise it should find a $\delta \in \po(\mathrm{DC})$ such that $w
\cdot (\delta - \hat \delta) < \epsilon$. In other words, we want to know $\min_{\delta \in
\po(\mathrm{DC})} w \cdot (\delta - \hat \delta)$.  If the value is no smaller than
$\epsilon$, all constraints are satisfied, otherwise, we can find a violated constraint,
which is given by the $\delta$ minimizing the value. But, because the oracle assumes $w
\cdot \hat \delta$ is fixed, so this optimization is essentially the same as solving  $D'$,
which can be solved by the hypothetical polynomial time algorithm we assumed to have for the
sake of contradiction. Thus, we have shown that we can decide in poly time if $R$ is empty
or not.
\end{proof}

\section{The flow bound}
\label{sec:flow:bound}
Before proving Theorem~\ref{thm:flow:bound}, we give a high-level overview.
The flow bound $ \flowbound(\dc, \pi)$ is based on a relaxation $\dc_\pi^{\sf flow}$  of the
input degree constraints $\dc$ that is less relaxed than the relaxation $\dc_\pi$ used in
the chain bound. In particular, every {\em simple} degree constraint in $\dc_\pi$ is
retained in $\dc_\pi^{\sf flow}$ as is.
Let $k_s$ denote the number of simple degree constraints.
And, for every non-simple degree constraint $(X,Y,c) \in \dc$,
$\dc_\pi^{\sf flow}$ contains the constraint $(X,Y',c)$ where $Y'\subseteq Y$ contains the
variables in $Y$ that come {\em after} all variables in $X$ in the permutation $\pi$ (note
that this is the same as $\dc_\pi$).
For convenience we reindex the degree constraints in
$\dc_\pi^{\sf flow}$ such that all $k_s$ simple degree constraints appear before any
non-simple degree constraints.

The flow bound $\flowbound(\dc,\pi)$ is then defined by the objective value of
the following polynomial-sized linear program:
\begin{align}
  D_{\sf flow}:
        &&\min \sum_{i \in [k]} c_i \cdot \delta_i  \\
  \textrm{s.t.}
        &&f_{i,t} &\leq \delta_i && \forall i \in [k_s], \forall t \in [n] \\
        &&\flow_t(t) &\geq 1 - \sum_{\substack{i\in [k]\setminus [k_s], \\ t \in Y_i - X_i}} \delta_i &&\forall t \in [n] & \label{eqn-less-excess} \\
        &&\flow_t(\emptyset) &\geq -1  && \forall t \in [n] \\
        &&\flow_t(Z) &\geq 0  && \forall Z   \in  G \setminus \{\emptyset\} \setminus \{t\}, \quad \forall t \in [n] \\
        && f_{i,t}&= 0  &&\forall i \in [k_s], t \in [n] \cap(Y_i \setminus X_i) &
\end{align}
where $\flow_t(Z)$ is  defined as follows:
\begin{align*}
   \flow_t(Z) &:=
     \sum_{i: Z = Y_i} f_{i,t} -
   \sum_{i: Z=X_i }f_{i,t} +
   \sum_{X: X\subset Z}\mu_{X,Z, t}+ \sum_{Y: Z\subset Y}\mu_{Z,Y, t}
\end{align*}
As in the dual linear program   $\Dflow$ for simple degree constraints, intuitively  $D_{\sf
flow}$ encodes   $n$ min-cost flow problems, however the difference is in the
constraint~\eqref{eqn-less-excess} that in  $D_{\sf flow}$  the demand of node  $t$ is
reduced (from 1) by an amount equal to the capacity of the non-simple degree constraints
that can route flow directly to $t$.
We additionally note that the objective only considers cardinality constraints, simple
degree constraints and non-simple constraints that agree with $\pi$.  Finally, flow is only
sent on simple degree constraints and cardinality constraints.

Note that one could modify the linear program for $\flowbound_\pi(\dc)$ by allowing the
``source'' for the flow to sink $t$ to not only be the empty set, but also any singleton
vertex in $[t-1]$. All of theoretical results would still hold for this modified linear
program, but this modified linear program would be better in practice as it would never
result in a worse bound, and for some instances it would result in a significantly better
bound. Further note that, by our reduction in Section~\ref{sec:lower-bound}, the
computing $\dc_\pi^{\sf flow}[\Gamma_n]$, the polymatroid bound on our relaxed degree
constraints $\dc_\pi^{\sf flow}$, is as hard as computing the polymatroid bound on arbitrary
instances.

\begin{proof}[Proof of Theorem~\ref{thm:flow:bound}]
  For part $(a)$, To show that $\dc[\Gamma_n] \le \flowbound(\dc, \pi)$ it is sufficient to
  show that a solution to the linear program for $\flowbound_\pi(\dc)$ can be extended to a
  feasible flow for linear program $D$ as we did for simple degree constraints in
  section~\ref{subsec:simple}.  The only difference  here is that  some flow can be directly
  pushed to a sink $t$ on nonsimple edges in $\dc_t$. The fact that the flow bound is smaller
  than the chain bound follows immediately from the fact that the degree constraints used in
  the chain bound are a subset of the degree constraints used in the flow bound.

  Statement $(b)$ follows because for simple instances the linear program for $\flowbound_\pi$
  and $D_S'$ are identical, and for acyclic instances the polymatroid bound equals the chain
  bound~\cite{DBLP:conf/pods/000118}.

  Finally, we prove part $(c)$, stating the chain bound can be arbitrarily larger than the
  flow bound follows from the following instance.  Consider an instance consisting of two
  elements $1$ and $2$ and let the permutation $\pi$ follow this order.  There is a
  cardinality constraint $(\emptyset, \{2\},1)$ and simple degree constraint $(\{2\}, \{1,2\},
  1)$. The chain bound   is unbounded because it cannot use the simple degree constraint that
  does not agree with $\pi$. Alternatively, the flow bound is bounded by using both degree
  constraints.
\end{proof}

\section{Concluding Remarks}

Our main contributions are  polynomial-time algorithms to compute the polymatroid bound and
polynomial length proof sequences for simple degree constraints. These results nudge the
information theoretic framework from~\cite{csma,panda} towards greater practicality. In
fact, our technique and the flow-bound from Section~\ref{sec:flow:bound} were adopted in the
recent work of Zhang et al.~\cite{DBLP:journals/corr/abs-2502-05912} to make part of their
cardinality estimation framework practical.

The main major open problem  remains determining the computational complexity of the
polymatroid bound. While we proved some negative results regarding the hardness of computing
the polymatroid bound beyond simple degree constraints, we should still be looking for other
ways to parameterize the input so that the polymatroid bound can be computed in polynomial
time.

\newcommand{\grantsponsor}[3]{\textsc{#1}}
\newcommand{\grantnum}[3]{\textsc{#3}}

\section*{Acknowledgments}
  Sungjin Im was supported in part by  \grantsponsor{1}{NSF}{https://www.nsf.gov/} grants
  CCF-\grantnum{https://www.nsf.gov/awardsearch/showAward?AWD_ID=2423106}{}{2423106},
  CCF-\grantnum{https://www.nsf.gov/awardsearch/showAward?AWD_ID=2121745}{}{2121745}, and
  CCF-\grantnum{https://www.nsf.gov/awardsearch/showAward?AWD_ID=1844939}{}{1844939}, and
  \grantsponsor{2}{Office of Naval Research}{https://www.onr.navy.mil/} Award
  \grantnum{}{}{N00014-22-1-2701}. Benjamin Moseley was supported in part by a Google
  Research Award, an Infor Research Award, a Carnegie Bosch Junior Faculty Chair,
  \grantsponsor{1}{NSF}{https://www.nsf.gov/} grants
  CCF-\grantnum{https://www.nsf.gov/awardsearch/showAward?AWD_ID=2121744}{}{2121744} and
  CCF-\grantnum{https://www.nsf.gov/awardsearch/showAward?AWD_ID=1845146}{}{1845146}, and
  \grantsponsor{2}{Office of Naval Research}{https://www.onr.navy.mil/} Award
  \grantnum{}{}{N00014-22-1-2702}. Kirk Pruhs was supported in part by the
  \grantsponsor{1}{NSF}{https://www.nsf.gov/} grants
  CCF-\grantnum{https://www.nsf.gov/awardsearch/showAward?AWD_ID=2209654}{}{2209654} and
  CCF-\grantnum{https://www.nsf.gov/awardsearch/showAward?AWD_ID=1907673}{}{1907673},  and
  an IBM Faculty Award. Part of this work was conducted while the authors participated in
  the Fall 2023 Simons Program on Logic and Algorithms in Databases and AI.

\bibliographystyle{siam}
\bibliography{main}

\begin{thebibliography}{10}

\bibitem{DBLP:conf/pods/KhamisK0S20}
{\sc M.~{Abo Khamis}, P.~G. Kolaitis, H.~Q. Ngo, and D.~Suciu}, {\em Bag query
  containment and information theory}, in Proceedings of the 39th {ACM}
  {SIGMOD-SIGACT-SIGAI} Symposium on Principles of Database Systems, {PODS}
  2020, Portland, OR, USA, June 14-19, 2020, D.~Suciu, Y.~Tao, and Z.~Wei,
  eds., {ACM}, 2020, pp.~95--112.

\bibitem{csma}
{\sc M.~{Abo Khamis}, H.~Q. Ngo, and D.~Suciu}, {\em Computing join queries
  with functional dependencies}, in Proceedings of the 35th {ACM}
  {SIGMOD-SIGACT-SIGAI} Symposium on Principles of Database Systems, {PODS}
  2016, San Francisco, CA, USA, June 26 - July 01, 2016, T.~Milo and W.~Tan,
  eds., {ACM}, 2016, pp.~327--342.

\bibitem{panda}
\leavevmode\vrule height 2pt depth -1.6pt width 23pt, {\em What do shannon-type
  inequalities, submodular width, and disjunctive datalog have to do with one
  another?}, in Proceedings of the 36th {ACM} {SIGMOD-SIGACT-SIGAI} Symposium
  on Principles of Database Systems, {PODS} 2017, Chicago, IL, USA, May 14-19,
  2017, E.~Sallinger, J.~V. den Bussche, and F.~Geerts, eds., {ACM}, 2017,
  pp.~429--444.

\bibitem{MR599482}
{\sc N.~Alon}, {\em On the number of subgraphs of prescribed type of graphs
  with a given number of edges}, Israel J. Math., 38 (1981), pp.~116--130.

\bibitem{AGM}
{\sc A.~Atserias, M.~Grohe, and D.~Marx}, {\em Size bounds and query plans for
  relational joins}, in FOCS, IEEE Computer Society, 2008, pp.~739--748.

\bibitem{MR1338683}
{\sc B.~Bollob\'as and A.~Thomason}, {\em Projections of bodies and hereditary
  properties of hypergraphs}, Bull. London Math. Soc., 27 (1995), pp.~417--424.

\bibitem{DBLP:conf/sigmod/CaiBS19}
{\sc W.~Cai, M.~Balazinska, and D.~Suciu}, {\em Pessimistic cardinality
  estimation: Tighter upper bounds for intermediate join cardinalities}, in
  Proceedings of the 2019 International Conference on Management of Data,
  {SIGMOD} Conference 2019, Amsterdam, The Netherlands, June 30 - July 5, 2019,
  P.~A. Boncz, S.~Manegold, A.~Ailamaki, A.~Deshpande, and T.~Kraska, eds.,
  {ACM}, 2019, pp.~18--35.

\bibitem{MR859293}
{\sc F.~R.~K. Chung, R.~L. Graham, P.~Frankl, and J.~B. Shearer}, {\em Some
  intersection theorems for ordered sets and graphs}, J. Combin. Theory Ser. A,
  43 (1986), pp.~23--37.

\bibitem{DBLP:conf/icdt/DeedsSBC23}
{\sc K.~Deeds, D.~Suciu, M.~Balazinska, and W.~Cai}, {\em Degree sequence bound
  for join cardinality estimation}, in 26th International Conference on
  Database Theory, {ICDT} 2023, March 28-31, 2023, Ioannina, Greece, F.~Geerts
  and B.~Vandevoort, eds., vol.~255 of LIPIcs, Schloss Dagstuhl -
  Leibniz-Zentrum f{\"{u}}r Informatik, 2023, pp.~8:1--8:18.

\bibitem{DBLP:journals/pacmmod/DeedsSB23}
{\sc K.~B. Deeds, D.~Suciu, and M.~Balazinska}, {\em Safebound: {A} practical
  system for generating cardinality bounds}, Proc. {ACM} Manag. Data, 1 (2023),
  pp.~53:1--53:26.

\bibitem{MR2104047}
{\sc E.~Friedgut}, {\em Hypergraphs, entropy, and inequalities}, Amer. Math.
  Monthly, 111 (2004), pp.~749--760.

\bibitem{MR1639767}
{\sc E.~Friedgut and J.~Kahn}, {\em On the number of copies of one hypergraph
  in another}, Israel J. Math., 105 (1998), pp.~251--256.

\bibitem{DBLP:conf/icdt/GogaczT17}
{\sc T.~Gogacz and S.~Torunczyk}, {\em Entropy bounds for conjunctive queries
  with functional dependencies}, in 20th International Conference on Database
  Theory, {ICDT} 2017, March 21-24, 2017, Venice, Italy, M.~Benedikt and
  G.~Orsi, eds., vol.~68 of LIPIcs, Schloss Dagstuhl - Leibniz-Zentrum
  f{\"{u}}r Informatik, 2017, pp.~15:1--15:17.

\bibitem{GLVV}
{\sc G.~Gottlob, S.~T. Lee, G.~Valiant, and P.~Valiant}, {\em Size and
  treewidth bounds for conjunctive queries}, J. {ACM}, 59 (2012), p.~16.

\bibitem{DBLP:journals/talg/GroheM14}
{\sc M.~Grohe and D.~Marx}, {\em Constraint solving via fractional edge
  covers}, {ACM} Transactions on Algorithms, 11 (2014), p.~4.

\bibitem{grotschel2012geometric}
{\sc M.~Gr{\"o}tschel, L.~Lov{\'a}sz, and A.~Schrijver}, {\em Geometric
  algorithms and combinatorial optimization}, vol.~2, Springer Science \&
  Business Media, 2012.

\bibitem{DBLP:journals/sigmod/KhamisDOS24}
{\sc M.~A. Khamis, K.~Deeds, D.~Olteanu, and D.~Suciu}, {\em Pessimistic
  cardinality estimation}, {SIGMOD} Rec., 53 (2024), pp.~1--17.

\bibitem{DBLP:journals/pacmmod/KhamisNOS24}
{\sc M.~A. Khamis, V.~Nakos, D.~Olteanu, and D.~Suciu}, {\em Join size bounds
  using l\({}_{\mbox{p}}\)-norms on degree sequences}, Proc. {ACM} Manag. Data,
  2 (2024), p.~96.

\bibitem{DBLP:journals/corr/abs-2402-02001}
{\sc M.~A. Khamis, H.~Q. Ngo, and D.~Suciu}, {\em {PANDA:} query evaluation in
  submodular width}, CoRR, abs/2402.02001 (2024).

\bibitem{DBLP:books/cu/p/0001G14}
{\sc A.~Krause and D.~Golovin}, {\em Submodular function maximization}, in
  Tractability: Practical Approaches to Hard Problems, L.~Bordeaux, Y.~Hamadi,
  and P.~Kohli, eds., Cambridge University Press, 2014, pp.~71--104.

\bibitem{DBLP:journals/pvldb/LeisGMBK015}
{\sc V.~Leis, A.~Gubichev, A.~Mirchev, P.~A. Boncz, A.~Kemper, and T.~Neumann},
  {\em How good are query optimizers, really?}, Proc. {VLDB} Endow., 9 (2015),
  pp.~204--215.

\bibitem{looking-glass}
{\sc V.~Leis, B.~Radke, A.~Gubichev, A.~Mirchev, P.~A. Boncz, A.~Kemper, and
  T.~Neumann}, {\em Query optimization through the looking glass, and what we
  found running the join order benchmark}, {VLDB} J., 27 (2018), pp.~643--668.

\bibitem{lohman}
{\sc G.~Lohman}, {\em Is query optimization a solved problem?}, 2014.
\newblock \url{http://wp.sigmod.org/?p=1075}.

\bibitem{MR0031538}
{\sc L.~H. Loomis and H.~Whitney}, {\em An inequality related to the
  isoperimetric inequality}, Bull. Amer. Math. Soc, 55 (1949), pp.~961--962.

\bibitem{MR717403}
{\sc L.~Lov\'{a}sz}, {\em Submodular functions and convexity}, in Mathematical
  programming: the state of the art ({B}onn, 1982), Springer, Berlin, 1983,
  pp.~235--257.

\bibitem{DBLP:conf/pods/000118}
{\sc H.~Q. Ngo}, {\em Worst-case optimal join algorithms: Techniques, results,
  and open problems}, in Proceedings of the 37th {ACM} {SIGMOD-SIGACT-SIGAI}
  Symposium on Principles of Database Systems, Houston, TX, USA, June 10-15,
  2018, J.~V. den Bussche and M.~Arenas, eds., {ACM}, 2018, pp.~111--124.

\bibitem{DBLP:conf/icdt/000122}
\leavevmode\vrule height 2pt depth -1.6pt width 23pt, {\em On an information
  theoretic approach to cardinality estimation (invited talk)}, in 25th
  International Conference on Database Theory, {ICDT} 2022, March 29 to April
  1, 2022, Edinburgh, {UK} (Virtual Conference), D.~Olteanu and N.~Vortmeier,
  eds., vol.~220 of LIPIcs, Schloss Dagstuhl - Leibniz-Zentrum f{\"{u}}r
  Informatik, 2022, pp.~1:1--1:21.

\bibitem{NPRR}
{\sc H.~Q. Ngo, E.~Porat, C.~R{\'e}, and A.~Rudra}, {\em Worst-case optimal
  join algorithms: [extended abstract]}, in PODS, 2012, pp.~37--48.

\bibitem{skew}
{\sc H.~Q. Ngo, C.~R{\'{e}}, and A.~Rudra}, {\em Skew strikes back: new
  developments in the theory of join algorithms}, {SIGMOD} Record, 42 (2013),
  pp.~5--16.

\bibitem{suciu2023applications}
{\sc D.~Suciu}, {\em Applications of information inequalities to database
  theory problems}, 2023.

\bibitem{LFTJ}
{\sc T.~L. Veldhuizen}, {\em Triejoin: {A} simple, worst-case optimal join
  algorithm}, in Proc. 17th International Conference on Database Theory (ICDT),
  Athens, Greece, March 24-28, 2014., N.~Schweikardt, V.~Christophides, and
  V.~Leroy, eds., OpenProceedings.org, 2014, pp.~96--106.

\bibitem{Yeung:2008:ITN:1457455}
{\sc R.~W. Yeung}, {\em Information Theory and Network Coding}, Springer
  Publishing Company, Incorporated, 1~ed., 2008.

\bibitem{DBLP:journals/corr/abs-2502-05912}
{\sc H.~Zhang, C.~Mayer, M.~A. Khamis, D.~Olteanu, and D.~Suciu}, {\em Lpbound:
  Pessimistic cardinality estimation using $\ell_p$-norms of degree sequences},
  SIGMOD,  (2025).

\end{thebibliography}

\end{document}